\DeclareMathAlphabet\mathbfcal{OMS}{cmsy}{b}{n}
\tikzset{snake it/.style={decorate, decoration=snake}}
\tikzset{
    >=stealth',
    punkt/.style={
           rectangle,
           rounded corners,
           draw=black, very thick,
           text width=6.5em,
           minimum height=2em,
           text centered},
    pil/.style={
           ->,
           thick,
           shorten <=2pt,
           shorten >=2pt,},
  on each segment/.style={
    decorate,
    decoration={
      show path construction,
      moveto code={},
      lineto code={
        \path [#1]
        (\tikzinputsegmentfirst) -- (\tikzinputsegmentlast);
      },
      curveto code={
        \path [#1] (\tikzinputsegmentfirst)
        .. controls
        (\tikzinputsegmentsupporta) and (\tikzinputsegmentsupportb)
        ..
        (\tikzinputsegmentlast);
      },
      closepath code={
        \path [#1]
        (\tikzinputsegmentfirst) -- (\tikzinputsegmentlast);
      },
    },
  },
  mid arrow/.style={postaction={decorate,decoration={
        markings,
        mark=at position .5 with {\arrow[#1]{stealth'}}
      }}}
}
\mathchardef\mhyphen="2D
\newcommand{\PClass}{\mathsf{P}}
\newcommand{\LClass}{\mathsf{L}}
\newcommand{\coNP}{\mathsf{coNP}}
\newcommand{\IP}{\mathsf{IP}}
\newcommand{\PP}{\mathsf{PP}}
\newcommand{\AM}{\mathsf{AM}}
\newcommand{\QAM}{\mathsf{QAM}}
\newcommand{\QIP}{\mathsf{QIP}}
\newcommand{\QPP}{\mathsf{QPP}}
\newcommand{\QNP}{\mathsf{QNP}}
\newcommand{\coQNP}{\mathsf{coQNP}}
\newcommand{\HVSZK}{\mathsf{HVSZK}}
\newcommand{\HVQSZK}{\mathsf{HVQSZK}}
\newcommand{\CDS}{\textnormal{CDS}}
\newcommand{\CDQS}{\textnormal{CDQS}}
\newcommand{\pp}{\textnormal{pp}}
\newcommand{\pc}{\textnormal{pc}}
\newcommand{\cc}{\textit{cc}}
\newtheorem{theorem}{Theorem}
\newtheorem{definition}[theorem]{Definition}
\newtheorem{lemma}[theorem]{Lemma}
\newtheorem{proposition}[theorem]{Proposition}
\newtheorem{remark}[theorem]{Remark}
\newenvironment{proof}[1][Proof]{\noindent\textbf{#1.}}{\ \rule{0.5em}{0.5em}}
\begin{document}

\title{Conditional disclosure of secrets with quantum resources}

\author[2]{Vahid R. Asadi}
\orcid{0000-0001-8354-7463}

\author[1,2]{Kohdai Kuroiwa}
\orcid{0000-0001-5555-1940}

\author[1,2]{Debbie Leung}
\orcid{}

\author[1,2]{Alex May}
\email{amay@perimeterinstitute.ca}
\orcid{0000-0002-4030-5410}

\author[1]{Sabrina Pasterski}
\orcid{0000-0003-3672-4169}

\author[1]{Chris Waddell}
\orcid{}

\affiliation[1]{Perimeter Institute for Theoretical Physics}
\affiliation[2]{Institute for Quantum Computing, Waterloo, Ontario}

\abstract{
The conditional disclosure of secrets (CDS) primitive is among the simplest cryptographic settings in which to study the relationship between communication, randomness, and security. 
CDS involves two parties, Alice and Bob, who do not communicate but who wish to reveal a secret $z$ to a referee if and only if a Boolean function $f$ has $f(x,y)=1$. 
Alice knows $x,z$, Bob knows $y$, and the referee knows $x,y$.
Recently, a quantum analogue of this primitive called CDQS was defined and related to $f$-routing, a task studied in the context of quantum position-verification. 
CDQS has the same inputs, outputs, and communication pattern as CDS but allows the use of shared entanglement and quantum messages.
We initiate the systematic study of CDQS, with the aim of better understanding the relationship between privacy and quantum resources in the information theoretic setting. 
We begin by looking for quantum analogues of results already established in the classical CDS literature. 
Concretely we establish the following.
\begin{itemize}
    \item \textbf{Closure:} Given a CDQS protocol for a function $f$, we construct CDQS protocols for the negation $\neg f$ of similar efficiency. Further, given CDQS protocols for functions $f_1$, $f_2$, we construct CDQS protocols to implement $f_1\wedge f_2$ and $f_1\vee f_2$ with cost not much larger than the sum of the costs for $f_1$ and $f_2$. 
    \item \textbf{Amplification:} Given a CDQS protocol with single qubit secrets and constant privacy and correctness errors, we construct CDQS schemes with $k$ qubit secrets and privacy and correctness errors of size $O(2^{-k})$, and whose communication and entanglement costs are increased by a factor of $k$.  
    \item \textbf{Lower bounds from $Q^*_{A\rightarrow B}(f)$:} We show that the quantum communication cost of a CDQS protocol for $f$ is lower bounded by the log of the one-way quantum communication cost with shared entanglement, $\CDQS(f)=\Omega( \log Q^*_{A\rightarrow B}(f))$.
    \item \textbf{Lower bounds from $\PP^{\cc}$:} Considering CDQS with perfect privacy, we lower bound the entanglement plus communication cost of CDQS linearly in terms of $\PP^{\cc}(f)$, the classical communication complexity of computing $f$ with unbounded error. 
    \item \textbf{Lower bounds from $\QIP[2]^{\cc}$:} Allowing constant privacy and correctness errors, we lower bound the communication cost of CDQS in terms of $\QIP[2]^{\cc}(f)$, the cost of a two message quantum interactive proof for the function $f$ in the communication complexity setting. 
    \item \textbf{Lower bounds from $\HVQSZK^{\cc}$:} Closely related to the above, we show that a similar lower bound on CDQS from the communication complexity of an honest verifier quantum statistical knowledge proof for $f$ lower bounds CDQS, up to logarithmic factors. 
\end{itemize}
Because of the close relationship to the $f$-routing position-verification scheme, our results have relevance to the security of these schemes. 
}
\maketitle

\pagebreak

\tableofcontents

\flushbottom

\section{Introduction}

In this article we study the conditional disclosure of secrets (CDS) primitive in a quantum setting. 
CDS involves three parties, Alice, Bob and a referee. 
Alice holds input $x\in \{0,1\}^n$, Bob holds $y\in\{0,1\}^n$, and the referee knows both $x$ and $y$. 
Additionally, Alice holds a secret string $z$, or (in a quantum context) a quantum system $Q$.  
Alice and Bob cannot communicate with one another, but can each send a message to the referee.
Given a Boolean function $f(x,y)$, the goal is for Alice and Bob's message to reveal the secret if and only if $f(x,y)=1$. 
In the classical setting, Alice and Bob's message consists of bits and they share a random string; in the quantum setting, they can send qubits and share entanglement. 
The CDS primitive is illustrated in \cref{fig:CDSandCDQS}. 

Classically, CDS has a number of applications in other aspects of cryptography: it was first studied and defined in the context of private information retrieval \cite{GERTNER2000592}, and has been applied in the context of attribute based encryption \cite{gay2015communication} and secret sharing \cite{applebaum2020power}. 
CDS also shares a number of connections to communication complexity \cite{applebaum2021placing}, and to another primitive known as private simultaneous message passing (PSM) \cite{feige1994minimal}. 
In particular, in both the classical and quantum settings lower bounds on CDS give lower bounds on PSM \cite{allerstorfer2024relating}. 
Perhaps most importantly, CDS is among the simplest settings in which we can study the relationship between privacy, communication, and randomness.

\begin{figure*}
    \centering
    \begin{subfigure}{0.45\textwidth}
    \centering
    \begin{tikzpicture}[scale=0.4]
    
    \draw[thick] (-5,-5) -- (-5,-3) -- (-3,-3) -- (-3,-5) -- (-5,-5);
    
    \draw[thick] (5,-5) -- (5,-3) -- (3,-3) -- (3,-5) -- (5,-5);
    
    \draw[thick] (5,5) -- (5,3) -- (3,3) -- (3,5) -- (5,5);
    
    \draw[thick, mid arrow] (4,-3) -- (4.5,3);
    
    \draw[thick, mid arrow] (-4,-3) to [out=90,in=-90] (3.5,3);
    
    \draw[thick,dashed] (-3.5,-5.5) -- (3.5,-5.5);
    \draw[black] plot [mark=*, mark size=3] coordinates{(-3.5,-5.5)};
    \draw[black] plot [mark=*, mark size=3] coordinates{(3.5,-5.5)};
    \node[below] at (0,-5.5) {$r$};
    
    \draw[thick] (-4.5,-6) -- (-4.5,-5);
    \node[below] at (-4.5,-6) {$x,z$};
    
    \draw[thick] (4.5,-6) -- (4.5,-5);
    \node[below] at (4.5,-6) {$y$};

    \node[left] at (0,1) {$m_0$};
    \node[right] at (4.5,0) {$m_1$};
    
    \draw[thick] (4,5) -- (4,6);
    \node[above] at (4,6) {$z$ iff $f(x,y)=1$};
    
    \end{tikzpicture}
    \caption{}
    \label{fig:CDQS}
    \end{subfigure}
    \hfill
    \begin{subfigure}{0.45\textwidth}
    \centering
    \begin{tikzpicture}[scale=0.4]
    
    \draw[thick] (-5,-5) -- (-5,-3) -- (-3,-3) -- (-3,-5) -- (-5,-5);
    
    \draw[thick] (5,-5) -- (5,-3) -- (3,-3) -- (3,-5) -- (5,-5);
    
    \draw[thick] (5,5) -- (5,3) -- (3,3) -- (3,5) -- (5,5);
    
    \draw[thick, mid arrow] (4,-3) -- (4.5,3);
    
    \draw[thick, mid arrow] (-4,-3) to [out=90,in=-90] (3.5,3);
    
    \draw[thick] (-3.5,-5) to [out=-90,in=-90] (3.5,-5);
    \draw[black] plot [mark=*, mark size=3] coordinates{(0,-7.05)};

    \node[left] at (0,1) {$M_0$};
    \node[right] at (4.5,0) {$M_1$};
    
    \draw[thick] (-4.5,-6) -- (-4.5,-5);
    \node[below] at (-4.5,-6) {$x, Q$};
    
    \draw[thick] (4.5,-6) -- (4.5,-5);
    \node[below] at (4.5,-6) {$y$};
    
    \draw[thick] (4,5) -- (4,6);
    \node[above] at (4,6) {Q iff $f(x,y)=1$};
    
    \end{tikzpicture}
    \caption{}
    \label{fig:PSQMintro}
    \end{subfigure}
    \caption{(a) A CDS protocol. Alice, on the lower left holds input $x\in \{0,1\}^n$ and a secret $z$ from alphabet $Z$. Bob, on the lower right, holds input $y\in \{0,1\}^n$. Alice and Bob can share a random string $r$. The referee, top right, holds $x$ and $y$. Alice sends a message $m_0(x,z,r)$ to the referee; Bob sends a message $m_1(y,r)$. The referee should learn $z$ iff $f(x,y)=1$ for some agreed on choice of Boolean function $f$. (b) A CDQS protocol. The communication pattern is as in CDS. The secret is now a quantum system $Q$, Alice and Bob can share a (possibly entangled) quantum state, and send quantum messages to the referee. The referee should be able to recover $Q$ iff $f(x,y)=1$.} 
    \label{fig:CDSandCDQS}
\end{figure*}

In this work we focus on this last aspect of CDS, but now in the quantum setting. 
We ask how privacy, quantum communication, and entanglement are related, specifically in the context of conditional disclosure of quantum secrets (CDQS), but with the larger goal in mind of understanding the relationship between these resources broadly in quantum cryptography. 
Further, we establish a number of relationships between CDQS and communication complexity, analogous to the classical results of \cite{applebaum2021placing}, which may be of interest to the theory of communication complexity. 
In the quantum setting, CDQS is closely related to a primitive known as $f$-routing \cite{kent2011quantum}.
$f$-routing is studied in the context of quantum position-verification \cite{chandran2009position,kent2011quantum,buhrman2014position}.
Towards a better understanding of CDQS, our focus in this work is on reproducing results on classical CDS in the quantum setting, or understanding when to not expect quantum analogues of classical results. 

Before proceeding, we define classical and quantum CDS more carefully, beginning with the classical case. 

\begin{definition}\label{def:CDS}
    A \textbf{conditional disclosure of secrets (CDS)} task is defined by a choice of function $f:\{0,1\}^{2n}\rightarrow \{0,1\}$.
    The scheme involves inputs $x\in \{0,1\}^{n}$ given to Alice, and input $y\in \{0,1\}^{n}$ given to Bob.
    Alice and Bob share a random string $r\in R$.
    Additionally, Alice holds a string $z$ drawn from distribution $Z$, which we call the secret. 
    Alice sends message $m_0(x,z,r)$ from alphabet $M_0$ to the referee, and Bob sends message $m_1(y,r)$ from alphabet $M_1$.  
    We require the following two conditions on a CDS protocol. 
    \begin{itemize}
        \item $\epsilon$\textbf{-correct:} There exists a decoding function $D(m_0,x,m_1,y)$ such that 
        \begin{align}
            \forall z\in Z,\,\forall \,(x,y) \in \{0,1\}^{2n} \,\,s.t.\,\,f(x,y)=1,\,\,\, \underset{r\leftarrow R}{\mathrm{Pr}}[D(m_0,x,m_1,y)=z] \geq 1-\epsilon\,. 
        \end{align}
        \item $\delta$\textbf{-secure:} There exists a simulator taking $(x,y)$ as input and producing a distribution $Sim$ on the random variable $M=M_0M_1$ such that
        \begin{align}
            \forall z\in Z,\,\forall \,(x,y) \in \{0,1\}^{2n} \,\,s.t.\,\, f(x,y)=0,\,\,\, \Vert Sim_{M}(x,y) - P_{M}(x,y,z) \Vert_1\leq \delta\,.
        \end{align}
        where $P_M(x,y,z)$ is the distribution on $M$ produced by the protocol on inputs $(x,y)$ and secret $z$. 
    \end{itemize}
\end{definition}
Considering the cost of a CDS protocol for a function $f$, we denote by $\CDS(f)$ the maximum of Alice and Bob's communication size, minimized over protocols $\Pi_{\epsilon,\delta}$ that complete the CDS with a fixed choice of $\epsilon$ and $\delta$, which we typically take to be $\epsilon=\delta=0.09$,\footnote{The choice of constant $\epsilon = \delta = 0.09$ in the definition of $\text{CDS}(f)$ appearing in (\ref{eq:def_cds_f}), and its quantum analogues in (\ref{eq:def_cdqs_f}), (\ref{eq:def_cdqs_bar_f}), is somewhat unconventional; for example, the definition in \cite{applebaum2021placing} assumes $\epsilon = \delta = 0.1$. 
It is necessary to take $\epsilon < 0.1$ in order for our proof of the amplification property of CDQS to hold. } 
\begin{align} \label{eq:def_cds_f}
    \CDS(f) = \min_{\Pi_{0.09,0.09}} (t_A + t_B)\,,
\end{align}
where $t_A$ and $t_B$ are the number of bits in Alice and Bob's messages respectively. 
We always take $t_A$ and $t_B$ to be maximized over inputs. 

To adapt this definition to the quantum setting, we need to be careful around the requirement that the classical security and correctness conditions hold for all choices of secret. 
In the quantum setting the secret string $z$ is now a quantum system $Q$, and we should have correctness and security for all input states. 
This is succinctly captured by phrasing the definition in terms of the diamond norm, which is a norm on the distance between quantum channels in the worst case over inputs. 

\begin{definition}\label{def:CDQS}
    A \textbf{conditional disclosure of quantum secrets (CDQS)} task is defined by a choice of function $f:\{0,1\}^{2n}\rightarrow \{0,1\}$, and a $d_Q$ dimensional Hilbert space $\mathbfcal{H}_Q$ which holds the secret.
    Alice and Bob share a resource system $\Psi_{LR}$, with $L$ held by Alice and $R$ held by Bob. 
    The task involves inputs $x\in \{0,1\}^{n}$ and system $Q$ given to Alice, and input $y\in \{0,1\}^{n}$ given to Bob.
    Alice sends message system $M_0$ to the referee, and Bob sends message system $M_1$. 
    Label the combined message systems as $M=M_0M_1$.
    Label the quantum channel defined by Alice and Bob's combined actions $\mathbfcal{N}_{Q\rightarrow M}^{x,y}$. 
    We put the following two conditions on a CDQS protocol. 
    \begin{itemize}
        \item $\epsilon$\textbf{-correct:} One 1 inputs, there exists a channel $\mathbfcal{D}^{x,y}_{M\rightarrow Q}$, called the decoder, such that the decoder approximately inverts the combined actions of Alice and Bob. That is 
        \begin{align}
            \forall (x,y)\in \{0,1\}^{2n} \,\,\, s.t. \,\, f(x,y)=1,\,\,\, \Vert\mathbfcal{D}^{x,y}_{M\rightarrow Q}\circ \mathbfcal{N}^{x,y}_{Q\rightarrow M} - \mathbfcal{I}_{Q\rightarrow Q}\Vert_\diamond \leq \epsilon\,.
        \end{align}
        \item $\delta$\textbf{-secure:} There exists a quantum channel $\mathbfcal{S}_{\varnothing \rightarrow M}^{x,y}$, called the simulator, which produces an output close to the one seen by the referee but which doesn't depend on the input state of $Q$. That is
        \begin{align}
            \forall (x,y)\in \{0,1\}^{2n} \,\,\, s.t. \,\, f(x,y)=0,\,\,\, \Vert\mathbfcal{S}_{\varnothing \rightarrow M}^{x,y} \circ \tr_Q - \mathbfcal{N}_{Q\rightarrow M}^{x,y}\Vert_\diamond \leq \delta\,.
        \end{align}
    \end{itemize}
\end{definition}
An alternative definition of CDQS would keep the secret classical, but still allow quantum resources. 
We can refer to this primitive as CDQS'. 
In fact, as noted in \cite{allerstorfer2024relating}, CDQS' and CDQS are equivalent, in the sense that for a given function $f$ they use nearly the same resources. 
To see why, notice that we obtain a CDQS' protocol from a CDQS protocol by fixing a basis for our input system $Q$. 
Conversely, we can obtain a CDQS protocol from a CDQS' protocol by a use of the quantum one-time pad: Alice applies a random Pauli $P_k$ to $Q$ and then sends the result to the referee. 
The choice of key $k$ is hidden in the CDQS' protocol. 
This allows the referee to recover $Q$ iff they can recover $k$, which ensures security and correctness of the CDQS.\footnote{Because CDQS' has the exact same inputs and outputs as the classical primitive it is a closer quantum analogue of CDS. 
We choose to start with the definition using a quantum secret because this will simplify a number of our proofs.}

We consider two measures of the cost of the protocol. 
First, we define the communication cost as the maximum over the number of qubits in Alice and Bob's messages, minimized over protocols that complete the task with security and correctness parameter $\epsilon=\delta=0.09$, 
\begin{align} \label{eq:def_cdqs_f}
    \CDQS(f) = \min_{\{\Pi_{0.09,0.09}\}} (n_{M_0} + n_{M_1})\,.
\end{align}
Again the message size of a protocol is defined to be the maximum of message sizes over choices of inputs. 
Second, we define a measure of the CDQS cost which also counts the size of the shared resource system, $\Psi_{LR}$, 
\begin{align} \label{eq:def_cdqs_bar_f}
    \overline{\CDQS}(f) = \min_{\{\Pi_{0.09,0.09}\}} (n_L+n_{M_0} + n_R+n_{M_1})\,.
\end{align}
Note that we allow Alice and Bob to apply arbitrary quantum channels to their systems, so the communication size and resource system size are not obviously related.

We can also study variants of CDS and CDQS where we require either perfect correctness ($\epsilon=0$), perfect security ($\delta=0$), or both. 
We add ``$\pc$'' to the cost function to denote the perfectly correct case, so that for example $\pc\CDS(f)$ denotes the communication cost of CDS for the function $f$ when requiring $\epsilon=0$, $\delta=0.09$. 
We similarly add ``$\pp$'' to label the perfectly private case, and just ``$\textnormal{p}$'' when we have both perfect correctness and perfect privacy. 
These can be combined with overlines to denote the shared resource plus communication cost, and with a Q to label the quantum case. 
Thus for example $\pp\CDQS(f)$ is the quantum communication cost in the perfectly private setting, and $\textnormal{p}\overline{\CDQS}(f)$ is the communication plus shared resource cost in the quantum setting with perfect privacy and perfect correctness. 

\vspace{0.2cm}
\noindent \textbf{Previous work:} CDQS was defined in \cite{allerstorfer2024relating}. 
There, one focus was on the relationship between CDQS and $f$-routing. 
To state this relationship, we first define the $f$-routing primitive. 

\begin{definition}\label{def:frouting}
    A \textbf{$f$-routing} task is defined by a choice of Boolean function $f:\{ 0,1\}^{2n}\rightarrow \{0,1\}$, and a $d$ dimensional Hilbert space $\mathcal{H}_Q$.
    Inputs $x\in \{0,1\}^{n}$ and system $Q$ are given to Alice, and input $y\in \{0,1\}^{n}$ is given to Bob.
    Alice and Bob exchange one round of communication, with the combined systems received or kept by Bob labelled $M$ and the systems received or kept by Alice labelled $M'$.
    Label the combined actions of Alice and Bob in the first round as $\mathbfcal{N}^{x,y}_{Q\rightarrow MM'}$. 
    The $f$-routing task is completed $\epsilon$-correctly if there exists a channel $\mathbfcal{D}^{x,y}_{M\rightarrow Q}$ such that,
    \begin{align}
        \forall (x,y)\in \{0,1\}^{2n} \,\,\, s.t. \,\, f(x,y)=1,\,\,\, \Vert\mathbfcal{D}^{x,y}_{M\rightarrow Q} \circ\tr_{M'} \circ\mathbfcal{N}^{x,y}_{Q\rightarrow MM'} -\mathbfcal{I}_{Q\rightarrow Q}\Vert_\diamond \leq \epsilon\,,
    \end{align}
    and there exists a channel $\mathbfcal{D}^{x,y}_{M'\rightarrow Q}$ such that
    \begin{align}
        \forall (x,y)\in \{0,1\}^{2n} \,\,\, s.t. \,\, f(x,y)=0,\,\,\,\Vert\mathbfcal{D}^{x,y}_{M'\rightarrow Q} \circ\tr_{M} \circ\mathbfcal{N}^{x,y}_{Q\rightarrow MM'} -\mathbfcal{I}_{Q\rightarrow Q}\Vert_\diamond \leq \epsilon\,.
    \end{align}
    In words, Bob can recover $Q$ if $f(x,y)=1$ and Alice can recover $Q$ if $f(x,y)=0$. 
\end{definition}
The $f$-routing task was defined in \cite{kent2011quantum} in the context of quantum position-verification (QPV). 
In a QPV scheme, a verifier sends a pair of messages to a prover, who should respond by computing a function of the input messages and returning them to the verifier.
The scheme should be thought of as occurring in a spacetime context, and the goal is for the prover to convince the verifier that they are performing computations within a specified spacetime region. 
In that context, performing $f$-routing means cheating in a corresponding QPV scheme. 

The following relationship between CDQS and $f$-routing was proven in \cite{allerstorfer2024relating}.
\begin{theorem}\label{thm:CDQSandfRouting}
    A $\epsilon$-correct $f$-routing protocol that routes $n$ qubits implies the existence of a $\epsilon$-correct and $\delta=2\sqrt{\epsilon}$-secure $\CDQS$ protocol that hides $n$ qubits using the same entangled resource state and the same message size. 
    A $\epsilon$-correct and $\delta$-secure $\CDQS$ protocol hiding secret $Q$ using a $n_E$ qubit resource state and $n_M$ qubit messages implies the existence of a $\max\{\epsilon,2\sqrt{\delta} \}$-correct $f$-routing protocol that routes system $Q$ using $n_E$ qubits of resource state and $4(n_M+n_E)$ qubits of message. 
\end{theorem}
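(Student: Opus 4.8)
The plan is to prove both implications through the message structure of the two primitives together with a single quantitative complementarity (decoupling) lemma, which I regard as the technical heart of the argument. Informally the lemma says: for a channel $\mathbfcal{N}_{Q\rightarrow MM'}$, the output $M'$ can reconstruct $Q$ with diamond error at most $\eta$ if and only if the complementary output $M$ is $2\sqrt{\eta}$-close to carrying no information about $Q$, i.e.\ $\Vert \tr_{M'}\circ\, \mathbfcal{N}_{Q\rightarrow MM'} - \mathbfcal{S}\circ\tr_Q\Vert_\diamond\le 2\sqrt{\eta}$ for some fixed-state preparation $\mathbfcal{S}$. I would establish this by passing to a Stinespring dilation of $\mathbfcal{N}$ and applying Uhlmann's theorem, then converting between fidelity and trace distance with the standard inequalities; the factor $2\sqrt{\cdot}$ is exactly the loss in that conversion. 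Both the ``correctness-implies-security'' step of the forward direction and the ``security-implies-recoverability'' step of the reverse direction are then instances of this one lemma, with $\eta=\epsilon$ and $\eta=\delta$ respectively, and I would take care to phrase it in the worst-case diamond-norm sense, uniformly over $(x,y)$.

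For the \textbf{forward direction} ($f$-routing $\Rightarrow$ CDQS), I would use the fact that in the single simultaneous round of $f$-routing, Alice's transmitted system is produced by a local operation on $(Q,x,L)$ and Bob's retained system by a local operation on $(y,R)$. In the CDQS protocol Alice and Bob keep the same shared state $\Psi_{LR}$, but each sends to the referee precisely the local system needed so that the referee's combined message $M=M_0M_1$ equals the system $M$ that Bob would hold at the end of the $f$-routing protocol. When $f(x,y)=1$ the referee applies Bob's decoder $\mathbfcal{D}^{x,y}_{M\rightarrow Q}$ and inherits the correctness error $\epsilon$. When $f(x,y)=0$ the $f$-routing guarantee furnishes a decoder recovering $Q$ from $M'$ with error $\epsilon$, so the complementarity lemma immediately gives a fixed-state simulator on $M$ with error $2\sqrt{\epsilon}$, which is the claimed $\delta$-security. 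The resource state and message sizes match by construction since the referee's message is exactly $M$.

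For the \textbf{reverse direction} (CDQS $\Rightarrow$ $f$-routing) I would dilate Alice's and Bob's CDQS operations to isometries $V_A:QL\rightarrow M_0E_A$ and $V_B:R\rightarrow M_1E_B$. In the $f$-routing protocol Alice applies $V_A$, forwards $M_0$ and her classical input $x$ to Bob, and retains $E_A$; Bob applies $V_B$, forwards $E_B$ to Alice, and retains $M_1$. Setting Bob's final register to $M=xM_0M_1$ and Alice's to $M'=E_AE_B$, when $f(x,y)=1$ Bob holds the referee's entire view $x,y,M_0,M_1$ and runs the CDQS decoder with error $\epsilon$. When $f(x,y)=0$ the CDQS security condition says $M_0M_1$ is $\delta$-decoupled from $Q$; since $M'=E_AE_B$ is the complementary output of the global isometry, the complementarity lemma yields a decoder on $M'$ recovering $Q$ with error $2\sqrt{\delta}$, so Alice succeeds. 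The resulting protocol is therefore $\max\{\epsilon,2\sqrt{\delta}\}$-correct, using the same $n_E$-qubit resource.

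The remaining work is message-size bookkeeping, which I expect to be routine: bound the environment registers $n_{E_A},n_{E_B}$ by the Stinespring estimate in terms of $n_M$ and $n_E$, and account for forwarding the classical input $x$ and (if $\Psi_{LR}$ must be purified) an auxiliary register; these contributions sum to the stated $4(n_M+n_E)$. The main obstacle is the complementarity lemma itself — obtaining the clean $2\sqrt{\cdot}$ dependence from Uhlmann's theorem and verifying it holds in the diamond norm uniformly over inputs — since every nontrivial step of both directions reduces to it; the dilation and counting are then mechanical.
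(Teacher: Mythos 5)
Your proposal is correct and follows essentially the same route as the paper: this statement is imported from \cite{allerstorfer2024relating}, whose argument likewise purifies the local operations and trades correctness on one output against depolarization of the complementary output in both directions, and your ``complementarity lemma'' is precisely the Kretschmann--Schlingemann--Werner theorem that the paper records as \cref{thm:newdecoupling}, with exactly the $2\sqrt{\cdot}$ loss you derive from Uhlmann's theorem. The dilation, simulator-as-fixed-state-preparation, and message bookkeeping in your two reductions match the cited construction (and note that, since the $f$-routing decoders $\mathbfcal{D}^{x,y}$ are already indexed by both inputs in \cref{def:frouting}, forwarding $x$ is not even needed for the stated $4(n_M+n_E)$ count).
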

From this relationship, upper and lower bounds on $f$-routing place corresponding upper and lower bounds on CDQS. 
We highlight however that the transformation between CDQS and $f$-routing preserves the size of the resource system, but not the resource system itself. 
In fact, CDQS for arbitrary $f$ can be performed using shared classical randomness, while $f$-routing even for some natural functions requires shared entanglement \cite{asadi2024rank}. 

CDQS is also related to its classical counterpart by the following statement. 
\begin{theorem}\label{thm:CDStoCDQS}
    An $\epsilon$-correct and $\delta$-secure $\CDS$ protocol hiding $2n$ bits and using $n_M$ bits of message and $n_E$ bits of randomness gives a $\CDQS$ protocol which hides $n$ qubits, is $2\sqrt{\epsilon}$-correct and $\delta$-secure using $n_M$ classical bits of message plus $n$ qubits of message, and $n_E$ classical bits of randomness.
\end{theorem}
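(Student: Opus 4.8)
The plan is to build the CDQS protocol by combining a quantum one-time pad with the given CDS protocol, using the CDS to hide the pad key. Concretely, to hide an $n$-qubit secret $Q$, Alice samples a uniformly random $2n$-bit string $k=(a,b)$, applies the Pauli $P_k=X^aZ^b$ to $Q$, and sends the resulting $n$ qubits directly to the referee as (the quantum part of) her message. In parallel, Alice and Bob run the CDS protocol for $f$ on the $2n$-bit secret $z=k$: Alice sends $m_0(x,k,r)$ and Bob sends $m_1(y,r)$, consuming the $n_E$ shared random bits $r$ and producing $n_M$ classical message bits. The referee's total message is thus the $n$ encrypted qubits together with the $n_M$ classical CDS bits, and the only shared resource is the $n_E$ bits of classical randomness, matching the claimed costs. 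Since the quantum one-time pad on $n$ qubits requires exactly $2n$ key bits, hiding a $2n$-bit CDS secret suffices to hide $n$ qubits.

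For correctness, suppose $f(x,y)=1$. I would take the decoder $\mathbfcal{D}^{x,y}_{M\rightarrow Q}$ to first run the CDS decoder $D(m_0,x,m_1,y)$ to obtain an estimate $k'$ of the key, then apply $P_{k'}^\dagger$ to the received qubits. Because the classical transcript, and hence $k'$, depends on the input only through $k$ (and $r$) and is independent of the state of $Q$, the overall channel $\mathbfcal{D}^{x,y}_{M\rightarrow Q}\circ\mathbfcal{N}^{x,y}_{Q\rightarrow M}$ is a mixture of Pauli conjugations $\rho\mapsto P_{k'}^\dagger P_k\,\rho\,P_k^\dagger P_{k'}$, averaged over $(k,r)$. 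Whenever $k'=k$ this acts as the identity, and CDS $\epsilon$-correctness guarantees $\Pr_{k,r}[k'=k]\geq 1-\epsilon$ for every $k$. Writing the channel as $(1-p)\,\mathbfcal{I}+p\,\mathbfcal{E}$ with error weight $p\leq\epsilon$ and $\mathbfcal{E}$ a normalized Pauli error channel, homogeneity of the norm and the bound $\Vert\mathbfcal{E}-\mathbfcal{I}\Vert_\diamond\leq 2$ give a diamond-norm distance from the identity of $p\,\Vert\mathbfcal{E}-\mathbfcal{I}\Vert_\diamond\leq 2p\leq 2\epsilon\leq 2\sqrt{\epsilon}$, establishing $2\sqrt{\epsilon}$-correctness.

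For security, suppose $f(x,y)=0$. I would take the CDQS simulator $\mathbfcal{S}^{x,y}_{\varnothing\rightarrow M}$ to output the maximally mixed state $\id/2^n$ on the quantum register tensored with the CDS simulator's distribution $Sim_M(x,y)$ on the classical register, which manifestly does not depend on the input. The real referee state is block-diagonal in the orthogonal classical-register basis $\{\ket{m}\}$, with the quantum block attached to a fixed transcript $m$ equal to $P_k\rho P_k^\dagger$ averaged over the posterior of $k$ given $m$; if the transcript were exactly independent of $k$ the posterior would stay uniform and the one-time-pad twirl $\sum_k 2^{-2n}P_k\rho P_k^\dagger=\id/2^n$ would render the joint state an input-independent product, even in the presence of a reference system. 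Expanding both states in the classical basis, applying the triangle inequality, and using $\Vert P_k\rho P_k^\dagger\Vert_1=1$ collapses the diamond-norm distance to $\sum_k 2^{-2n}\Vert P_M(x,y,k)-Sim_M(x,y)\Vert_1\leq\delta$, which is precisely the CDS $\delta$-security bound, giving $\delta$-security.

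The main obstacle is the security step: one must carefully track the correlation between the Pauli-encrypted quantum register and the classical CDS transcript through the shared key $k$, and argue that the classical total-variation security of the CDS lifts to a diamond-norm statement for the joint classical-quantum channel, including in the worst case over inputs entangled with a reference. The one-time-pad twirl is what decouples the quantum register once the key is hidden, and the block-diagonal structure of both the real and simulated states in the classical register is what lets the quantum trace distance collapse back to the classical $\ell_1$ bound; making these two reductions precise, rather than the elementary correctness estimate, is where the care is required.
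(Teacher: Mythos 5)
Your proposal is correct and follows essentially the same route the paper takes (via its quantum one-time-pad sketch for converting a protocol with classical secrets into one with quantum secrets): encrypt $Q$ with a random Pauli $P_k$, hide the $2n$-bit key $k$ as the CDS secret, decode by recovering $k$ and undoing the Pauli, and use the exact Pauli twirl plus block-diagonality in the classical transcript to reduce the diamond-norm security statement to the CDS $\ell_1$ bound. Your correctness analysis in fact yields the slightly stronger bound $2\epsilon \leq 2\sqrt{\epsilon}$, so the claimed parameters are comfortably met.
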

From this theorem, we have that upper bounds on CDS place upper bounds on CDQS. 
Considering known upper bounds on CDS (and which therefore upper bound CDQS), we know that CDS can be performed for a function $f$ using randomness and communication upper bounded by the size of a secret sharing scheme with indicator function $f$, and by the size of a span program over $\mathbb{Z}_p$ computing $f$, with $p$ an arbitrary prime \cite{GERTNER2000592}. 
This last fact means that CDS and CDQS can be performed for all functions in the complexity class Mod$_p\LClass$ using polynomial resources. 
Considering the worst case, there is an upper bound of $2^{O(\sqrt{n\log n})}$ for all functions \cite{liu2017conditional}. 
Finally, there are specific functions believed to be outside of $\PClass$ but which have efficient secret sharing schemes and hence efficient CDS schemes \cite{beimel2001power}. 
Using results on $f$-routing, we also obtain an upper bound in terms of the size of a quantum secret sharing scheme with indicator function $f$ \cite{cree2022code}. 

Regarding lower bounds, CDQS inherits some lower bounds from results on $f$-routing. 
In \cite{bluhm2021position}, a linear lower bound on resource system size was proven for random choices of $f$, although their model assumes Alice and Bob apply unitary operations rather than fully general quantum channels. 
Recently a lower bound on entanglement in perfectly correct $f$-routing for some functions was proven in \cite{asadi2024rank}.
Concretely, their bound gives in the CDQS setting that
\begin{align}
    \pp\overline{\CDQS}(f) &= \Omega(\log \rank g|_f)\,, \nonumber \\
    \pc\overline{\CDQS}(f) &= \Omega(\log \rank g|_{\neg f})\,,
\end{align}
where $g|_f$ is a function which satisfies $g(x,y)=0$ iff $f(x,y)=0$. 
For some natural functions this leads to good lower bounds.
For instance when $f(x,y)$ is the equality function $g|_f$ has full rank, and when $f(x,y)$ is the not-equal function $g|_{\neg f}$ has full rank. 

We can also relate the above lower bounds on perfectly correct and perfectly private CDQS to the quantum non-deterministic communication complexity \cite{de2003nondeterministic}.
Suppose Alice holds input $x\in \{0,1\}^n$, Bob holds $y\in \{0,1\}^n$, and Alice and Bob communicate qubits to compute $f(x,y)$. 
The non-deterministic quantum communication complexity of $f$, $\QNP^{\cc}$, is defined to be the minimal number of qubits exchanged for Alice and Bob to both output $1$ with non-zero probability if and only if $f(x,y)=1$. 
In \cite{de2003nondeterministic}, the quantum non-deterministic complexity was characterized in terms of the non-deterministic rank of $f$. 
\begin{align}
    \pp\overline{\CDQS}(f) &= \Omega(\QNP^{\cc}(f))\,, \nonumber \\
    \pc\overline{\CDQS}(f) &= \Omega(\coQNP^{\cc}(f))\,.
\end{align}
The second bound is a quantum analogue of a lower bound on perfectly private CDS in terms of ${\coNP}^{\cc}(f)$ that was proven in \cite{applebaum2021placing}.

A closely related primitive to CDQS is private simultaneous message passing \cite{feige1994minimal}. 
A quantum analogue of PSM was introduced and studied in \cite{kawachi2021communication}. 
Classically, it is known that a PSM protocol for $f$ implies a CDS protocol for $f$ using similar resources. 
In this sense, CDS is a weaker notion than PSM. 
In \cite{allerstorfer2024relating}, the analogous statement for private simultaneous quantum message passing (PSQM) and CDQS was proven, so that again CDQS can be interpreted as a weaker primitive than PSQM. 
One implication is that our lower bounds on CDQS apply also to PSQM. 
Another comment is that PSQM and CDQS become trivial (can be performed with $O(1)$ sized messages) if Alice and Bob can share PR boxes \cite{allerstorfer2023making}.

\vspace{0.2cm}
\noindent \textbf{Our results:} We focus on three aspects of CDQS, which are closure, amplification, and lower bounds from communication complexity. 

We say CDS is closed under negation if for any Boolean function $f$ there is a CDS using similar resources for the negation of $f$.
Classically, closure of CDS was proven in \cite{applebaum2017conditional}. 
We show CDQS is also closed under negation, and in fact point out that the transformation from a protocol for $f$ to a protocol for $\neg f$ is both simpler and more efficient than the analogous classical result. 
In fact, the transformation is essentially trivial, following a standard purification argument. 
We give the formal statement and proof in \cref{sec:closure}. 
Further, we show that the CDQS costs (in both communication and entanglement) for $f_1\wedge f_2$ and $f_1 \vee f_2$ is not much larger than the sum of costs for $f_1$ and $f_2$. 
The argument uses a similar construction to the one used in the classical case \cite{applebaum2017conditional}. 

Classically \cite{applebaum2021placing} proved an amplification property of CDS: given a CDS protocol for $f$ with $\ell$ bit secrets, we can find a protocol for $f$ with $k \ell$ bit secrets for integer $k>0$ which uses a factor of $k$ more communication and randomness, and has correctness and security errors $O(2^{-k})$. 
In \cref{sec:amplification} we prove the analogous property for CDQS. 
As with closure, we find the quantum proof is simpler than the classical one, again due to a purification argument.
The purified view allows us to prove security of the amplified protocol by proving correctness of a purifying system, which is easier than considering security directly. 

In \cite{gay2015communication}, a relationship between CDS and one-way classical communication complexity was proven. 
In particular they showed
\begin{align}\label{eq:CDSandonewaycom}
    \CDS(f) \geq \frac{1}{4}\log (R_{A\rightarrow B}(f) + R_{B\rightarrow A}(f))\,,
\end{align}
where $R_{A\rightarrow B}(f)$ is the one-way classical communication complexity from Alice to Bob, and $R_{B\rightarrow A}(f)$ the one-way communication complexity from Bob to Alice. 
In \cref{sec:onewayCC} we prove an analogous lower bound in the quantum setting, 
\begin{align}
    \CDQS(f) = \Omega(\log Q^*_{A\rightarrow B}(f))\,.
\end{align}
The proof is simple and apparently unrelated to the classical proof. 
Further, the bound \eqref{eq:CDSandonewaycom} was understood to be tight: there exists an exponential gap between CDS cost and one-way communication complexity. 
We show that the same is true for quantum one-way communication complexity and the CDQS cost. 

In \cite{applebaum2021placing}, a number of further relationships between classical CDS and communication complexity were established. 
We reproduce three of these (with some modifications) in the quantum setting. 
First, \cite{applebaum2021placing} related perfectly private CDS and $\PP^{\cc}$ complexity,
\begin{align}\label{eq:CDSPPlowerbound}
    \pp\CDS(f) = \Omega(\PP^{\cc}(f))-O(\log(n))\,.
\end{align}
$\PP^{\cc}(f)$ measures the cost of outputting a variable $z$ which is biased towards the value of $f(x,y)$. 
The cost is defined by the total communication plus a term that grows as the bias becomes small. 
We produce a similar lower bound, 
\begin{align}
    \pp\overline{\CDQS}(f) = \Omega(\PP^{\cc}(f))\,.
\end{align}
Because explicit lower bounds are known for some functions against $\PP^{\cc}$, this also gives new explicit lower bounds for CDQS.
In particular this gives a linear lower bound on $\pp\overline{\CDQS}$ for the inner product function. 

Regarding fully robust CDS, \cite{applebaum2021placing} proved that
\begin{align}
    \CDS(f) \geq \IP^{\cc}[2](f)\,,
\end{align}
where $\IP^{\cc}[2](f)$ is the communication cost of a two-message interactive proof for the function $f$. 
We prove that\footnote{Note that in \cite{applebaum2021placing} $\IP^{\cc}[k]$ denotes a $k$ \emph{round} protocol, while here we let $\IP^{\cc}[k]$ denote a $k$ \emph{message} protocol. A round consists of a message from the verifiers to the prover and from the prover to the verifiers, so that each round consists of two messages. The notation we use is consistent with the convention in the quantum complexity literature.}
\begin{align}
    \CDQS(f) \geq \QIP^{\cc}[2](f)\,,
\end{align}
where the right hand side now is the cost of a quantum interactive proof in the communication complexity setting. 
Unfortunately, explicit lower bounds are not known against $\IP^{\cc}[2]$ or $\QIP^{\cc}[2]$, so this does not translate immediately to new explicit bounds. 
However, as with the classical case, the above bound does point to CDQS as a potentially easier setting to begin with in the context of trying to prove bounds against $\QIP^{\cc}[2]$. 

Finally, we strengthen our lower bound from $\QIP^{\cc}[2]$ by pointing out that it has a zero-knowledge property. 
We consider honest verifier statistically zero-knowledge proofs with two rounds.
See \cref{sec:QIPccbounds} for details. 
Classically, \cite{applebaum2021placing} proved the bound
\begin{align}
    \CDS(f) = \Omega\left(\frac{\overline{\HVSZK}^{\cc}[2](f)}{\log n} \right)\,,
\end{align}
where an overline indicates the communication plus randomness cost of the $\HVSZK$ protocol is being counted. 
We give a similar bound here, 
\begin{align}
    \CDQS(f) = \Omega\left(\frac{\HVQSZK^{\cc}[2](f)}{\log n} \right)\,.
\end{align}
Notice the overline is removed: in the quantum case we are only able to obtain a lower bound in terms of the communication cost alone. 

\cref{table:CDQScombounds} summarizes the known upper and lower bounds on communication cost in CDQS. Lower bounds follow directly from known lower bounds on the quantum one-way communication complexity of explicit functions such as Inner-Product \cite{cleve1998quantum, nayak2002communication}, Greater-Than \cite{anshu2017separation}, and Set-Disjointness \cite{braverman2019disjointness}.

\cref{table:CDQSentanglementbounds} gives the same when considering communication cost plus entanglement cost.

\begin{figure}
\begin{center}
\begin{tabular}{ |c c c c c|}
\hline
Function & $\textnormal{p}\CDQS$ & $\pc\CDQS$ & $\pp\CDQS$ & $\CDQS$ \\ 
 \hline
 \hline
 Equality & ${\Theta(1)}$ & $\Theta(1)$ & ${\Theta(1)}$ & $\Theta(1)$ \\  
 Non-Equality & $O(n)$ & $O(n)$ & $\Theta(1)$ & $\Theta(1)$ \\  
 Inner-Product &  $O(n)$, $\textcolor{blue}{\Omega(\log n)}$ & $O(n)$, $\textcolor{blue}{\Omega(\log n)}$ & $O(n)$, $\textcolor{blue}{\Omega(\log n)}$ & $O(n)$, $\textcolor{blue}{\Omega(\log n)}$ \\  
 Greater-Than & $O(n)$, $\textcolor{blue}{\Omega(\log n )}$ & $O(n)$, $\textcolor{blue}{\Omega(\log n)}$ & $O(n)$, $\textcolor{blue}{\Omega(\log n)}$ & $O(n)$, $\textcolor{blue}{\Omega(\log n)}$ \\  
 Set-Intersection & $O(n)$, $\textcolor{blue}{\Omega(\log n)}$ & $O(n)$, $\textcolor{blue}{\Omega(\log n)}$ & $O(n), \textcolor{blue}{\Omega(\log n)}$ & $O(n)$, $\textcolor{blue}{\Omega(\log n)}$ \\  
 Set-Disjointness & $O(n)$, $\textcolor{blue}{\Omega(\log n)}$ & $O(n), \textcolor{blue}{\Omega(\log n)}$ & $O(n)$, $\textcolor{blue}{\Omega(\log n)}$ & $O(n)$, $\textcolor{blue}{\Omega(\log n)}$ \\  
 \hline
\end{tabular}
\end{center}
\caption{Known upper and lower bounds on communication cost in CDQS. Blue entries are new to this work.
}\label{table:CDQScombounds}
\end{figure}

\begin{figure}
\begin{center}
\begin{tabular}{ |c c c c c|}
\hline
Function & \small{$p\overline{\CDQS}$} & \small{$\pc\overline{\CDQS}$} & \small{$\pp\overline{\CDQS}$} & \small{$\overline{\CDQS}$} \\ 
  \hline
  \hline
  Equality & ${\Theta(n)}$ & $\Theta(1)$ & ${\Theta(n)}$ & $\Theta(1)$ \\  
  Non-Equality & $\Theta(n)$ & $\Theta(n)$ & $\Theta(1)$ & $\Theta(1)$ \\  
  Inner-Product &  $\textcolor{blue}{\Theta(n)}$ & $O(n)$, $\textcolor{blue}{\Omega(\log n)}$ & $\textcolor{blue}{\Theta(n)}$ & $O(n)$, $\textcolor{blue}{\Omega(\log n)}$ \\
  Greater-Than & $\Theta(n)$ & $\Theta(n)$ & $\Theta(n)$ & $O(n)$, $\textcolor{blue}{\Omega(\log n)}$ \\  
  Set-Intersection & $\Theta(n)$ & $\Theta(n)$ & $\textcolor{blue}{\Omega(\log n)}$ & $O(n)$, $\textcolor{blue}{\Omega(\log n)}$ \\  
  Set-Disjointness & $\Theta(n)$ & $\textcolor{blue}{\Omega(\log n)}$ & $\Theta(n)$ & $O(n)$, $\textcolor{blue}{\Omega(\log n)}$ \\  
  \hline
 \end{tabular}
 \end{center}
 \caption{Known upper and lower bounds on entanglement plus communication cost in CDQS. Blue entries are new to this work.}\label{table:CDQSentanglementbounds}
 \end{figure}

\section{Some quantum information tools}

Quantum channels $\mathbfcal{N}_{A\rightarrow B}$ and $\mathbfcal{N}_{A\rightarrow E}^c$ are \emph{complimentary} if there is an isometry $\mathbf{V}_{A\rightarrow BE}$ such that
\begin{align}
    \mathbfcal{N}_{A\rightarrow B} &= \tr_{E}\circ \mathbf{V}_{A\rightarrow BE}( \cdot )\mathbf{V}_{A\rightarrow BE}\,, \nonumber \\
    \mathbfcal{N}^c_{A\rightarrow E} &= \tr_{B}\circ \mathbf{V}_{A\rightarrow BE}( \cdot )\mathbf{V}_{A\rightarrow BE}\,.
\end{align}
We will make use of the following property of complimentary channels, see e.g. \cite{wilde2013quantum}. 
\begin{remark}\label{remark:dilationsize}
    Given a channel $\mathbfcal{N}_{A\rightarrow B}$, there exists a complimentary channel $\mathbfcal{N}_{A\rightarrow E}^c$ with $d_E \leq d_A d_B$.
\end{remark}

A useful measure of how different two quantum channels are is provided by the diamond norm. 
\begin{definition}
The \textbf{diamond norm distance} between two channels $\mathbfcal{M},\mathbfcal{N}$ is defined by
\begin{align}
    \Vert\mathbfcal{M} - \mathbfcal{N}\Vert_\diamond = \max_{\ket{\Psi}_{RA}} \Vert I_R\otimes \mathbfcal{M}(\ket{\Psi}_{RA}) - I_R\otimes \mathbfcal{N}(\ket{\Psi}_{RA}) \Vert_1 \,
\end{align}
where $\dim R=\dim A$. 
\end{definition}
The diamond norm is a function of the probability of distinguishing two quantum channels in an operational setting \cite{wilde2013quantum}. 

The following theorem was proved in \cite{kretschmann2008information}.
\begin{theorem}\label{thm:newdecoupling}
    Let $\mathbfcal{N}_{A\rightarrow B}:\mathbfcal{L}(\mathbfcal{H}_A)\rightarrow \mathbfcal{L}(\mathbfcal{H}_B)$ be a quantum channel, and let $\mathbfcal{N}^c_{A\rightarrow E}$ be the complimentary channel. 
    Let $\mathbfcal{S}_{A \rightarrow E}$ be a completely depolarizing channel, which traces out the input and replaces it with a fixed state $\sigma_E$. 
    Then we have that
    \begin{align}
        \frac{1}{4} \inf_{\mathbfcal{D}_{B\rightarrow A}} \Vert \mathbfcal{D}_{B\rightarrow A} \circ \mathbfcal{N}_{A\rightarrow B}-\mathbfcal{I}_{A\rightarrow A}\Vert_{\diamond}^2 &\leq \Vert\mathbfcal{N}^c_{A\rightarrow E} - \mathbfcal{S}_{A \rightarrow E} \Vert_\diamond \nonumber \\
        &\leq 2 \inf_{\mathbfcal{D}_{B\rightarrow A}} \Vert\mathbfcal{D}_{B\rightarrow A} \circ \mathbfcal{N}_{A\rightarrow B} - \mathbfcal{I}_{A\rightarrow A} \Vert_\diamond^{1/2} \,, 
    \end{align}
    where the infimum is over all quantum channels $\mathbfcal{D}_{B\rightarrow A}$. 
\end{theorem}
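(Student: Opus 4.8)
The statement is the quantitative information--disturbance tradeoff: a channel $\mathbfcal{N}_{A\rightarrow B}$ is close to reversible exactly when its complement $\mathbfcal{N}^c_{A\rightarrow E}$ is close to a constant channel, i.e. when the environment learns almost nothing about the input. The plan is to make this equivalence quantitative through a single Stinespring dilation. Fix an isometry $\mathbf{V}_{A\rightarrow BE}$ dilating $\mathbfcal{N}$, introduce a reference $R\cong A$, and apply $\mathbf{V}$ to the $A$-half of a maximally entangled state $\ket{\Phi}_{RA}$ to obtain the tripartite pure state $\ket{\psi}_{RBE}=(I_R\otimes \mathbf{V})\ket{\Phi}_{RA}$. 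Acting on $\ket{\Phi}_{RA}$ with $\mathbfcal{N}^c$ produces $\psi_{RE}$, while the depolarizing channel $\mathbfcal{S}$ produces $\Phi_R\otimes\sigma_E$, so the right-hand quantity $\Vert \mathbfcal{N}^c - \mathbfcal{S}\Vert_\diamond$ is, up to the worst-case maximization in the diamond norm, a measure of how far $\psi_{RE}$ is from the decoupled product $\Phi_R\otimes\sigma_E$.

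For the upper inequality (correctability implies a near-constant complement) I would argue that a good decoder forces the environment into a fixed state. Suppose $\mathbfcal{D}_{B\rightarrow A}$ achieves $\Vert \mathbfcal{D}\circ\mathbfcal{N}-\mathbfcal{I}\Vert_\diamond=\alpha$, and dilate $\mathbfcal{D}$ by an isometry into $AE'$. The composite $\mathbfcal{D}\circ\mathbfcal{N}$ is then within $\alpha$ of the identity, whose only dilation (up to an environment isometry) sends the input to itself tensored with a fixed pure state. Invoking the continuity of the Stinespring dilation --- two channels close in diamond norm admit dilations close in operator norm after a suitable environment unitary --- the joint environment $EE'$ of the composite is driven close to a fixed state, and tracing out $E'$ shows $\mathbfcal{N}^c$ is close to the constant channel with output $\sigma_E=\tr_R\psi_{RE}$. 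Converting the resulting fidelity estimate to the diamond norm via the Fuchs--van de Graaf inequalities produces the square root and the factor $2$ on the right.

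For the lower inequality (a near-constant complement implies correctability) I would construct the decoder directly from decoupling. If $\Vert \mathbfcal{N}^c-\mathbfcal{S}\Vert_\diamond=\eta$ then $\psi_{RE}$ is $\eta$-close in trace distance to $\Phi_R\otimes\sigma_E$; passing to fidelity and applying Uhlmann's theorem yields an isometry $\mathbf{W}_{B\rightarrow AE'}$ carrying $\ket{\psi}_{RBE}$ close to $\ket{\Phi}_{RA}\otimes\ket{\zeta}_{EE'}$, where $\ket{\zeta}$ purifies $\sigma_E$. Setting $\mathbfcal{D}_{B\rightarrow A}=\tr_{E'}\circ\,\mathbf{W}(\cdot)\mathbf{W}^\dagger$ gives a decoder that, on the maximally entangled input, inverts $\mathbfcal{N}$ up to an error controlled by $\eta$.

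The main obstacle sits in this last step: Uhlmann's theorem only controls the behaviour of $\mathbfcal{D}\circ\mathbfcal{N}$ on the single maximally entangled input (an entanglement-fidelity guarantee), whereas the diamond norm $\Vert \mathbfcal{D}\circ\mathbfcal{N}-\mathbfcal{I}\Vert_\diamond$ is a worst case over all inputs entangled with an arbitrary reference. Promoting this average-case recovery to a worst-case bound without incurring dimension-dependent factors is exactly what the continuity of Stinespring's representation supplies, and it is the technical heart of the argument; it is also the source of the asymmetry between the two inequalities, namely the square with constant $1/4$ on the left against the square root with constant $2$ on the right. I would therefore treat the continuity of the Stinespring dilation as the key lemma and derive both inequalities as its two-sided consequence, rather than proving the correctability and decoupling directions by separate ad hoc estimates.
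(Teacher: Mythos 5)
The paper offers no proof of \cref{thm:newdecoupling} at all --- it is imported verbatim from \cite{kretschmann2008information} --- and your sketch reconstructs precisely that reference's argument: both inequalities are corollaries of the worst-case continuity theorem for Stinespring dilations, $\inf_U \Vert (\id\otimes U)\mathbf{V}_1 - \mathbf{V}_2\Vert^2 \leq \Vert \mathbfcal{T}_1-\mathbfcal{T}_2\Vert_\diamond \leq 2\inf_U \Vert (\id\otimes U)\mathbf{V}_1 - \mathbf{V}_2\Vert$, applied once to $\mathbfcal{D}\circ\mathbfcal{N}$ versus the identity (whose environment state is fixed and pure, forcing the complement toward a constant channel) and once to $\mathbfcal{N}^c$ versus $\mathbfcal{S}$ (whose dilation environment contains a copy of the input, from which the decoder is read off). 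You also correctly diagnose the one genuine technical obstacle --- that a bare Uhlmann argument on the maximally entangled input gives only an average-case (entanglement-fidelity) guarantee, and that the cb-norm continuity of Stinespring is exactly what upgrades this to the diamond norm without dimension factors, at the cost of the square/square-root asymmetry and the constants $1/4$ and $2$ --- so your plan matches the known proof, with the minor caveat that the right-hand inequality requires choosing $\sigma_E$ appropriately (e.g.\ $\sigma_E=\mathbfcal{N}^c$ applied to the maximally mixed state, or an infimum over $\sigma_E$) rather than holding for an arbitrary fixed $\sigma_E$.
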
 
In words, this theorem says that if a channel is (nearly) invertible, it's complementary channel is (nearly) completely depolarizing, and vice versa. 
We will use this repeatedly in this work to relate the notion of security in CDQS to a notion of correctness for a purifying system.

\section{Basic properties}

\subsection{Closure under negation, AND, and OR}\label{sec:closure}

Closure of CDS under negation was shown in \cite{applebaum2017conditional}. 
We recall the exact statement here for convenience. 
\begin{theorem}\textbf{(From \cite{applebaum2017conditional})} Suppose that $f$ has a $\CDS$ with randomness complexity $\rho$ and communication complexity $t$ and privacy/correctness errors of $2^{-k}$. Then $\neg{f}=1-f$ has a $\CDS$ scheme with similar privacy/correctness error, and randomness/communication complexity $O(k^3\rho^2 t+ k^3\rho^3)$. 
\end{theorem}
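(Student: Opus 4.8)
The plan is to build the $\neg f$ protocol on top of the given $f$-CDS by exploiting the structural dichotomy baked into \cref{def:CDS}: on a $1$-input the joint message $(m_0,m_1)$ determines the secret through the decoder $D$ (up to error $2^{-k}$), whereas on a $0$-input its distribution is $2^{-k}$-close to $\mathrm{Sim}(x,y)$, which is \emph{independent of the secret}. The feature I would lean on is that the simulator takes only $(x,y)$ as input, and the referee holds $(x,y)$; hence on a $0$-input the referee can itself (approximately) sample the honest transcript distribution, while on a $1$-input the honest transcript encodes a secret the referee could not have produced on its own. This reframes the target condition as ``$f(x,y)=0$'' $\Leftrightarrow$ ``the referee can self-generate the transcript.''

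A first attempt illustrates both the idea and the difficulty. Let Alice and Bob share a uniform base secret $s$ and base randomness $r$, run the $f$-CDS on $s$, and additionally broadcast $c = w \oplus h(s)$ for an extractor $h$. On a $1$-input the referee decodes $s$, computes $h(s)$, and unmasks $w$; on a $0$-input $s$ retains full entropy from the referee's view so $c$ hides $w$. Unfortunately this is a CDS for $f$, not $\neg f$: it reveals $w$ in exactly the wrong regime. Reversing the orientation --- recovering $w$ precisely when $s$ is \emph{not} pinned down --- is the heart of the matter, since one cannot unmask using ignorance of $s$. The resolution is to use the referee's ability to self-generate the transcript on $0$-inputs as the decoding resource for $w$: I would encode $w$ so that it is recoverable from a fresh simulator-distributed sample (which the referee can produce iff $f=0$) but destroyed by the genuine secret-bearing transcript (which is what the referee actually sees iff $f=1$).

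To pass from the resulting statistical distinguishing gap to a genuine CDS I would (i) start, as the hypothesis permits, from a base scheme with error $2^{-k}$ and uniform secret; (ii) run $\Theta(k)$ independent instances and post-process with an extractor or majority so that the soft advantage is amplified to correctness and privacy errors of order $2^{-k}$; and (iii) secret-share the $\rho$-bit base randomness across instances so that no single referee view leaks it. I would expect steps (ii)--(iii), together with a union bound over the $2^{2n}$ inputs to secure worst-case guarantees, to be exactly what produces the $O(k^3\rho^2 t + k^3\rho^3)$ complexity. The main obstacle is the orientation/conversion step: the simulator only guarantees statistical closeness, which yields a distinguishing advantage rather than the exact invertibility that correctness on $0$-inputs demands, and converting this soft advantage into reliable recovery of $w$ while keeping $w$ hidden on $1$-inputs --- with Alice and Bob never jointly learning $f(x,y)$ --- is where essentially all of the work lies.
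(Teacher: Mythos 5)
This theorem is not actually proved in the paper you were given: it is imported verbatim from \cite{applebaum2017conditional} as classical background, and the paper's own contribution is the \emph{quantum} closure theorem, proved by an entirely different route (complementary channels plus the decoupling theorem, \cref{thm:newdecoupling}) that has no classical counterpart. Your attempt must therefore be measured against the cited classical proof, which is a concrete multi-stage transformation --- error amplification, a passage to (near-)perfect correctness, and an explicit complement construction that discloses the new secret precisely when the transcript fails to pin down a planted secret --- whose composition is what produces the specific bound $O(k^3\rho^2 t + k^3\rho^3)$.

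Your proposal has a genuine gap at its center, and you acknowledge it: the encoding of $w$ that is ``recoverable from a fresh simulator-distributed sample but destroyed by the genuine secret-bearing transcript'' is never constructed, and that construction is the entire content of the theorem. Worse, the dichotomy it rests on is misstated. The simulator takes only $(x,y)$ as input and the referee knows $(x,y)$, so the referee can sample $\mathrm{Sim}(x,y)$ on \emph{every} input, not only when $f(x,y)=0$; what holds iff $f(x,y)=0$ is that this sample is distributed (up to $2^{-k}$) like the real transcript. Hence ``the referee can self-generate the transcript iff $f=0$'' is not a decoding resource: comparing a self-generated sample to the actual transcript yields at most a weak distinguishing advantage, while correctness of the $\neg f$ scheme requires the referee to \emph{output} $w$ with probability $1-2^{-\Omega(k)}$, not merely to detect which regime it is in. Moreover, on $0$-inputs the genuine transcript is itself statistically close to the simulator's output, so anything recoverable from a fresh simulator sample together with public data is equally recoverable from the genuine transcript; the one-sidedness you want cannot arise from this mechanism alone. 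The remaining steps do not repair this: majority/extractor post-processing over $\Theta(k)$ instances amplifies the bias of a per-instance test you never define and does not convert distinguishing into decoding; the union bound over $2^{2n}$ inputs is both unnecessary (CDS guarantees are worst-case per input, with probability taken only over the shared randomness) and harmful, since invoking it would force $k=\Omega(n)$ and contaminate the claimed complexity with $n$-dependence; and the bound $O(k^3\rho^2 t + k^3\rho^3)$ is asserted rather than derived from any accounting of communication and randomness in your construction.
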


\begin{figure*}
    \centering
    \begin{subfigure}{0.45\textwidth}
    \centering
        \begin{tikzpicture}[scale=0.4]
    
    \draw[thick] (-5,-5) -- (-5,-3) -- (-3,-3) -- (-3,-5) -- (-5,-5);
    \node at (-4,-4) {$\mathbfcal{N}^x$};
    
    \draw[thick] (5,-5) -- (5,-3) -- (3,-3) -- (3,-5) -- (5,-5);
    \node at (4,-4) {$\mathbfcal{N}^y$};
    
    \draw[thick] (5,5) -- (5,3) -- (3,3) -- (3,5) -- (5,5);
    
    \draw[thick] (4,-3) -- (4.5,3);
    
    \draw[thick] (-4,-3) to [out=90,in=-90] (3.5,3);
    
    \draw[thick] (-3.5,-5) to [out=-90,in=-90] (3.5,-5);
    \draw[black] plot [mark=*, mark size=3] coordinates{(0,-7.05)};
    \node[below right] at (-3.25,-5) {$L$};
    \node[below left] at (3.25,-5) {$R$};

    \node[left] at (0,1) {$M_0$};
    \node[right] at (4.5,0) {$M_1$};
    
    \draw[thick] (-4.5,-6) -- (-4.5,-5);
    \node[below] at (-4.5,-6) {$Q$};
    
    \draw[thick] (4,5) -- (4,6);
    \node[above] at (4,6) {Q iff $f(x,y)=1$};
    
    \end{tikzpicture}
    \caption{}
    \label{}
    \end{subfigure}
    \caption{A CDQS protocol, with all system labels and location of each quantum operation.} 
    \label{fig:CDQSprotocol}
\end{figure*}

The quantum version of this result is easier to prove and gives a more efficient transformation. 
\begin{theorem}\textbf{(Closure)} Suppose we have a $\epsilon$-correct and $\delta$-secure $\CDQS$ that reveals a $n_Q$ qubit system conditioned on a function $f$, uses $n_M$ message qubits, and a resource state $\ket{\Psi}_{LR}$ with $n_E=\log d_L=\log d_R$.
Then there exists a $\CDQS$ that reveals a $n_Q$ qubit system conditioned on $\neg f$, which uses at most $n_M+2n_E+n_Q$ message qubits, uses the same resource state, and is $\epsilon'=2\sqrt{\delta}$ correct and $\delta'=2\sqrt{\epsilon}$ secure. 
\end{theorem}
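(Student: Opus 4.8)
The plan is to purify the given $\CDQS$ protocol and have Alice and Bob forward the complementary ``environment'' systems to the referee in place of their original messages. Write Alice's local operation as a channel $\mathbfcal{A}^x_{QL\rightarrow M_0}$ and Bob's as $\mathbfcal{B}^y_{R\rightarrow M_1}$, so that their combined action on the secret is $\mathbfcal{N}^{x,y}_{Q\rightarrow M}$ with $M=M_0M_1$. I would replace each party's channel with a Stinespring dilation, an isometry $\mathbf{V}^x_{QL\rightarrow M_0E_A}$ for Alice and $\mathbf{W}^y_{R\rightarrow M_1E_B}$ for Bob, and define the new protocol by keeping the same resource state $\ket{\Psi}_{LR}$ but having Alice send $E_A$ and Bob send $E_B$. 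The channel realized on the secret by the new protocol is then exactly the complementary channel $\mathbfcal{N}^{c,x,y}_{Q\rightarrow E}$, with $E=E_AE_B$.

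The heart of the argument is that \cref{thm:newdecoupling} exchanges the roles of correctness and security between a channel and its complement. For correctness of the new protocol I restrict to inputs with $\neg f(x,y)=1$, i.e. $f(x,y)=0$: here the $\delta$-security of the original protocol states that $\mathbfcal{N}^{x,y}_{Q\rightarrow M}$ is within $\delta$ in diamond norm of the completely depolarizing channel $\mathbfcal{S}\circ\tr_Q$. Applying the lower-bound half of \cref{thm:newdecoupling} to $\mathbfcal{N}^{c}_{Q\rightarrow E}$, whose complement is $\mathbfcal{N}_{Q\rightarrow M}$, produces a decoder $\mathbfcal{D}_{E\rightarrow Q}$ with $\Vert \mathbfcal{D}\circ\mathbfcal{N}^{c}-\mathbfcal{I}\Vert_\diamond\leq 2\sqrt{\delta}$, giving $\epsilon'=2\sqrt{\delta}$. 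For security I restrict to $\neg f(x,y)=0$, i.e. $f(x,y)=1$: here $\epsilon$-correctness of the original protocol says $\mathbfcal{N}^{x,y}_{Q\rightarrow M}$ is $\epsilon$-decodable, and the upper-bound half of \cref{thm:newdecoupling} applied to $\mathbfcal{N}_{Q\rightarrow M}$ gives $\Vert \mathbfcal{N}^{c}_{Q\rightarrow E}-\mathbfcal{S}\Vert_\diamond\leq 2\sqrt{\epsilon}$, which is exactly a simulator realizing $\delta'=2\sqrt{\epsilon}$.

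It then remains to account for the message size. Using \cref{remark:dilationsize} I can choose Alice's dilation so that $d_{E_A}\leq d_{QL}d_{M_0}=d_Qd_Ld_{M_0}$ and Bob's so that $d_{E_B}\leq d_Rd_{M_1}$; summing logarithms, the new messages cost at most $(n_Q+n_E+n_{M_0})+(n_E+n_{M_1})=n_M+2n_E+n_Q$ qubits, as claimed. Note that only Alice's environment carries the $n_Q$ factor, since she is the party holding $Q$, while each of $L$ and $R$ contributes one factor of $n_E$.

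I expect the genuinely delicate points to be bookkeeping rather than conceptual. First, I must verify that the two invocations of \cref{thm:newdecoupling} are with respect to mutually complementary channels: the complement of $\mathbfcal{N}^c_{Q\rightarrow E}$ returns $\mathbfcal{N}_{Q\rightarrow M}$ only up to an isometry on the environment, so I should check that the relevant diamond-norm distances and the decodability are invariant under such an isometry. Second, I must ensure the dilations are chosen so that the environment dimensions land at the stated bound and not something larger, which is why it matters that $Q$ sits on Alice's side and that the new decoder and simulator can be assembled directly from the guarantees supplied by \cref{thm:newdecoupling}.
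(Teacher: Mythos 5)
Your proposal is correct and follows essentially the same route as the paper's own proof: both replace Alice's and Bob's channels by complementary channels (sending the Stinespring environments to the referee), invoke \cref{thm:newdecoupling} to exchange correctness and security with the stated losses $\epsilon'=2\sqrt{\delta}$ and $\delta'=2\sqrt{\epsilon}$, and bound the new message size via \cref{remark:dilationsize} to get $n_M+2n_E+n_Q$. The technical caveats you flag (isometric freedom in the complement, and which party's environment absorbs the $n_Q$ factor) are handled implicitly in the paper and do not change the argument.
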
 

\begin{proof}\,
    Consider the given CDQS protocol for $f$.
    Let Alice's channel be $\mathbfcal{N}^x_{QL\rightarrow M_0}$ and Bob's channel be $\mathbfcal{N}^y_{R\rightarrow M_1}$. 
    See figure \cref{fig:CDQSprotocol} for an illustration of the protocol. 
    In the new CDQS protocol for $\neg f$, have Alice apply the complimentary channel $(\mathbfcal{N}^x)^c_{QL\rightarrow M_0'}$ and Bob apply the complimentary channel $(\mathbfcal{N}^y)^c_{L\rightarrow M_0'}$. 
    This protocol uses the same resource system as the original, and using \cref{remark:dilationsize}, we have $n_{M'} = n_{M_0'} + n_{M_1'} \leq n_M + 2n_E + n_Q$ as needed.
    
    It remains to show correctness and security of this protocol. 
    To do this it is convenient to define 
    \begin{align}
        \mathbfcal{N}^{x,y}_{Q\rightarrow M}(\,\cdot_Q)\equiv\mathbfcal{N}^{x}_{QL\rightarrow M_0}\otimes \mathbfcal{N}^y_{R\rightarrow M_1}(\,\cdot_{Q} \otimes \Psi_{LR})\,.
    \end{align} 
    First consider security. 
    We consider $(x,y)$ which are zero instances of $\neg f(x,y)$, and hence one instances of $f(x,y)$. 
    Then by $\epsilon$ correctness of the CDQS for $f$, we have for all such $(x,y)$, there exists a decoder channel $\mathbfcal{D}^{x,y}_{M\rightarrow Q}$ such that
    \begin{align}
        \Vert\mathbfcal{D}^{x,y}_{M\rightarrow Q}\circ \mathbfcal{N}^{x,y}_{Q\rightarrow M} - \mathbfcal{I}_{Q\rightarrow Q}\Vert_\diamond \leq \epsilon \,.
    \end{align} 
    Then by \cref{thm:newdecoupling} we get that there exists a completely depolarizing channel $\mathbfcal{S}_{Q\rightarrow M'}^{x,y}$ such that
    \begin{align}
        \Vert(\mathbfcal{N}^{x,y})^c_{Q\rightarrow M'} - \mathbfcal{S}^{x,y}_{Q \rightarrow M'} \Vert_\diamond \leq 2\sqrt{\epsilon} \,.
    \end{align}
    But in the new CDQS protocol the protocol implements $(\mathbfcal{N}^{x,y})^c_{Q\rightarrow M'}$, so this is exactly $\delta'=2\sqrt{\epsilon}$ security. 

    Next we establish correctness. 
    Consider an input pair $(x,y)\in (\neg f)^{-1}(1)$, so $(x,y)\in f^{-1}(0)$. 
    Then by security of the original CDQS, we have that there exists a completely depolarizing channel $\mathbfcal{S}_{Q \rightarrow M}^{x,y}=\mathbfcal{S}_{\varnothing \rightarrow M}^{x,y} \circ \tr_Q$ such that
    \begin{align}
        \Vert\mathbfcal{S}_{\varnothing \rightarrow M}^{x,y} \circ \tr_Q - \mathbfcal{N}_{Q\rightarrow M}^{x,y}\Vert_\diamond \leq \delta \,.
    \end{align}
    But again by \cref{thm:newdecoupling} this mean there exists a decoding channel $\mathbfcal{D}^{x,y}_{M'\rightarrow Q}$ such that
    \begin{align}
        \Vert\mathbfcal{D}^{x,y}_{M'\rightarrow Q}\circ (\mathbfcal{N}^{x,y})^{c}_{Q\rightarrow M'} - \mathbfcal{I}_{Q\rightarrow Q} \Vert_\diamond \leq 2\sqrt{\delta} \,,
    \end{align}
    which is $\epsilon'=2\sqrt{\delta}$ correctness of the CDQS for $\neg f$. 
\end{proof}

We can also show that the cost of implementing a function which is the AND or OR of two functions $f_1$, $f_2$, cannot be much larger than the sum of the costs for $f_1$ and $f_2$. 
Thus in this sense CDQS is closed under taking AND or OR's. 

\begin{theorem}
    Let $f_1$ and $f_2$ be functions which can be implemented in CDQS protocols with entanglement cost $E_1$ and $E_2$, and communication cost $q_1$ and $q_2$. 
    Further, suppose both protocols are $\epsilon$-correct and $\delta$-secure. 
    Then there is a protocol for $f_1\wedge f_2$ which is $2\epsilon$ correct and $\delta$-secure with communication cost at most $q_1+q_2$ and entanglement cost is at most $E_1+E_2$. 
    Similarly, there is a CDQS protocol for $f_1 \vee f_2$ which is $\epsilon$ correct and $2\delta$ secure, and has communication cost at most $q_1+q_2+c$ for $c$ a constant and entanglement cost at most $E_1+E_2$. 
\end{theorem}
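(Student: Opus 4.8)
The plan is to mirror the two classical constructions for combining CDS protocols --- additive secret sharing of the secret for the AND, and disclosing the secret under each function separately for the OR --- while working around the two genuinely quantum obstructions: a quantum secret cannot be cloned, and two successful sub-protocols must never be forced to produce two copies of $Q$. I would resolve the AND with quantum secret sharing and the OR with teleportation, running the given $f_1$ and $f_2$ protocols on independent resource states so that messages concatenate to $q_1+q_2$ and shared resources to $E_1+E_2$.

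For $f_1 \wedge f_2$, I would have Alice apply a $2$-out-of-$2$ quantum secret-sharing scheme to split $Q$ into two shares $S_1,S_2$, each individually in a state independent of $Q$ but jointly determining $Q$. She then runs the $f_1$-protocol to reveal $S_1$ and the $f_2$-protocol to reveal $S_2$. Correctness is immediate when $f_1=f_2=1$: the referee decodes $S_1$ and $S_2$, each within diamond distance $\epsilon$, and reconstructs $Q$; composing with the fixed reconstruction channel and applying the triangle inequality in diamond norm gives error at most $2\epsilon$. For security, $f_1\wedge f_2=0$ forces some $f_i=0$, say $f_1=0$. I would build the simulator from the $f_1$-sub-simulator for the $f_1$-messages --- which is $\delta$-close and independent of $S_1$ --- while generating the $f_2$-messages by honestly running the $f_2$-protocol on $S_2$. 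Once the $f_1$-messages are decoupled from $S_1$, the referee's view depends on the secret only through the marginal of $S_2$, which is $Q$-independent by the secret-sharing property; hence the combined view is $\delta$-close to the real one and $Q$-independent. Only one sub-simulator is invoked, so the scheme is $\delta$-secure rather than $2\delta$-secure.

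For $f_1 \vee f_2$ the classical construction discloses the secret under both functions, which quantumly would require cloning $Q$ and, worse, would demand two copies of $Q$ whenever both functions evaluate to $1$. I would sidestep this by teleportation: Alice Bell-measures $Q$ against half of a local EPR pair, producing a classical Pauli correction $m$ and leaving the teleported register $T$ (in the state $P_m$ applied to the secret) in her hands. She broadcasts $T$ to the referee and discloses the classical string $m$ under the OR of $f_1,f_2$ by the cloning trick: $m$ is copied and revealed both through the $f_1$-protocol and through the $f_2$-protocol run in their classical-secret mode. If either function is $1$ the referee recovers $m$ with error at most $\epsilon$, applies $P_m^\dagger$ to $T$, and obtains $Q$, so the scheme is $\epsilon$-correct; crucially $Q$ is reconstructed exactly once, so no cloning occurs even when both functions are $1$. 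If both functions are $0$ then $m$ is hidden --- simulated using both sub-simulators, at cost $2\delta$ --- and the broadcast register $T$, averaged over the now-uniform $m$, is maximally mixed and independent of $Q$, giving $2\delta$-security. Since the teleportation EPR pair is local to Alice and not shared, the shared-resource cost stays $E_1+E_2$, and the only extra communication is the broadcast of the constant-size register $T$, giving the additive constant $c$.

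I expect the main obstacle to be the OR step, specifically reconciling no-cloning with the requirement that recovery succeed whenever at least one function is $1$; teleportation is exactly what converts the uncloneable secret into a cloneable classical correction that can be disclosed under the OR while the quantum register is broadcast only once. A secondary point to verify carefully is the asymmetric error bookkeeping --- that the AND needs both decoders but only one simulator (hence $2\epsilon$-correct, $\delta$-secure) while the OR needs both simulators but only one decoder (hence $\epsilon$-correct, $2\delta$-secure) --- and that in each case the sub-protocol diamond-norm errors compose additively under the triangle inequality when passed through the fixed encoding, reconstruction, and teleportation maps.
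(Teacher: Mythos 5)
Your AND construction coincides with the paper's: encode $Q$ into a $((2,2))$ quantum secret sharing scheme, feed one share to each sub-protocol, get $2\epsilon$-correctness from the diamond-norm triangle inequality, and get $\delta$-security from a simulator that combines the $f_1$ sub-simulator with an honest run of the $f_2$ protocol on a dummy share --- your observation that the referee's view depends on the secret only through the $Q$-independent marginal of $S_2$ is exactly the channel identity $\tr_{Q_1}\circ\,\mathbfcal{E}_{Q\rightarrow Q_1Q_2}=\mathbfcal{Y}_{\varnothing\rightarrow Q_2}\circ\tr_Q$ that the paper invokes.

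Your OR construction, however, is genuinely different from the paper's, and the comparison is instructive. The paper encodes $Q$ into a $((2,3))$ threshold scheme with shares $Q_0Q_1Q_2$, has Alice always send $Q_0$ (this is the $+c$), and runs the $f_i$ protocol on the quantum share $Q_i$; any one conditionally revealed share plus $Q_0$ reconstructs $Q$, hiding both conditional shares costs the two simulators, and --- crucially --- each sub-protocol is handed a quantum secret of the kind it was assumed to support, so the costs are exactly one run of each protocol. Your teleportation route (equivalently, a quantum one-time pad: Alice may simply apply a random Pauli $P_m$ and broadcast the result, with no local EPR pair needed) is structurally sound and is an elegant way to evade no-cloning: the uncloneable $Q$ is converted into a cloneable classical correction $m$, the broadcast register is maximally mixed once $m$ is hidden, and your asymmetric error bookkeeping ($\epsilon$-correct, $2\delta$-secure) is right at the level of one run per function. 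The caveat is capacity: padding or teleporting an $n_Q$-qubit secret produces a $2n_Q$-bit key, whereas a CDQS protocol hiding an $n_Q$-qubit quantum secret hides, upon fixing a basis, only $n_Q$ classical bits --- this factor of two is intrinsic to the CDQS/CDQS$'$ equivalence (compare \cref{thm:CDStoCDQS}, where $2n$ classical bits are consumed to hide $n$ qubits). So in general your ``classical-secret mode'' requires two parallel runs of each sub-protocol, inflating the communication to roughly $2(q_1+q_2)+n_Q$, the entanglement to $2(E_1+E_2)$, and the errors to $2\epsilon$ and $4\delta$. This still proves closure under OR with the same asymptotics, but is a constant factor worse than the theorem's stated $q_1+q_2+c$, $\epsilon$-correctness, and $2\delta$-security; the paper's threshold-scheme construction avoids the issue precisely because it never needs to route classical data of super-secret length through the given protocols.
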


\begin{proof}\,
    Begin with the case of $f_1\wedge f_2$. Record the secret, $Q$, into a $((2,2))$ quantum secret sharing scheme. Call the resulting shares $Q_1$ and $Q_2$. Input $Q_1$ to the CDQS protocol for $f_1$, and $Q_2$ to the CDQS protocol for $f_2$. 

    Correctness is clear: if $f=f_1\wedge f_2=1$ then $f_1=1$ and $f_2=1$, so the referee recovers both $Q_1$ and $Q_2$, so they can recover $Q$ from the quantum secret sharing scheme. 
    To determine the precise correctness parameter, we use that by $\epsilon$ correctness of the protocols for $f_1$ and $f_2$, 
    \begin{align}
        \Vert\mathbfcal{D}_{M_1\rightarrow Q_1}^{x,y}\circ \mathbfcal{N}^{x,y}_{Q_1\rightarrow M_1} - \mathbfcal{I}_{Q_1\rightarrow Q_1}\Vert_\diamond &\leq \epsilon \nonumber \\
        \Vert\mathbfcal{D}_{M_2\rightarrow Q_2}^{x,y}\circ \mathbfcal{N}^{x,y}_{Q_2\rightarrow M_2} - \mathbfcal{I}_{Q_2\rightarrow Q_2}\Vert_\diamond &\leq \epsilon
    \end{align}
    where the $\mathbfcal{N}^{x,y}_{Q_i\rightarrow M_i}$ represent the combined actions of Alice and Bob in the CDQS protocol for $f_i$, and $\mathbfcal{D}_{M_1\rightarrow Q_1}^{x,y}$ the decoding channels for those protocols. 
    Further, call the encoding channel for the secret sharing scheme $\mathbfcal{E}_{Q\rightarrow Q_1Q_2}$ and decoding channel $\mathbfcal{R}_{Q_1Q_2\rightarrow Q}$, and note that $\mathbfcal{R}_{Q_1Q_2\rightarrow Q} \circ \mathbfcal{E}_{Q\rightarrow Q_1Q_2} = \mathbfcal{I}_{Q}$.
    Then to establish correctness of the combined protocol we need to bound the expression
    \begin{align}
        \Vert\mathbfcal{R}_{Q_1Q_2\rightarrow Q}\circ  (\mathbfcal{D}_{M_1\rightarrow Q_1}^{x,y}\otimes \mathbfcal{D}_{M_2\rightarrow Q_2}^{x,y})\circ (\mathbfcal{N}^{x,y}_{Q_1\rightarrow M_1}\otimes \mathbfcal{N}^{x,y}_{Q_2\rightarrow M_2})\circ \mathbfcal{E}_{Q\rightarrow Q_1Q_2} -\mathbfcal{I}_{Q\rightarrow Q}\Vert \nonumber 
    \end{align}   
    To do this, define $\mathbfcal{C}_{Q_1\rightarrow Q_1}^{x,y} =\mathbfcal{D}_{M_1\rightarrow Q_1}^{x,y}\circ \mathbfcal{N}^{x,y}_{Q_1\rightarrow M_1}$ and $\mathbfcal{C}_{Q_2\rightarrow Q_2}^{x,y} =\mathbfcal{D}_{M_2\rightarrow Q_2}^{x,y}\circ \mathbfcal{N}^{x,y}_{Q_2\rightarrow M_2}$, and re-express the above as
    \begin{align}    
        \Vert\mathbfcal{R}_{Q_1Q_2\rightarrow Q}\circ  (\mathbfcal{C}^{x,y}_{Q_1\rightarrow Q_1}\otimes \mathbfcal{C}^{x,y}_{Q_2\rightarrow Q_2})\circ \mathbfcal{E}_{Q\rightarrow Q_1Q_2} -\mathbfcal{I}_{Q\rightarrow Q}\Vert 
    \end{align}
    Define $X_i=\mathbfcal{C}^{x,y}_{Q_i\rightarrow Q_i}-\mathbfcal{I}_{Q_i\rightarrow Q_i}$, and drop the system label subscripts for convenience, to obtain
    \begin{align}    
        \Vert\mathbfcal{R}\circ  (X_1\otimes \mathbfcal{C}_2)\circ \mathbfcal{E}+\mathbfcal{R}\circ (\mathbfcal{I}\otimes X_2)\circ \mathbfcal{E} \Vert_\diamond  &\leq \Vert (X_1\otimes \mathbfcal{C}_2)+ (\mathbfcal{I}\otimes X_2)\Vert_\diamond \nonumber \\
        &\leq \Vert X_1\otimes \mathbfcal{C}_2\Vert_\diamond + \Vert\mathbfcal{I}\otimes X_2\Vert_\diamond \nonumber \\
        &= \Vert X_1\Vert_\diamond \Vert\mathbfcal{C}_2\Vert_\diamond + \Vert X_2\Vert_\diamond \nonumber \\
        &\leq 2\epsilon
    \end{align}
    so that the protocol is $2\epsilon$ correct. 

    To establish security, we use that $\delta$ security of the protocols for $f_1$ and $f_2$ imply the existence of simulator channels $\mathbfcal{S}_{\varnothing \rightarrow M_1}^{x,y}, \mathbfcal{S}_{\varnothing \rightarrow M_2}^{x,y}$ such that
    \begin{align}
            \Vert\mathbfcal{S}_{\varnothing \rightarrow M_1}^{x,y} \circ \tr_{Q_1} - \mathbfcal{N}_{Q_1\rightarrow M_1}^{x,y}\Vert_\diamond \leq \delta \nonumber \\
            \Vert\mathbfcal{S}_{\varnothing \rightarrow M_2}^{x,y} \circ \tr_{Q_2} - \mathbfcal{N}_{Q_2\rightarrow M_2}^{x,y}\Vert_\diamond \leq \delta
    \end{align}
    Then we claim $\mathbfcal{S}_{\varnothing \rightarrow M_1}^{x,y}\otimes  \mathbfcal{S}_{\varnothing \rightarrow M_2}^{x,y}$ is a good simulator channel for the protocol defined above for $f_1\wedge f_2$. 
    To establish this, we need to show that when $f=0$, so that at least one of $f_1$ or $f_2$ are zero, there is a good simulator channel for the system received by the referee. 
    Let's suppose without loss of generality that $f_1=0$. 
    Then we claim
    \begin{align}
        \Vert((\mathbfcal{S}_{\varnothing \rightarrow M_1}^{x,y}\circ \tr_{Q_1})\otimes  \mathbfcal{N}_{Q_2\rightarrow M_2}^{x,y}) \circ \mathbfcal{Y}_{\varnothing \rightarrow Q_2}\circ \tr_Q- (\mathbfcal{N}_{Q_1\rightarrow M_1}^{x,y}\otimes \mathbfcal{N}_{Q_2\rightarrow M_2}^{x,y})\circ \mathbfcal{E}_{Q\rightarrow Q_1Q_2} \Vert_\diamond \leq \delta \nonumber 
    \end{align}
    so that $((\mathbfcal{S}_{\varnothing \rightarrow M_1}^{x,y}\circ \tr_{Q_1})\otimes  \mathbfcal{N}_{Q_2\rightarrow M_2}^{x,y}) \circ \mathbfcal{Y}_{\varnothing \rightarrow Q_2}$ is a simulator channel. 
    We will use that security of the secret sharing scheme means there exists a channel $\mathbfcal{Y}_{\varnothing\rightarrow Q_1}$ such that $\tr_{Q_1} \circ\, \mathbfcal{E}_{Q\rightarrow Q_1Q_2}=\mathbfcal{Y}_{\varnothing \rightarrow Q_2}\circ \tr_Q$.
    Then, 
    \begin{align}
        \Vert&((\mathbfcal{S}_{\varnothing \rightarrow M_1}^{x,y}\circ \tr_{Q_1})\otimes  \mathbfcal{N}_{Q_2\rightarrow M_2}^{x,y}) \circ \mathbfcal{Y}_{\varnothing \rightarrow Q_2}\circ \tr_Q- (\mathbfcal{N}_{Q_1\rightarrow M_1}^{x,y}\otimes \mathbfcal{N}_{Q_2\rightarrow M_2}^{x,y})\circ \mathbfcal{E}_{Q\rightarrow Q_1Q_2} \Vert_\diamond \nonumber \\
        &\leq \Vert(\mathbfcal{S}_{\varnothing \rightarrow M_1}^{x,y}\circ \tr_{Q_1}) \circ \mathbfcal{Y}_{\varnothing \rightarrow Q_2}\circ \tr_Q - \mathbfcal{N}_{Q_1\rightarrow M_1}^{x,y}\circ \mathbfcal{E}_{Q\rightarrow Q_1Q_2} \Vert_\diamond \nonumber \\
        &\leq \Vert(\mathbfcal{S}_{\varnothing \rightarrow M_1}^{x,y}\circ \tr_{Q_1}) \circ \mathbfcal{E}_{Q\rightarrow Q_1Q_2} - \mathbfcal{N}_{Q_1\rightarrow M_1}^{x,y}\circ \mathbfcal{E}_{Q\rightarrow Q_1Q_2} \Vert_\diamond \nonumber \\\
        &\leq \Vert(\mathbfcal{S}_{\varnothing \rightarrow M_1}^{x,y}\circ \tr_{Q_1}) - \mathbfcal{N}_{Q_1\rightarrow M_1}^{x,y}\Vert_\diamond \nonumber \\
        &\leq \delta
    \end{align}
    as claimed. 

    To build a protocol for $f=f_1\vee f_2$, we record $Q$ into a secret sharing scheme with three shares such that any two suffice to recover the secret.
    Call the shares $Q_0Q_1Q_2$. 
    Then Alice always sends $Q_0$ (contributing the $+c$ to the total communication cost) and inputs $Q_1$ to a CDQS protocol for $f_1$, and $Q_2$ to a CDQS protocol for $f_2$. 
    The analysis to establish security and correctness follows the same strategy as above; we omit it for brevity. 
\end{proof}

\subsection{Amplification}\label{sec:amplification}

We will prove a quantum analogue of Theorem 12 from \cite{applebaum2017conditional}, which we state below. 

\begin{theorem}\label{thm:amplification}
    Let F be a $\CDS$ for a function $f$ that supports one bit secrets with correctness error $\epsilon=0.09$ and privacy error $\delta=0.09$. Then for every integer $k$ there exists a $\CDS$ G for $f$ with $k$-bit secrets and privacy and correctness errors of $2^{-\Omega(k)}$. The communication and randomness complexity of $G$ is larger than that of F by a factor of $k$. 
\end{theorem}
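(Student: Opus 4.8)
The plan is to build the amplified scheme $G$ from $m=\Theta(k)$ independent copies of the base one-bit $\CDS$ $F$, with the $k$-bit secret spread across the copies by a linear secret-sharing scheme chosen so that both requirements amplify simultaneously. The tension to resolve is structural: correctness is needed only on $f=1$ inputs and wants a \emph{robust} encoding, so that recovery survives a constant fraction of failing copies, whereas privacy is needed only on $f=0$ inputs and wants a \emph{fragile} encoding, so that the joint leakage of many copies still hides the secret. The single observation that reconciles these, and which is the heart of the argument, is a statistical XOR lemma: the leakage advantage of many independent copies combines \emph{multiplicatively}, not additively, so it can be driven below any naive hybrid bound.

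I would first isolate the privacy mechanism in the simplest case. Share the secret bit as $b=b_1\oplus\cdots\oplus b_m$ with the $b_i$ uniform subject to this constraint, and hide each $b_i$ with an independent copy of $F$. On an $f=0$ input let $D_i^0,D_i^1$ be the two transcript distributions of copy $i$; base privacy gives $\lVert D_i^0-D_i^1\rVert_1\le 2\delta$. The referee's view under secret $b$ is $E_b=2^{-(m-1)}\sum_{\oplus_i b_i=b}\bigotimes_i D_i^{b_i}$, and expanding the tensor product of differences shows the exact identity $E_0-E_1=2^{-(m-1)}\bigotimes_i(D_i^0-D_i^1)$ as signed measures. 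Since the $\ell_1$ norm is multiplicative on tensor products, $\lVert E_0-E_1\rVert_1\le 2^{-(m-1)}(2\delta)^m=2\delta^m=2^{-\Omega(m)}$ because $\delta=0.09<1/2$. This is the crux: it beats the hybrid-argument bound of $2m\delta$, so XOR-sharing amplifies privacy exponentially. The same $\otimes$-of-differences identity is precisely what \emph{fails} for plain repetition (a difference of tensors, not a tensor of differences), which is why repetition amplifies correctness but ruins privacy, and XOR does the reverse.

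To amplify correctness and combine the two mechanisms at constant rate, I would encode the $k$-bit secret with a binary linear secret-sharing scheme of length $m=\Theta(k)$, constant rate, relative distance $\Theta(1)$, and relative dual distance $\Theta(1)$ (a random linear code, or a Reed--Solomon ramp scheme over a larger field, has both). On an $f=1$ input each copy decodes its share correctly except with probability $\le\epsilon$, independently, so by a Chernoff bound the fraction of corrupted shares is below $\epsilon+o(1)$ except with probability $2^{-\Omega(m)}$; provided this fraction lies under the code's error-correction radius the referee recovers $s$, and this is exactly where $\epsilon$ must be a small enough constant, matching the $\epsilon<0.1$ hypothesis. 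Privacy comes from the dual distance: changing the secret from $s$ to $s'$ shifts the shares along a nonzero codeword, forcing $E_s-E_{s'}$ to factor over $\ge\Theta(m)$ coordinates, so the generalized XOR lemma bounds the advantage by $(2\delta)^{\Theta(m)}=2^{-\Omega(k)}$. With $m=\Theta(k)$ copies the communication and randomness are $\Theta(k)$ times those of $F$, as required.

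The main obstacle is the simultaneous amplification. The tensor privacy lemma is transparent for an honest XOR of independent uniform bits, and upgrading it to a rate-$\Theta(1)$ code, with \emph{leaky} shares rather than perfectly hidden ones and an arbitrary pair $s\neq s'$, is the delicate step: one must show the view difference genuinely factorizes over a coordinate set of size controlled by the dual distance, while the \emph{same} code is robust enough for the Chernoff-plus-decoding correctness argument. Establishing that a single code can be fragile enough for the product privacy bound and robust enough for the threshold correctness bound, and checking quantitatively that $\epsilon=\delta=0.09$ places the corrupted-share fraction strictly inside the decoding radius, is the remaining work.
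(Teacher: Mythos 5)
You should first be aware that the paper does not prove \cref{thm:amplification} at all: it imports this classical statement verbatim from \cite{applebaum2017conditional} and instead proves the quantum analogue (\cref{thm:fRamplification} and its CDQS corollary) by a structurally different route. There, the $k$-qubit secret is encoded in a good quantum error-correcting code (\cref{remark:goodcodes}), the base protocol is run on each physical qubit, and \emph{only correctness needs to be amplified} (via the i.i.d.\ error-suppression bound of \cref{eq:iidnoise}); privacy comes for free, because through purification and the complementary-channel equivalence (\cref{thm:newdecoupling}, used via the $f$-routing correspondence of \cref{thm:CDQSandfRouting}) security is just correctness on the purifying side. Your plan is instead the classical route of the cited source: code-based sharing with distance handling correctness and dual distance handling privacy via an XOR-type tensorization lemma. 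So relative to this paper you are re-deriving the imported classical theorem along essentially the lines of \cite{applebaum2017conditional}, and the comparison to make is that the quantum proof escapes exactly the ``fragile versus robust'' tension you identify, since the dual-distance/XOR half of the argument is subsumed by purification.

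Within your route, the base case is correct: the identity $E_0-E_1=2^{-(m-1)}\bigotimes_i(D_i^0-D_i^1)$ and multiplicativity of $\lVert\cdot\rVert_1$ on tensor products of signed measures give $\lVert E_0-E_1\rVert_1\le 2\delta^m$. But the lifting step contains a genuine gap: the claim that changing $s$ to $s'$ ``shifts the shares along a nonzero codeword, forcing $E_s-E_{s'}$ to factor over $\Theta(m)$ coordinates'' is false as stated. For each \emph{fixed} value of the sharing randomness, $E_s-E_{s'}$ is a difference of two tensor products disagreeing on $\mathrm{supp}(\Delta)$ --- a difference of tensors, not a tensor of differences --- and pointwise the hybrid bound $O(\delta\,|\mathrm{supp}(\Delta)|)$ is all you get. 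Factorization emerges only after averaging over the randomness, via a Fourier expansion over the dual code: writing the share vector as uniform over a coset, $E_s-E_{s'}$ becomes a signed sum over dual codewords $u$ with $u\cdot\Delta\neq 0$, each term of norm at most $2\delta^{\mathrm{wt}(u)}$. The resulting bound is a \emph{dual weight enumerator} sum $\sum_u \delta^{\mathrm{wt}(u)}$, and since there can be $2^{(1-R)m}$ dual words, constant relative dual distance $d^\perp/m$ alone does not suffice; you need $d^\perp\log(1/\delta)-(1-R)m=\Omega(m)$, or control of the full enumerator (for random linear codes the expected sum is roughly $2^{-Rm}(1+\delta)^m$, which is $2^{-\Omega(m)}$ for $\delta=0.09$ once $R>\log_2(1.09)\approx 0.13$). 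This must then be checked jointly with your correctness requirement --- unique decoding radius above $0.09$, hence relative distance above $0.18$, forcing roughly $R<0.32$ --- at a single constant rate. That window is nonempty, so the architecture can be completed, but you flagged this verification as ``remaining work,'' and it is precisely the quantitative content of the theorem rather than a routine check.
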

This theorem has applications in relating CDS and communication complexity, and we will need a quantum analogue of this result for the same purpose. 

To reproduce this for CDQS, we first of all need to recall the existence of ``good'' quantum error correcting codes. 
By a good code we mean one with distance $t$ and number of physical qubits $n_P$ both linear in the number of logical qubits.
Existence of these codes is established in \cite{calderbank1996good}.  
We summarize the parameters of their construction here. 
\begin{remark}\label{remark:goodcodes}
    There exist quantum codes with $k$ logical qubits, $m$ physical qubits, and correcting arbitrary errors on $t<m/2$ qubits with 
    \begin{align}
        \frac{k}{m} = 1- 2 H_2\left(\frac{2t}{m} \right) \,,
    \end{align}
    where $H_2$ is the binary entropy function. 
\end{remark}
Taking $t = \alpha m$ to be a constant fraction of $m$, we find there are quantum codes with $k/m=\beta$ a constant. 

An error-correcting code that corrects arbitrary errors on $t$ qubits will also do well in correcting small errors on all qubits. 
This is expressed in the next theorem, which we reproduce from \cite{gottesman2024surviving}.
\begin{theorem}\label{eq:iidnoise}
    Let $\mathcal{I}$ be the one-qubit identity channel and $\mathbfcal{E}=\otimes_{i=1}^m \mathbfcal{E}_i$ be an $m$-qubit independent error channel, with $\Vert\mathbfcal{E}_i-\mathbfcal{I} \Vert_\diamond < \epsilon< \frac{t+1}{m-t-1}$, and let $\mathbfcal{U}$ and $\mathcal{D}$ be the encoder and decoder for a QECC with $m$ physical qubits that corrects $t$-qubit errors. Then
    \begin{align}
        \Vert \mathbfcal{D}\circ \mathbfcal{E}\circ \mathbfcal{U} - \mathbfcal{I}\Vert_\diamond < 2 \binom{m}{t+1} (e \epsilon)^{t+1} \,.
    \end{align}
\end{theorem}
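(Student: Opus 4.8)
The plan is to expand the i.i.d.\ error channel into a sum over subsets of qubits, split the sum at weight $t$, argue that the low-weight part is corrected \emph{exactly} by the decoder, and bound the high-weight remainder directly in diamond norm. First I would write each single-qubit factor as $\mathbfcal{E}_i = \mathcal{I} + \Delta_i$ with $\Delta_i := \mathbfcal{E}_i - \mathcal{I}$, so that $\Vert \Delta_i\Vert_\diamond < \epsilon$ and, since $\mathbfcal{E}_i$ and $\mathcal{I}$ are both trace preserving, $\tr\circ\,\Delta_i = 0$. Multiplying out the tensor product gives
\begin{align}
    \mathbfcal{E} = \bigotimes_{i=1}^m(\mathcal{I}+\Delta_i) = \sum_{S\subseteq[m]}\mathbfcal{E}_S, \qquad \mathbfcal{E}_S := \Big(\bigotimes_{i\in S}\Delta_i\Big)\otimes \Big(\bigotimes_{i\notin S}\mathcal{I}\Big),
\end{align}
where each $\mathbfcal{E}_S$ acts nontrivially only on the qubits in $S$. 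I would then split the sum as $\mathbfcal{E} = \mathbfcal{A} + \mathbfcal{B}$ with $\mathbfcal{A} := \sum_{|S|\le t}\mathbfcal{E}_S$ and $\mathbfcal{B} := \sum_{|S|> t}\mathbfcal{E}_S$.

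The key step is to show that $\mathcal{D}\circ\mathbfcal{A}\circ\mathbfcal{U} = \mathbfcal{I}$ exactly. Each $\mathbfcal{E}_S$ with $|S|\le t$ is supported on at most $t$ qubits, so it expands into operators of weight at most $t$, all of which lie in the correctable set of the code. By the Knill--Laflamme conditions the decoder annihilates the off-syndrome cross terms and acts as the Knill--Laflamme scalar on the rest, so each term satisfies $\mathcal{D}\circ\mathbfcal{E}_S\circ\mathbfcal{U} = \lambda_S\,\mathbfcal{I}$, and hence $\mathcal{D}\circ\mathbfcal{A}\circ\mathbfcal{U} = \lambda\,\mathbfcal{I}$. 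To pin down $\lambda$ I would use trace preservation: since $\tr\circ\,\Delta_i = 0$, every $\mathbfcal{E}_S$ with $S\ne\varnothing$ is trace annihilating, so only the $S=\varnothing$ term survives under the trace, giving $\tr\circ(\mathcal{D}\circ\mathbfcal{A}\circ\mathbfcal{U}) = \tr$ and forcing $\lambda=1$. Consequently $\mathcal{D}\circ\mathbfcal{E}\circ\mathbfcal{U} - \mathbfcal{I} = \mathcal{D}\circ\mathbfcal{B}\circ\mathbfcal{U}$.

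It then remains to bound the high-weight part. Using that composing with the channels $\mathcal{D}$ and $\mathbfcal{U}$ does not increase the diamond norm, the triangle inequality, and multiplicativity of the diamond norm under tensor products (with $\Vert\Delta_i\Vert_\diamond<\epsilon$, giving $\Vert\mathbfcal{E}_S\Vert_\diamond \le \epsilon^{|S|}$), I obtain
\begin{align}
    \Vert \mathcal{D}\circ\mathbfcal{E}\circ\mathbfcal{U} - \mathbfcal{I}\Vert_\diamond \le \Vert\mathbfcal{B}\Vert_\diamond \le \sum_{w=t+1}^m \binom{m}{w}\epsilon^w.
\end{align}
The hypothesis $\epsilon < \frac{t+1}{m-t-1}$ makes the ratio of consecutive terms, $\frac{m-w}{w+1}\epsilon$, strictly less than $1$ and maximal at $w=t+1$, so the tail is dominated by a geometric series with first term $\binom{m}{t+1}\epsilon^{t+1}$; bounding $1/(1-r)$ by $t+2$ and then $(t+2)$ by $2e^{t+1}$ gives the claimed $2\binom{m}{t+1}(e\epsilon)^{t+1}$.

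I expect the main obstacle to be the exact-correction step $\mathcal{D}\circ\mathbfcal{A}\circ\mathbfcal{U}=\mathbfcal{I}$: one must argue carefully that the fixed decoder of a $t$-error-correcting code kills the off-syndrome cross terms and acts as the Knill--Laflamme scalar on the remaining weight-$\le t$ operators (which are differences of CP maps rather than channels), and then verify that the normalization is genuinely $1$ via the trace argument rather than some other constant. The diamond-norm bounding in the final paragraph is routine by comparison.
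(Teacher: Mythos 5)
Your proposal is correct, but note that the paper does not prove this statement at all: it is imported verbatim from the cited reference \cite{gottesman2024surviving}, so there is no in-paper proof to match. Your argument is essentially the standard one from that source --- binomial expansion $\mathbfcal{E}=\bigotimes_i(\mathcal{I}+\Delta_i)$, exact correction of the weight-$\le t$ part, and a geometric tail bound --- and all three stages check out. In particular your tail estimate is sound: the hypothesis $\epsilon<\frac{t+1}{m-t-1}$ gives ratio $r=\frac{m-t-1}{t+2}\epsilon<\frac{t+1}{t+2}$, hence $\frac{1}{1-r}<t+2$, and your resulting factor $(t+2)$ is in fact \emph{stronger} than the stated $2e^{t+1}$ since $t+2\le 2e^{t+1}$ for all $t\ge 0$.

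One refinement on the step you correctly flag as the delicate one. You do not need the Knill--Laflamme cross-term analysis, which (as you worry) presumes the canonical syndrome decoder rather than an arbitrary decoder satisfying the correction property. Instead, expand each low-weight term by inclusion--exclusion:
\begin{align}
    \mathbfcal{E}_S \;=\; \bigotimes_{i\in S}(\mathbfcal{E}_i-\mathcal{I})\otimes\bigotimes_{i\notin S}\mathcal{I}
    \;=\; \sum_{T\subseteq S}(-1)^{|S\setminus T|}\,\Bigl(\bigotimes_{i\in T}\mathbfcal{E}_i\Bigr)\otimes\Bigl(\bigotimes_{i\notin T}\mathcal{I}\Bigr)\,.
\end{align}
Each summand is a genuine quantum \emph{channel} acting nontrivially on $T\subseteq S$ with $|T|\le t$ qubits, so the correction hypothesis applies directly to each one, giving $\mathbfcal{D}\circ(\cdot)\circ\mathbfcal{U}=\mathcal{I}$ term by term. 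The signs then sum to $(1-1)^{|S|}$, so $\mathbfcal{D}\circ\mathbfcal{E}_S\circ\mathbfcal{U}=0$ for every nonempty $S$ with $|S|\le t$, which is stronger than your conclusion ($\lambda_S=0$ individually, not merely $\sum_S\lambda_S=1$) and also renders your trace-normalization argument unnecessary. This bypasses entirely the question of how the decoder acts on non-CP cross terms; note that an individual cross term $E_a\,\cdot\,E_b^\dagger$ need not even lie in the linear span of channels (its trace functional is not proportional to $\tr$ unless $E_b^\dagger E_a\propto \mathcal{I}$), so the channel-span route works for $\mathbfcal{E}_S$ as a whole but not termwise --- the inclusion--exclusion form is the clean way to see it.
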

Combined with the existence of codes that correct $t=\alpha m$ qubit errors, this theorem allows for exponential suppression of errors. 
We make use of this fact in the next theorem, showing amplification for $f$-routing. 

\begin{theorem}\label{thm:fRamplification}
    Let $F_Q$ be an $f$-routing protocol for a function $f$ that supports one qubit input systems with correctness error $\epsilon=0.09$, communication cost $c$, and entanglement cost $E$. Then for every integer $k$ there exists an $f$-routing protocol $G_{Q'}$ for $f$ with $k$-qubit secrets, privacy and correctness errors of $2^{-\Omega(k)}$, communication cost $O(k c)$, and entanglement cost $O(k E)$. 
\end{theorem}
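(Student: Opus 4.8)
The plan is to build $G_{Q'}$ by encoding the $k$-qubit secret into a good quantum code and routing its physical qubits in parallel through the given single-qubit protocol $F_Q$. Concretely, I would fix a code from \cref{remark:goodcodes} with $k$ logical qubits, $m=O(k)$ physical qubits, and distance large enough to correct $t=\Theta(m)=\Theta(k)$ arbitrary errors, with encoder $\mathbfcal{U}$ and decoder $\mathbfcal{D}$. In $G_{Q'}$, Alice first applies $\mathbfcal{U}$ to her $k$-qubit input $Q'$ to produce $m$ physical qubits, and then runs $m$ independent copies of $F_Q$, routing the $i$-th physical qubit through the $i$-th copy. Since each copy of $F_Q$ uses its own fresh resource state and its own message registers, the communication cost is $m\cdot c=O(kc)$ and the entanglement cost is $m\cdot E=O(kE)$, as required.

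For the first $f$-routing condition, suppose $f(x,y)=1$. For each copy $i$, the correctness guarantee of $F_Q$ supplies a single-qubit decoder $\mathbfcal{D}_i$ with $\Vert \mathbfcal{D}_i\circ \mathbfcal{N}^{x,y}_i-\mathbfcal{I}\Vert_\diamond\le 0.09$, where $\mathbfcal{N}^{x,y}_i$ is the combined routing channel of copy $i$. Bob applies each $\mathbfcal{D}_i$ to his received register, so the overall map on the $m$ physical qubits is the product channel $\mathbfcal{E}=\otimes_{i=1}^m \mathbfcal{E}_i$ with $\mathbfcal{E}_i=\mathbfcal{D}_i\circ \mathbfcal{N}^{x,y}_i$ and $\Vert \mathbfcal{E}_i-\mathbfcal{I}\Vert_\diamond\le 0.09$; independence of the copies is what makes this a genuine tensor product. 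Bob then applies $\mathbfcal{D}$, and \cref{eq:iidnoise} gives $\Vert \mathbfcal{D}\circ \mathbfcal{E}\circ \mathbfcal{U}-\mathbfcal{I}\Vert_\diamond< 2\binom{m}{t+1}(e\cdot 0.09)^{t+1}$, which is $2^{-\Omega(k)}$ provided the code parameters are chosen below threshold (see below). The case $f(x,y)=0$ is handled identically, with Alice's single-qubit decoders replacing Bob's, yielding the second $f$-routing condition with the same bound.

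The step I expect to be the main obstacle is precisely the threshold check: one must choose $m$, $t$, and the rate so that $2\binom{m}{t+1}(e\epsilon)^{t+1}\to 0$ while keeping $m=O(k)$ and respecting the hypothesis $\epsilon<\tfrac{t+1}{m-t-1}$ of \cref{eq:iidnoise}. This is exactly where the unusual value $\epsilon=0.09$ enters, since the relevant threshold sits near $1/(4e)\approx 0.092<0.1$, so $\epsilon$ must be taken strictly below it. If a single good code does not clear this threshold at $\epsilon=0.09$, I would interpose a preliminary error-reduction stage of constant overhead, routing each physical qubit through a short concatenated code first to push the effective per-qubit error below the good-code threshold before the outer encoding is applied; this keeps the overhead $O(1)$ per logical qubit and hence preserves the final costs $O(kc)$ and $O(kE)$.

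The remaining work is routine diamond-norm bookkeeping: verifying the tensor-product structure of the composite channel across independent copies, and combining the per-copy bounds through the triangle inequality and submultiplicativity of $\Vert\cdot\Vert_\diamond$ before invoking \cref{eq:iidnoise}.
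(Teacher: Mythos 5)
Your proposal is correct and follows essentially the same route as the paper: encode the $k$-qubit secret with a good code from \cref{remark:goodcodes}, run $m=O(k)$ independent copies of $F_Q$, compose the per-copy decoders into a tensor-product channel $\mathbfcal{E}=\otimes_i\mathbfcal{E}_i$ with $\Vert\mathbfcal{E}_i-\mathbfcal{I}\Vert_\diamond\leq 0.09$, and invoke \cref{eq:iidnoise}, treating the $f(x,y)=0$ branch symmetrically. Your anticipated obstacle is resolved exactly as you suspected: the paper takes $t=\alpha m$ with $\alpha=0.495$ (rate $\beta\approx 0.838$), for which $H_2(\alpha)+\alpha\log_2(e\cdot 0.09)<0$ yields $\Vert\mathbfcal{D}\circ\mathbfcal{E}\circ\mathbfcal{U}-\mathbfcal{I}\Vert_\diamond<C\,2^{-\gamma m}$ with $\gamma\approx 5.5\times 10^{-3}$, so a single good code clears the threshold (this is precisely why the paper fixes $\epsilon=0.09<1/(4e)$) and your fallback concatenation stage is unnecessary.
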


\begin{proof}\,
    We let $Q'$ be the $k$-qubit input to the $f$-routing protocol $G_{Q'}$. 
    Alice encodes $Q'$ into an error correcting code with $k$ logical qubits, $m=k/\beta$ physical qubits, and is able to tolerate $t=\alpha m$ errors with $\alpha = 0.495$, $\beta \approx 0.838$ constant. 
    Such codes exist by \cref{remark:goodcodes}. 
    Let the encoded qubits be systems $\{S_i\}_{i=1}^m$. 
    Alice and Bob run an instance of $F_Q$ on each share $S_i$. 
    At the output location specified by $f(x,y)$, the receiving party decodes the error-correcting code and attempts to recover $Q$.
    The combined operations of encoding, running $F_Q$ on each share of the code, then decoding define the new $f$-routing protocol $G_{Q'}$.
    
    We claim that $\epsilon=0.09$ correctness of each $F_Q$ implies $2^{-\Omega(k)}$ correctness of $G_{Q'}$. 
    The action of the $m$ instances of the $f$-routing protocol can be captured by the channel $\mathbfcal{E}=\otimes_{i=1}^m\mathbfcal{E}_i$, with correctness of each instance giving that 
    \begin{align}
        \Vert\mathbfcal{E}_i-\mathbfcal{I} \Vert_\diamond \leq \epsilon = 0.09 \,.
    \end{align}
    Now use \cref{eq:iidnoise} with parameters $\epsilon=0.09$, $t=\alpha m$. 
    Using that $\binom{m}{t+1} \leq 2^{m H_2 \left( (t+1) / m \right)}$ with $H_2(\cdot)$ the binary entropy function, we find that 
    \begin{align}
        \Vert\mathbfcal{D}\circ \mathbfcal{E}\circ \mathbfcal{U} - \mathbfcal{I} \Vert_\diamond < 
        C 2^{-\gamma m} \,,
    \end{align}
    with $C>0$ and $\gamma \approx 5.5 \times 10^{-3}$ when $\alpha = 0.495$, as we have assumed. 
    Since $m=\Theta(k)$, this gives the needed correctness. 

    Finally we note that the new $f$-routing protocol uses $m$ copies of the original protocol, with $m=O(k)$, so the communication and entanglement costs are increased by factors of $k$ as claimed. 
\end{proof}

Amplification for CDQS follows from the above along with \cref{thm:CDQSandfRouting} relating $f$-routing and $\CDQS$. 
We record this fact as the following theorem. 

\begin{theorem}
    Let $F_Q$ be a $\CDQS$ for a function $f$ that supports one qubit  secrets with correctness error $\delta=0.09$ and privacy error $\epsilon=0.09$, has communication cost $c$, and entanglement cost $E$. 
    Then for every integer $k$, there exists a $\CDQS$ $G_Q$ for $f$ with $k$-qubit secrets, privacy and correctness errors of $2^{-\Omega(k)}$, and communication and entanglement complexity of size $O(k c)$ and $O(k E)$, respectively. 
\end{theorem}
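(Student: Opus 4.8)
The plan is to prove the statement by composing the two results already in hand: the $f$-routing amplification of \cref{thm:fRamplification} and the two directions of \cref{thm:CDQSandfRouting} relating CDQS and $f$-routing. The guiding idea is that amplifying a CDQS directly is awkward because correctness (the $f(x,y)=1$ branch) and privacy (the $f(x,y)=0$ branch) are different kinds of guarantee, whereas after passing to the $f$-routing picture both become ``recovery'' statements of the same type, and recovery is precisely what the good quantum codes of \cref{remark:goodcodes} together with the i.i.d.-noise estimate of \cref{eq:iidnoise} are built to suppress.

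Concretely I would proceed in three steps. First, apply the second direction of \cref{thm:CDQSandfRouting} to $F_Q$: an $\epsilon$-correct, $\delta$-secure single-qubit CDQS with communication $c$ and entanglement $E$ becomes an $f$-routing protocol with recovery error $\max\{\epsilon,2\sqrt{\delta}\}$, entanglement $E$, and message size $4(c+E)$. Second, feed this $f$-routing protocol into \cref{thm:fRamplification}: encoding the $k$-qubit secret into a good code with $m=\Theta(k)$ physical qubits and running the one-qubit $f$-routing protocol on each share produces a $k$-qubit $f$-routing protocol whose recovery errors are $2^{-\Omega(k)}$, with communication and entanglement inflated by a factor $O(k)$. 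Third, apply the first direction of \cref{thm:CDQSandfRouting} to the amplified protocol: an $\eta$-correct $f$-routing protocol yields an $\eta$-correct, $2\sqrt{\eta}$-secure CDQS on the same number of qubits with the same resource and message sizes. Taking $\eta=2^{-\Omega(k)}$, the output CDQS $G_Q$ has both correctness and privacy errors $2^{-\Omega(k)}$, since $2\sqrt{2^{-\Omega(k)}}$ is again $2^{-\Omega(k)}$; its communication and entanglement are those of the intermediate $f$-routing protocol inflated by $O(k)$, namely $O(k(c+E))$ and $O(kE)$.

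The step I expect to be the crux is the parameter bookkeeping at the first conversion. The suppression in \cref{thm:fRamplification} only kicks in once the per-instance recovery error of the $f$-routing protocol lies below the threshold implicit in \cref{eq:iidnoise}; for the code parameters used there this threshold sits just below $0.1$, which is exactly the source of the requirement $\epsilon<0.1$ recorded in the footnote to \cref{eq:def_cds_f}. I would therefore check carefully that the intermediate recovery error $\max\{\epsilon,2\sqrt{\delta}\}$ generated in the first step actually falls inside the good-code regime, reducing the privacy parameter by a constant-overhead preprocessing step first if the $2\sqrt{\delta}$ term is the binding one. The final conversion, by contrast, is benign: its $2\sqrt{\cdot}$ loss is applied to a quantity already exponentially small in $k$. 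I would also note that a direct construction --- encoding the $k$-qubit secret into a good code and running $F_Q$ itself on each physical qubit, amplifying privacy through the purified (complementary-channel) view as in \cref{thm:newdecoupling} --- reproduces the same conclusion while avoiding the $4(c+E)$ message blow-up of the $f$-routing detour, and is therefore the cleaner route to the sharp $O(kc)$ communication bound stated in the theorem.
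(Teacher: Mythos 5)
Your proposal takes exactly the paper's route: the paper proves this theorem in a single sentence, by converting the CDQS to an $f$-routing protocol via \cref{thm:CDQSandfRouting}, amplifying via \cref{thm:fRamplification}, and converting back via \cref{thm:CDQSandfRouting}. What you add is the parameter audit at the first conversion, and your suspicion there is well founded --- indeed sharper than the paper's own treatment. With $\epsilon=\delta=0.09$, the intermediate $f$-routing protocol is only $\max\{\epsilon,2\sqrt{\delta}\}=0.6$-correct, and this is not merely above the $0.09$ assumed in \cref{thm:fRamplification}: since $e\cdot 0.6>1$, the bound $2\binom{m}{t+1}(e\epsilon)^{t+1}$ of \cref{eq:iidnoise} fails to decay for \emph{any} code rate, so no choice of $\alpha$ rescues the composition as written. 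Some constant-overhead preprocessing that first drives the privacy error below roughly $(0.09/2)^2\approx 2\times 10^{-3}$ (e.g., one-time-pad/XOR-sharing the secret across a constant number of independent instances, which multiplies the distinguishing advantages) is genuinely required, and the paper's one-line proof is silent on this; your fix is the right idea but should be spelled out. Your second observation is also correct: the $4(n_M+n_E)$ message blow-up in \cref{thm:CDQSandfRouting} means the detour yields communication $O(k(c+E))$ rather than the stated $O(kc)$, which coincide only when $E=O(c)$. One caution on your proposed ``cleaner'' direct route: tensoring does not amplify privacy (simulator errors add linearly across instances), and the complementary-channel view of \cref{thm:newdecoupling} converts $\delta=0.09$ privacy into a $2\sqrt{\delta}=0.6$ recovery error on the purifying side --- the same supra-threshold constant --- so the direct construction needs the same privacy-reduction preprocessing before the code argument of \cref{eq:iidnoise} applies, and it is not obviously cheaper than the detour once that is accounted for.
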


\begin{proof}\, 
Follows by using \cref{thm:CDQSandfRouting} to transform the CDQS into an $f$-routing protocol, applying the amplification result from \cref{thm:fRamplification}, then using \cref{thm:CDQSandfRouting} to turn the $f$-routing into a CDQS protocol again. 
\end{proof}

\section{Lower bounds}

\subsection{Lower bounds from one-way quantum communication complexity}\label{sec:onewayCC}

From \cite{gay2015communication}, we have the lower bound
\begin{align}
    \CDS(f) \geq \frac{1}{4} \log (R_{A\rightarrow B}(f) + R_{B\rightarrow A}(f)) \,,
\end{align}
so that the classical CDS communication cost is lower bounded by the log of the one-way communication complexity. 

We will prove a similar lower bound in the quantum setting. 
To do so, we rely on a reduction that involves Alice performing state tomography on the message system in a CDQS protocol. 
We make use of the following result on state tomography. 

\begin{theorem}[Reproduced from~\cite{ODonnell:2015lma}]\label{thm:tomography} Given $k=O(\log(1/\delta)d^2/\epsilon^2)$ copies of an unknown state $\rho$, there is a strategy that produces an estimator state $\hat{\rho}$ which is $\epsilon$ close to $\rho$ in trace distance with probability $1-\delta$.
\end{theorem}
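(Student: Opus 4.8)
The statement is the sample-optimal bound for full mixed-state tomography in trace distance, so the proof has to justify the $d^2/\epsilon^2$ scaling. Since strategies that measure the copies one at a time are known to be suboptimal (requiring order $d^3/\epsilon^2$ copies), the plan is to extract $\hat\rho$ from a single \emph{collective} measurement on all $k$ copies at once, following the representation-theoretic approach of Keyl and of O'Donnell--Wright: estimate the spectrum of $\rho$ and its eigenbasis separately, then recombine them.

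Concretely, I would invoke Schur--Weyl duality to decompose the $k$-copy space as $\mathcal{H}^{\otimes k}\cong\bigoplus_{\lambda}\mathcal{Q}_\lambda\otimes\mathcal{P}_\lambda$, the sum running over Young diagrams $\lambda$ with $k$ boxes and at most $d$ rows, where $\mathcal{Q}_\lambda$ carries an irrep of $U(d)$ and $\mathcal{P}_\lambda$ an irrep of the symmetric group $S_k$. Because $\rho^{\otimes k}$ commutes with the permutation action it is block-diagonal in this decomposition, so the first step (weak Schur sampling) is to measure the block label $\lambda$. The outcome follows the Schur--Weyl distribution with parameter $\mathrm{spec}(\rho)$, and the normalized row lengths $\bar\lambda=\lambda/k$ concentrate around the sorted eigenvalues $\alpha_1\ge\dots\ge\alpha_d$ of $\rho$. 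Taking $\hat\alpha=\bar\lambda$, the expected squared $\ell_2$-error is $O(d/k)$, so the induced $\ell_1$-error is $O(d/\sqrt{k})$, which is at most $\epsilon$ precisely when $k=\Omega(d^2/\epsilon^2)$ — already exhibiting the optimal scaling.

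To recover the eigenbasis I would then perform, inside the $U(d)$-factor $\mathcal{Q}_\lambda$, Keyl's covariant POVM built from the orbit of the highest-weight vector; its outcome is a unitary $\hat U$ estimating the eigenbasis that diagonalizes $\rho$, and I would output $\hat\rho=\hat U\,\mathrm{diag}(\bar\lambda)\,\hat U^\dagger$, projecting onto the density matrices if needed. The error then splits into the spectral error from the Schur--Weyl concentration and the rotation error from Keyl's measurement; combining them through a perturbation bound that converts small eigenvalue and eigenbasis errors into a bound on $\Vert\hat\rho-\rho\Vert_1$ shows the trace distance is at most $\epsilon$ with at least constant probability once $k=O(d^2/\epsilon^2)$. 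Finally, to reach confidence $1-\delta$ I would run $O(\log(1/\delta))$ independent blocks and aggregate the estimates with a trace-distance (geometric) median, which supplies the $\log(1/\delta)$ factor and yields the stated copy count.

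The main obstacle is entirely in the representation-theoretic estimates: proving that $\bar\lambda$ tracks $\mathrm{spec}(\rho)$ in $\ell_2$ to error $O(\sqrt{d/k})$ with exponentially small tails, that Keyl's measurement returns the eigenbasis to the matching accuracy, and that these two errors combine into a trace-distance guarantee without sacrificing the optimal $d^2$ dependence. The incoherent-measurement lower bound makes clear that the collective measurement is doing the essential work, so there is no shortcut through elementary single-copy averaging; the technical heart is controlling the joint measurement statistics on $\rho^{\otimes k}$.
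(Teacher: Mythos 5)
Your outline is accurate, but note that the paper itself gives no proof of this statement: it is imported verbatim from the cited reference (O'Donnell--Wright), and your sketch faithfully reproduces that reference's actual strategy — weak Schur sampling for the spectrum via Schur--Weyl duality, Keyl's covariant measurement for the eigenbasis, and standard median-of-means repetition to get the $\log(1/\delta)$ factor. Since the theorem is stated here purely as a citation, your proposal matches the intended provenance; the technical estimates you defer (concentration of $\bar\lambda$ around $\mathrm{spec}(\rho)$ and the analysis of Keyl's POVM) are exactly the content of that reference and are correctly identified as the heart of the argument.
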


Our reduction is to the quantum one-way communication complexity, with shared entanglement allowed.  
We define this next 
\begin{definition}[Quantum one-way communication complexity] Let $f:\{0,1\}^n\times \{0,1\}^n\rightarrow \{0,1\}$ and $\delta\in[0,1]$.
A one-way communication protocol $P_\delta$ for $f$ is defined as follows.
Alice receives $x \in\{0,1\}^n$ as input and produces quantum system $M_A$ as output, which she sends to Bob. 
Bob receives $y \in\{0,1\}^n$ and $M_A$, and outputs a bit $z$.
The protocol is $\delta$-correct if $\Pr[z=f(x,y)]\geq 1-\delta$. 

The quantum one-way communication complexity of $f$, $Q_{\delta,A\rightarrow B}(f)$ is defined as the minimum number of qubits in $M_A$ needed to achieve $\delta$-correctness. 
We write $Q_{A\rightarrow B}(f)\equiv Q_{\delta=0.09,A\rightarrow B}(f)$
Similarly, we can define $ Q_{\delta,A\rightarrow B}^{*}(f)$ where Alice and Bob are allowed pre-shared entanglement. 
\end{definition}

We are now ready to prove the main theorem of this section. 

\begin{theorem} \label{thm:oneway}
The one-way quantum communication complexity of $f$ and the communication cost of a $\CDQS$ protocol for $f$ are related by
\begin{align}
    \CDQS(f) = \Omega(\log Q_{B\rightarrow A}^{*}(f))\,.
\end{align} 
\end{theorem}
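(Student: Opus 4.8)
The plan is to convert a $\CDQS$ protocol for $f$ into a one-way protocol in which Bob sends Alice a (long) quantum message and Alice outputs $f(x,y)$, so that $Q^{*}_{B\rightarrow A}(f)$ is bounded by a function of the $\CDQS$ cost. Fix a protocol realizing $\CDQS(f)=n_{M_0}+n_{M_1}$ with $\epsilon=\delta=0.09$, and (by restricting to a two-dimensional subspace of the secret) assume the secret is a single qubit. In the one-way protocol Alice simultaneously plays her $\CDQS$ role and the referee: she prepares a maximally entangled pair $\Phi_{Ref Q}$, feeds $Q$ into the protocol as the secret and retains $Ref$. From $x$ and her half $L$ of a shared copy of the resource state $\Psi_{LR}$ she produces $M_0$ locally, so the only missing piece is Bob's message $M_1$.

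The central observation is that Alice can decide $f(x,y)$ from the joint state $\rho_{RefM}$ on the reference and the full referee message $M=M_0M_1$ \emph{without} knowing $y$ or the decoder: she only needs to detect whether $Ref$ is correlated with $M$. Take as statistic the product-state distance $\Vert\rho_{RefM}-\rho_{Ref}\otimes\rho_{M}\Vert_1$. When $f(x,y)=1$, correctness (\cref{def:CDQS}) supplies a decoder $\mathbfcal D^{x,y}_{M\rightarrow Q}$ mapping $\rho_{RefM}$ to within $\epsilon$ of $\Phi_{RefQ}$; since applying $\mathbfcal I_{Ref}\otimes\mathbfcal D^{x,y}$ cannot increase trace distance, and the maximally entangled qubit pair has trace distance at least a positive constant $c_0$ from every product state (with $c_0\geq 0.58$ by Fuchs--van de Graaf), a triangle inequality gives $\Vert\rho_{RefM}-\rho_{Ref}\otimes\rho_M\Vert_1\geq c_0-\epsilon$. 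When $f(x,y)=0$, security says $\rho_{RefM}$ is $\delta$-close to $\rho_{Ref}\otimes\sigma_M$ for the simulator output $\sigma_M$, so the same statistic is at most $2\delta$. With $\epsilon=\delta=0.09$ one has $c_0-\epsilon>2\delta$, a constant gap, and the statistic is Lipschitz in $\rho_{RefM}$ under the trace norm, so estimating $\rho_{RefM}$ to a small constant precision distinguishes the two cases.

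To obtain such an estimate Alice runs tomography. Since Bob cannot hand over the quantum system $M_1$ as classical data, he and Alice instead pre-share $k$ independent copies of $\Psi_{LR}$; on the $i$-th copy Bob produces $M_1^{(i)}$ from $y$ and $R^{(i)}$ and sends it, while Alice generates the matching $Ref^{(i)}M_0^{(i)}$ locally. This yields $k$ i.i.d.\ copies of $\rho_{RefM}$, on which Alice performs state tomography via \cref{thm:tomography}. The state lives in dimension $d=2\cdot 2^{n_{M_0}+n_{M_1}}=2^{O(\CDQS(f))}$, so $k=O(d^2)=2^{O(\CDQS(f))}$ copies give the required constant precision with constant failure probability, making the protocol $0.09$-correct. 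The total communication is $k\,n_{M_1}=2^{O(\CDQS(f))}$, hence $\log Q^{*}_{B\rightarrow A}(f)=O(\CDQS(f))$, which is the claim. (The symmetric reduction, where Alice instead transmits $M_0$ together with $Ref$ and Bob plays the referee, yields the $A\rightarrow B$ version quoted in the introduction.)

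The main work, and the only genuinely delicate step, is the separation analysis of the second paragraph: the $f=1$ lower bound through a data-processing and triangle-inequality argument from correctness, the $f=0$ upper bound from security together with closeness of the marginals $\rho_M$ and $\sigma_M$, and the Lipschitz estimate fixing how small a constant tomography precision is needed. The reduction and copy-counting are then routine, with the pre-shared entanglement of the starred one-way model exactly supplying the $k$ copies of $\Psi_{LR}$.
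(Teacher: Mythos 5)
Your proposal is correct and follows essentially the same route as the paper's proof: purify the secret with a reference, reduce to distinguishing $\rho_{\bar{Q}M}$ from a product state (with the $f=1$ separation via the decoder, data processing, and the Fuchs--van de Graaf bound $2(1-1/\sqrt{d_Q})-\epsilon\approx 0.496$, and the $f=0$ bound from the simulator), then run tomography on $k=O(d^2)=2^{O(\CDQS(f))}$ copies obtained from pre-shared copies of $\Psi_{LR}$. Your minor variations --- comparing to $\rho_{Ref}\otimes\rho_M$ at a cost of a factor of $2$ in the $f=0$ bound, and explicitly restricting the secret to a qubit subspace --- do not change the argument or the resulting bound.
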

\begin{proof}\, Beginning with a CDQS protocol, we will build a one-way quantum communication protocol. 
In the CDQS, we let Bob's output system be called $M_1$ and Alice's output be $M_0$, and label $M_0M_1=M$. 
To define the one-way protocol, we have Alice and Bob share $k$ copies of the resource system for the CDQS, and repeat the operations they would implement in the CDQS protocol on each copy. 
Concretely, Alice inputs the $Q$ subsystem of a maximally entangled state $\Psi^+_{\bar{Q}Q}$ to the CDQS protocol, and stores the $\bar{Q}$ system.  
In each of the $k$ instances, Bob takes his output $M_1$ and sends it to Alice. 
Alice then performs tomography on her $k$ copies of the state which, we claim, allows her to determine the value of $f(x,y)$ with high probability. 
 
To see why, observe that if $f(x,y)=0$, then $\bar{Q}$ is decoupled from the message system $M$, at least approximately. 
In particular the security statement of the CDQS implies there is a channel $\mathbfcal{S}^{xy}_{\varnothing \rightarrow M}$ such that
\begin{align}
\delta \geq \Vert\mathbfcal{S}_{\varnothing \rightarrow M}^{xy}\circ \tr_Q(\Psi^+_{\bar{Q}Q}) - \mathbfcal{N}^{xy}_{Q\rightarrow M} (\Psi^+_{\bar{Q}Q})\Vert_1 = \left\Vert \frac{\mathcal{I}_{\bar{Q}}}{2}\otimes \sigma_{M} - \rho_{\bar{Q}M}\right\Vert_1 \,.
\end{align}
Meanwhile, if $f(x,y)=1$, we have that there exists a recovery channel $\mathbfcal{D}^{x,y}_{M\rightarrow Q}$ such that
\begin{align}
    \Vert\mathbfcal{D}^{x,y}_{M\rightarrow Q} (\rho_{\bar{Q}M}) -\Psi^+_{\bar{Q}Q}\Vert_1 \leq \epsilon \,,
\end{align}
so that, for any product state $\sigma_{\bar{Q}}\otimes \sigma_{M}$
\begin{align}
    \Vert\rho_{\bar{Q}M} - \sigma_{\bar{Q}}\otimes \sigma_{M} \Vert_1 &\geq \Vert\mathbfcal{D}^{x,y}_{M\rightarrow Q} (\rho_{\bar{Q}M}) - \sigma_{\bar{Q}}\otimes \mathbfcal{D}^{x,y}_{M\rightarrow Q}(\sigma_{M})\Vert_1 \nonumber \\
    &\geq \Vert\Psi^+_{\bar{Q}Q} - \sigma_{\bar{Q}}\otimes \sigma_{Q}'\Vert_1 - \epsilon \nonumber \\
    &\geq 2\left(1 - \frac{1}{\sqrt{d_Q}}\right) - \epsilon \,,
\end{align}
where the last line follows by upper bounding the fidelity of $\Psi^+_{\bar{Q}Q}$ with any product state and then applying Fuchs--van de Graaf inequality. 
Using $\epsilon=0.09$ and $d_Q=2$, the lower bound is $\approx 0.496$. 
Summarizing, we have that for $\epsilon=\delta=0.09$, the one-norm distance from the product state is less than $0.09$ if $f(x,y)=0$ and larger than $0.496$ if $f(x,y)=1$. 
Consequently, if Alice can determine $\rho$ to within constant error from her samples she can determine the value of $f(x,y)$. 

To do this, Alice applies the tomography protocol of \cref{thm:tomography} to her $k$ samples. 
Using $O(\log(1/\delta)d^2/\tilde{\epsilon}^2)$ samples, she can with probability $1-\delta$ determine a density matrix $\hat{\rho}$ which is guaranteed to be $\tilde{\epsilon}$ close to $\rho$. 
Taking $\tilde{\epsilon}$ small enough ensures $\hat{\rho}'$ is small enough to distinguish if $\rho$ is within $\epsilon=0.09$ in trace distance to product, or further than $\epsilon=0.496$ away from product, so that Alice can determine $f(x,y)$ with probability $1-\delta$.  

Since $\tilde{\epsilon}$ is a constant, this one-way quantum communication protocol uses $k=O(d^2)$ qubits of message, where $d$ is the dimension of Bob's message. 
In terms of the CDQS cost, this leads to
\begin{align}
     O(2^{2\CDQS(f)}) = Q^*_{ B\rightarrow A}(f) \,.
\end{align}
Equivalently, 
\begin{align}
    \CDQS(f) = \Omega (\log Q^*_{B\rightarrow A}(f)) \,.
\end{align}
\end{proof}

We can also reduce to a setting without shared entanglement, at the expense of now bounding the sum of the entanglement and communication used in the CDQS and allowing two-way communication in the communication scenario.
In particular, we can have Alice locally prepare the entangled state used in the CDQS and send Bob's share to him, then have Bob apply his first round CDQS operation and send the output back to Alice.
Using this reduction we obtain
\begin{align}
    \overline{\CDQS}(f) = \Omega(\log Q_\delta(f)) \,.
\end{align}
For the inner product function, \cite{cleve1998quantum} gives a linear lower bound on $Q^*_{B\rightarrow A}$. 
As another variant, we also note in \cref{sec:CliffordCDQS} that if we restrict to CDQS protocols that implement only Clifford operations, then we obtain a lower bound of $\Omega(\sqrt{Q^*_{A\rightarrow B}})$ on the communication plus entanglement cost of CDQS.
This seems to be a quantum analogue of the classical result giving a linear lower bound from communication complexity for a class of CDS protocols known as linear protocols \cite{gay2015communication}. 
This also has the implication that CDQS protocols that saturate the $\log Q^*_{A\rightarrow B}$ lower bound must use non-Clifford operations.\footnote{In future work, it would be interesting to explore this or similar observations as a lower bound technique for the number of non-Clifford operations needed in certain quantum operations.}

\subsection*{Tightness of lower bound from quantum communication complexity}

Here, we show that the lower bound derived in \cref{thm:oneway} in terms of the communication complexity is tight.
\cite{applebaum2017conditional} demonstrated this in the classical setting by finding a function with CDS cost upper bounded by $O(\log n)$ and with linear one-way communication complexity. 
We show in this section that the same function has linear quantum one-way communication complexity. 
It also immediately inherits the logarithmic upper bound on CDQS from the classical CDS protocol. 

Before stating the lower bound, we recall the concept of $\epsilon$-approximate degree of a function. 
\begin{definition}
    Let $f:\{0,1\}^n\to \{0,1\}$ be a Boolean function. 
    The $\epsilon$-approximate degree of $f$, denoted by $\mathrm{deg}_{\epsilon}(f)$, is defined as a minimum degree of a polynomial function $p:\{0,1\}^n \to \mathbb{R}$ satisfying 
    \begin{equation}
        \max_{x \in \{0,1\}^n} \left|f(x)-p(x)\right| \leq \epsilon \,. 
    \end{equation}
\end{definition}
From \cite[Theorem 1.1]{Sherstov2008pattern_journal}, we have the following lemma, which sets a lower bound on the quantum communication complexity of a class of functions. 
\begin{lemma}
~\label{lem:lower_bound}
Let $m$, $l$ be positive integers. Let $f': \{0,1\}^m \to \{0,1\}$ be a Boolean function. 
Define $f:\{0,1\}^{ml} \times \{0,1,\ldots,l-1\}^{m} \to \{0,1\}$ in the following way. 
Given an input $(x,y) \in \{0,1\}^{ml} \times \{0,1,\ldots,l-1\}^{m}$, divide $x$ into $m$ length-$l$ blocks. For each $0 \leq i \leq m-1$, choose the bit $x_{i,y_i}$ where $y_i$ is the $i$ th letter of $y$. 
Let $x_y$ be the resulting bitstring, and we define
\begin{equation}
    f(x,y) \coloneqq f'(x_y) \,. 
\end{equation}
Then, for any $\epsilon \in [0,1)$ and any $\delta \in [0,\epsilon/2)$, we have
\begin{align}
    Q^*_{\delta}(f) \geq \frac{1}{4}\mathrm{deg}_{\epsilon}(f') \log l - \frac{1}{2}\log \left(\frac{3}{\epsilon - 2\delta}\right) \,. 
\end{align}
\end{lemma}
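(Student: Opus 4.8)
The plan is to recognize this statement as an instance of Sherstov's pattern matrix method and to derive it through the generalized discrepancy method, which is exactly the tool that produces quantum lower bounds robust to pre-shared entanglement. The function $f$ is precisely the composition of $f'$ with a selection (indexing) gadget: Bob's letter $y_i\in\{0,\dots,l-1\}$ picks out one of the $l$ bits in the $i$-th block of Alice's string $x$, and $f'$ is then evaluated on the $m$ selected bits. My first step would be to pass to the $\pm 1$ encoding of $f'$ and invoke linear-programming duality for approximate degree: since $\mathrm{deg}_{\epsilon}(f')=d$, there exists a dual witness $\psi:\{0,1\}^m\to\mathbb{R}$ with $\sum_v|\psi(v)|=1$, correlation $\sum_v\psi(v)f'(v)>\epsilon$, and $\hat{\psi}(S)=0$ for every $|S|<d$. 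This orthogonality to low-degree monomials is the quantitative input that the $\log l$ factor will exploit.

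Next I would lift $\psi$ to a sign-witness $\Psi$ on the communication matrix $M_f$ (rows indexed by $x$, columns by $y$) using the tensor structure of the selection gadget across the $m$ independent blocks. The relevant spectral quantity factorizes block-by-block: a single block of size $l$ contributes a spectral factor of order $l^{-1/2}$ per unit of Fourier degree, so the orthogonality of $\psi$ to all monomials of degree $<d$ forces the surviving (high-degree) part of the witness to have spectral norm at most $l^{-d/2}$, up to the normalization of $\psi$. Concretely, I would establish the two estimates at the heart of the method: $\langle\Psi,M_f\rangle$ remains bounded below by a constant multiple of the correlation $\epsilon$, while $\|\Psi\|\le l^{-d/2}$ in the spectral-dual norm.

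Finally I would feed these two estimates into the generalized discrepancy method for entanglement-assisted quantum communication: a sign-witness $\Psi$ that correlates with $M_f$ by more than $\epsilon$ while having spectral norm at most $l^{-d/2}$ forces any $\delta$-error protocol with shared entanglement to use at least roughly $\tfrac12\log(1/\|\Psi\|)-\tfrac12\log(3/(\epsilon-2\delta))$ qubits, and substituting $\|\Psi\|\le l^{-d/2}$ yields $\tfrac14\,\mathrm{deg}_{\epsilon}(f')\log l-\tfrac12\log(3/(\epsilon-2\delta))$, the claimed bound. The main obstacle is the spectral analysis in the second step, namely controlling the singular values of the pattern matrix and verifying that an alphabet of size $l$ really suppresses the low-degree contribution by $l^{-d/2}$; this is the technical core of the pattern matrix method. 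Since this is exactly the content of Theorem~1.1 of \cite{Sherstov2008pattern_journal}, the cleanest route is to check that our $f$ matches the pattern construction (identifying the number of blocks with $m$, the block size with $l$, and the parameters with $\epsilon$ and $\delta$) and then invoke that theorem directly, noting that the bound applies to the entanglement-assisted model $Q^*_\delta$ precisely because the generalized discrepancy method is insensitive to prior shared entanglement.
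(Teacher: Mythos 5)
Your proposal matches the paper's treatment: the paper offers no independent proof of this lemma and simply invokes Theorem~1.1 of \cite{Sherstov2008pattern_journal}, and your sketch of the internals (LP-dual witness for approximate degree, the $l^{-1/2}$-per-degree spectral bound on the lifted witness, and the generalized discrepancy method, which is indeed insensitive to shared entanglement) is an accurate summary of how that theorem is actually proved. One caveat, which applies equally to the paper's own statement: Sherstov's pattern matrix carries an XOR mask $w\in\{0,1\}^m$ on top of the selection gadget, so your final step of ``checking that $f$ matches the pattern construction'' would reveal a mismatch that must be repaired, e.g.\ by embedding the masked gadget with blocks of size $l/2$ inside the unmasked one via complement-structured strings (costing an additive $\tfrac{1}{4}\deg_\epsilon(f')$ in the bound), which leaves the $\Omega(n^{1/3})$ application to $\mathsf{PCol}_n$ intact.
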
 
We apply \cref{lem:lower_bound} to the collision problem defined below. 
\begin{definition}
    The Collision Problem $\mathsf{Col}_n:\{0,1\}^{n\log n} \to \{0,1\}$ is a promise problem defined as follows. 
    Given an input $x \in \{0,1\}^{n\log n}$, divide $x$ into $n$ blocks of $\log n$ bits each. 
    For each $x$, define a function $f_x : \{0,1\}^{\log n} \to \{0,1\}^{\log n}$, where $f_x(i)$ is the $i$-th block of $x$. 
    Then, we define 
    \begin{equation}
        \mathsf{Col}_n(x) = \begin{cases}
            1 & f_x\,\,\mathrm{is\,\,a\,\,permutation}, \\ 
            0 &f_x \,\,\mathrm{is\,\,two\mhyphen to \mhyphen one}. 
        \end{cases}
    \end{equation}
     If $f_x$ is neither a permutation nor two-to-one, $\mathsf{Col}_n$ is undefined.
\end{definition}

We further define a variant of the collision problem. 
\begin{definition}
The problem $\mathsf{PCol}_n:\{0,1\}^{4n\log n} \times \{0,1,2,3\}^{n\log n} \to \{0,1\}$ is defined as follows. 
Given an input $(x,y) \in \{0,1\}^{4n\log n} \times \{0,1,2,3\}^{n\log n}$, divide $x$ into $n\log n$ length-$4$ blocks. 
For each $0 \leq i \leq n\log n-1$, choose the bit $x_{i,y_i}$ where $y_i$ is the $i$ th letter of $y$ 
and let $x_y$ be the resulting bitstring. 
We define
\begin{equation}
    \mathsf{PCol}_n(x,y) \coloneqq \mathsf{Col}_n(x_y) \,. 
\end{equation}
\end{definition}

By taking $m = n\log n$, $l = 4$, and $f' = \mathsf{Col}_n$ in \cref{lem:lower_bound}, we have the following proposition. 
\begin{proposition}
    For any $\epsilon \in [0,1)$ and any $\delta \in [0,\epsilon/2)$, we have 
\begin{align}
    Q^*_{\delta}(\mathsf{PCol}_n) \geq \frac{1}{2}\mathrm{deg}_{\epsilon}(\mathsf{Col}_n) - \frac{1}{2}\log \left(\frac{3}{\epsilon - 2\delta}\right) \,. 
\end{align}
\end{proposition}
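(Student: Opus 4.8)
The plan is to directly instantiate \cref{lem:lower_bound} with the parameters $m = n\log n$, $l = 4$, and $f' = \mathsf{Col}_n$, and then simplify the resulting expression. The first step is to confirm that $\mathsf{PCol}_n$ is \emph{precisely} the function $f$ produced by the construction in \cref{lem:lower_bound} under this choice of parameters, so that the lemma's conclusion transfers with no further argument. Matching signatures, the lemma's $f$ has domain $\{0,1\}^{ml}\times\{0,1,\dots,l-1\}^m$, which for $m=n\log n$ and $l=4$ reads $\{0,1\}^{4n\log n}\times\{0,1,2,3\}^{n\log n}$, exactly the domain of $\mathsf{PCol}_n$. The block-selection rule is also identical: in both cases one divides $x$ into $n\log n$ length-$4$ blocks, selects the bit $x_{i,y_i}$ to assemble $x_y$, and evaluates $f'=\mathsf{Col}_n$ on $x_y$. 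Hence $\mathsf{PCol}_n = f$ verbatim.

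The second step is the arithmetic substitution. Plugging $l=4$ into the leading term $\tfrac14\,\mathrm{deg}_\epsilon(f')\log l$ and using base-$2$ logarithms gives $\log 4 = 2$, so the coefficient collapses to $\tfrac14\cdot 2 = \tfrac12$, producing $\tfrac12\,\mathrm{deg}_\epsilon(\mathsf{Col}_n)$; the additive correction $-\tfrac12\log\!\big(\tfrac{3}{\epsilon-2\delta}\big)$ carries over unchanged. Since \cref{lem:lower_bound} already imposes $\epsilon\in[0,1)$ and $\delta\in[0,\epsilon/2)$, these coincide exactly with the hypotheses of the proposition, and the claimed bound follows immediately.

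The only point requiring genuine care is that $\mathsf{Col}_n$ is a promise problem rather than a total Boolean function, whereas \cref{lem:lower_bound} is phrased for $f':\{0,1\}^m\to\{0,1\}$. Here I would observe that the pattern-matrix lower bound of \cite{Sherstov2008pattern_journal} applies equally to partial functions, provided $\mathrm{deg}_\epsilon(\mathsf{Col}_n)$ is interpreted as the $\epsilon$-approximate degree taken over the promise set (the permutations and two-to-one functions). With this reading the instantiation is legitimate and the quantity on the right-hand side is well defined. Beyond confirming these definitional matches there is no hard step to flag: the proof is a direct specialization of \cref{lem:lower_bound}.
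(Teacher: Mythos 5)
Your proof is correct and takes exactly the paper's approach: the paper likewise obtains the proposition by instantiating \cref{lem:lower_bound} with $m = n\log n$, $l = 4$, $f' = \mathsf{Col}_n$, where $\log 4 = 2$ turns the coefficient $\tfrac14\log l$ into $\tfrac12$. Your added remark that $\mathsf{Col}_n$ is a promise problem, with $\mathrm{deg}_\epsilon$ read over the promise set, is a definitional point the paper leaves implicit rather than a divergence in method.
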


From \cite{ambainis2005polynomial,kutin2005quantumlower}, we get that $\mathsf{PCol}_n$ 
has
\begin{align}
    \mathrm{deg}_{\epsilon}(\mathsf{Col}_n) =\Omega(n^{1/3}) \,,
\end{align}
so we get a polynomial lower bound on $Q^*_{\delta}(\mathsf{PCol}_n)$.
From our lower bound from quantum communication complexity (\cref{thm:oneway}), 
we have 
\begin{equation}
    \CDQS(\mathsf{PCol}_n) = \Omega(\log n) \,. 
\end{equation}
On the other hand, \cite{applebaum2017conditional} also puts an upper bound of $O(\log n)$ on $\CDS(\mathsf{PCol}_n)$, which puts an upper bound of $O(\log n)$ on $\CDQS(\mathsf{PCol}_n)$.
This gives matching upper and lower bounds on $\mathsf{PCol}_n$, and shows that we cannot get a super-logarithmic lower bound on CDQS in terms of $Q^
*$.  

\subsection{Lower bounds on perfectly private CDQS from \texorpdfstring{$\PP^{\cc}$}{TEXT}}

In this section we lower bound perfectly private CDQS in terms of the $\PP^{\cc}$ communication complexity. 

\begin{definition}[$\PP^{\cc}$]\label{def:PPcc}
A randomized communication protocol $\Pi$ involves two parties, Alice, who has input $x\in X$ and Bob who has input $y\in Y$. We say that $\Pi$ is a $\PP^{\cc}$ protocol for $f$ if, for every input $(x,y)$ the protocol outputs $f(x,y)$ with probability larger than or equal to $1/2+\beta$ with $\beta>0$. The cost of $\Pi$ is the total number of bits of communication $c$ plus the log of the inverse bias, $c+\log(1/\beta)$. The $\PP^{\cc}$ complexity of $f$, denoted by $\PP^{\cc}(f)$, is the minimum cost of any such protocol for $f$. 
\end{definition}

\begin{definition}[$\QPP^{\cc}$]\label{def:QPPcc}
A $\QPP^{\cc}$ protocol proceeds as a $\PP^{\cc}$ protocol, but now allows quantum messages. 
The $\QPP^{\cc}$ cost of a protocol is defined as the number of qubits of message exchanged plus the log of the inverse bias. 
We define $\QPP^{\cc}(f)$ as the minimal cost over all such protocols for the function $f$. 
We also consider allowing pre-shared entanglement, in which case we label the protocol a $\QPP^{*,\cc}$ protocol and the cost by $\QPP^{*,\cc}(f)$. 
\end{definition}
It was proven in \cite{klauck2001lower} (section 8) that $\PP^{\cc}(f)=\Theta(\QPP^{\cc}(f))$. 

The classical analogue of the lower bound that we would like to establish, proven in \cite{applebaum2021placing}, is as follows. 

\begin{theorem}[Reproduced from \cite{applebaum2021placing}] For every predicate $f : \{0, 1\}^{n} \times \{0, 1\}^{n} \rightarrow \{0, 1 \}$, 
\begin{align}
    \pp\CDS(f) \geq \Omega(\PP^{\cc}(f)) - O(\log(n))  \,.
\end{align}
\end{theorem}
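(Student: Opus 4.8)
The plan is to establish the equivalent upper bound $\PP^{\cc}(f) \le \pp\CDS(f) + O(\log n)$, which rearranges to the claim, by compiling any perfectly private CDS for $f$ into a $\PP^{\cc}$ protocol. Fix a perfectly private ($\delta=0$) and $\epsilon$-correct CDS for $f$ with a single-bit secret $z\in\{0,1\}$, shared randomness $r$, and messages $m_0(x,z,r)$ and $m_1(y,r)$ of total length $t=\pp\CDS(f)$. For a fixed input pair $(x,y)$ let $P_z$ denote the distribution of the joint message $\mu=(m_0,m_1)$ when the secret bit is $z$. The two regimes are sharply separated: when $f(x,y)=1$, $\epsilon$-correctness means the referee can decode $z$, forcing $\Vert P_0-P_1\Vert_1 \ge 2(1-2\epsilon)$; when $f(x,y)=0$, perfect privacy gives $P_0=P_1$ exactly.

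The obstacle to overcome is that the natural test---run the decoder and check whether it returns $z$---needs both $x$ and $y$ at one location, costing up to $n$ bits of communication. I would avoid this by replacing the decoder with the squared $\ell_2$ statistic $A_{x,y}\defi \sum_\mu (P_0(\mu)-P_1(\mu))^2$, which equals $0$ when $f(x,y)=0$ and, since there are at most $2^t$ joint messages, is at least $4(1-2\epsilon)^2 2^{-t}$ when $f(x,y)=1$ by Cauchy--Schwarz. The point is that $A_{x,y}$ admits a \emph{local} two-sample collision estimator. Using (free) public coins, sample two independent copies $(z,r)$ and $(z',r')$; Alice computes the bit $a=\mathbf{1}[m_0(x,z,r)=m_0(x,z',r')]$ and Bob computes $b=\mathbf{1}[m_1(y,r)=m_1(y,r')]$, each entirely locally. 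Writing $T=(-1)^{z+z'}\,a\,b\in\{-1,0,1\}$, the product $ab$ is exactly the collision indicator of the two sampled messages, so $\mathbb{E}[T]=\tfrac14 A_{x,y}$, and $T$ is evaluated with a single bit of communication (Alice sends $a$, and the output party ANDs with $b$ and applies the public sign $(-1)^{z+z'}$).

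Finally I would convert this biased estimator into a $\PP^{\cc}$ protocol: output $1$ with probability $\tfrac{1+T}{2}$, then nudge by a fixed tiny probability toward $0$ to break the tie in the $f(x,y)=0$ case. This yields an output agreeing with $f(x,y)$ with probability $\tfrac12+\beta$ where $\beta=\Theta(2^{-t})$; the communication is $O(1)$ and the bias charge is $\log(1/\beta)=t+O(1)$, so the public-coin cost is $\pp\CDS(f)+O(1)$. The remaining, and main, technical point is that this protocol spends free public randomness to sample $(r,z,r',z')$, whereas $\PP^{\cc}$ is charged for private coins; removing the public coins by a Newman-style argument costs an additive $O(\log n)$, which is precisely the source of the $-O(\log n)$ term. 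Combining the steps gives $\PP^{\cc}(f)\le \pp\CDS(f)+O(\log n)$, i.e.\ $\pp\CDS(f)\ge \Omega(\PP^{\cc}(f))-O(\log n)$.
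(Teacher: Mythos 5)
Your proof is correct and follows essentially the same route as the cited source of this theorem (and as this paper's quantum analogue, \cref{thm:PPlowerbound}): your two-sample collision statistic $T$ with $\mathbb{E}[T]=\tfrac{1}{4}\Vert P_0-P_1\Vert_2^2$ is exactly a re-derivation of \cref{lemma:biasalgorithm}, and the remaining ingredients --- the separation $\Vert P_0-P_1\Vert_1\ge 2(1-2\epsilon)$ versus $P_0=P_1$, the Cauchy--Schwarz bound $\Vert P_0-P_1\Vert_2^2=\Omega(2^{-t})$ over a support of size $2^t$, the small nudge toward $0$ to repair the one-sided bias, and Newman's theorem for the public-to-private conversion --- all match. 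The only caveat worth recording is that derandomizing a protocol whose bias is $\Theta(2^{-t})$ via Newman requires an index of length $O(\log n + t)$ rather than $O(\log n)$ (the approximation error must be kept below the bias), but since the extra $O(t)$ is absorbed by the multiplicative constant in $\Omega(\PP^{\cc}(f))$, your conclusion $\pp\CDS(f)\ge \Omega(\PP^{\cc}(f))-O(\log n)$ is unaffected.
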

Because $\PP^{\cc}(f)=\Theta(\QPP^{\cc}(f))$, we can again hope for a lower bound in terms of $\PP^{\cc}$ in the quantum setting. 
We will look therefore for a bound similar to the above but with the CDQS cost replacing the CDS cost. 

To this end, we make use of the following lemma, also invoked and proven in \cite{applebaum2021placing}.  
\begin{lemma}[Reproduced from \cite{applebaum2021placing}]\label{lemma:biasalgorithm}
    There exists a randomized algorithm A that given oracle access to a distribution $D_0$ and a distribution $D_1$ outputs 1 with probability exactly $1/2+ \Vert D_0-D_1\Vert_2^2/8$. Moreover, the algorithm uses three random bits and makes two non-adaptive queries to the oracles. 
\end{lemma}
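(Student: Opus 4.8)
The plan is to rewrite $\Vert D_0 - D_1\Vert_2^2$ as a signed combination of \emph{collision probabilities}, each of which is exactly the success probability of a one-shot test using two samples, and then to stitch these tests together with a fair-coin branch so that the overall acceptance probability matches the target bias exactly while respecting the budget of three random bits and two non-adaptive queries.

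First I would expand, over the shared domain $u$,
\begin{align}
    \Vert D_0 - D_1\Vert_2^2 = \sum_u \big(D_0(u)-D_1(u)\big)^2 = p_{00} + p_{11} - 2p_{01}\,,
\end{align}
where $p_{ij} := \sum_u D_i(u)D_j(u)$. The key observation is that each $p_{ij}$ is a collision probability accessible with exactly two samples: $p_{00}$ is the probability that two independent draws from $D_0$ agree, $p_{11}$ the analogous quantity for $D_1$, and $p_{01}$ the probability that a draw from $D_0$ equals a draw from $D_1$. Thus a single test that queries $(D_0,D_0)$, $(D_1,D_1)$, or $(D_0,D_1)$ and outputs whether the two samples collide (or do not collide) has acceptance probability $p_{00}$, $p_{11}$, $p_{01}$ (or $1-p_{01}$), respectively.

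Next I would combine four branches. With probability $1/8$ query $(D_0,D_0)$ and accept iff the samples collide; with probability $1/8$ query $(D_1,D_1)$ and accept iff they collide; with probability $1/4$ query $(D_0,D_1)$ and accept iff they do \emph{not} collide; and with the remaining probability $1/2$ output an independent fair coin. Summing the contributions,
\begin{align}
    \Pr[\text{accept}] = \tfrac18 p_{00} + \tfrac18 p_{11} + \tfrac14\big(1-p_{01}\big) + \tfrac12\cdot\tfrac12 = \tfrac12 + \tfrac18\big(p_{00}+p_{11}-2p_{01}\big)\,,
\end{align}
which is exactly $1/2 + \Vert D_0-D_1\Vert_2^2/8$.

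Finally I would verify the resource count. The branch weights $(\tfrac18,\tfrac18,\tfrac14,\tfrac12)$ and the fair coin are all dyadic with denominator $8$, so three uniform bits realize everything: assign one of the eight equiprobable outcomes to branch one, one to branch two, two to branch three, and the last four to branch four with two labelled accept and two labelled reject. Each branch issues exactly two oracle queries (in the fair-coin branch the two samples are simply discarded), and since the choice of which oracles to query is a function of the random bits alone, the queries are non-adaptive. I expect the main obstacle to be purely the bookkeeping needed to hit the bias \emph{exactly}: one must arrange the constant terms to sum to precisely $1/2$, which is why the $(D_0,D_1)$ test is set to fire on non-collision and is balanced against the fair-coin branch, and any sign or normalization slip in matching the coefficients of $p_{00},p_{11},p_{01}$ against $\tfrac18,\tfrac18,-\tfrac14$ would destroy the exactness.
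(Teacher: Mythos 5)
Your proof is correct and is essentially the standard argument behind this lemma (which the paper itself does not reprove, deferring to \cite{applebaum2021placing}): expand $\Vert D_0-D_1\Vert_2^2 = p_{00}+p_{11}-2p_{01}$ into collision probabilities, estimate each with a two-sample collision test, and pad with a fair coin so the acceptance probability is exactly $1/2+\Vert D_0-D_1\Vert_2^2/8$. Your mixture over branches with dyadic weights $(\tfrac18,\tfrac18,\tfrac14,\tfrac12)$ realized by three uniform bits is an equivalent repackaging of the reference's algorithm (which draws $b_1,b_2$ to pick the oracles, queries $D_{b_1},D_{b_2}$, and uses the third bit as a tie-breaking coin), and it matches the paper's remark that the two non-adaptive calls go to $(D_0,D_0)$, $(D_1,D_1)$, or one to each.
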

Note that based on the random bits, the oracle calls can be either both to $D_0$, both to $D_1$, or one call to each. 

\begin{theorem}\label{thm:PPlowerbound} The communication cost of $\CDQS$ and the $\PP^{\cc}$ complexity are related by
\begin{align}\label{eq:PPlowerbound}
 \pp\overline{\CDQS}(f)=\Omega(\PP^{\cc}(f)) \,.
\end{align}
\end{theorem}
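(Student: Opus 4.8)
The plan is to turn a perfectly private CDQS protocol for $f$ into a $\QPP^{\cc}$ protocol of comparable cost, and then invoke $\PP^{\cc}(f)=\Theta(\QPP^{\cc}(f))$ to conclude. Fix a CDQS protocol with $\delta=0$ and $\epsilon=0.09$, resource state $\Psi_{LR}$, and messages $M_0,M_1$, so that $\overline{\CDQS}(f)=n_L+n_R+n_{M_0}+n_{M_1}$. Let $\mathbfcal{N}^{x,y}_{Q\rightarrow M}$ be the joint message channel and set $\sigma^{(b)}_M \defi \mathbfcal{N}^{x,y}_{Q\rightarrow M}(\ket{b}\bra{b})$ for the two orthogonal single-qubit inputs $b\in\{0,1\}$. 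The whole reduction rests on the fact that these two message states are identical exactly when $f(x,y)=0$ and far apart when $f(x,y)=1$.

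For the separation: when $f(x,y)=0$, perfect privacy gives a simulator with $\mathbfcal{S}^{x,y}_{\varnothing\rightarrow M}\circ\tr_Q = \mathbfcal{N}^{x,y}_{Q\rightarrow M}$, so the output is independent of the input and $\sigma^{(0)}_M=\sigma^{(1)}_M$. When $f(x,y)=1$, correctness supplies a decoder with $\Vert \mathbfcal{D}^{x,y}_{M\rightarrow Q}\circ\mathbfcal{N}^{x,y}_{Q\rightarrow M}-\mathbfcal{I}\Vert_\diamond\le\epsilon$; monotonicity of the trace norm under $\mathbfcal{D}^{x,y}$ together with the triangle inequality then gives $\Vert\sigma^{(0)}_M-\sigma^{(1)}_M\Vert_1 \ge 2-2\epsilon$. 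I would next convert this to a Hilbert--Schmidt statement using $\Vert X\Vert_2\ge \Vert X\Vert_1/\sqrt{\operatorname{rank}X}$ and $\operatorname{rank}(\sigma^{(0)}_M-\sigma^{(1)}_M)\le d_M=2^{n_{M_0}+n_{M_1}}$, yielding $\Vert\sigma^{(0)}_M-\sigma^{(1)}_M\Vert_2^2 \ge (2-2\epsilon)^2/d_M$ when $f=1$, versus $0$ when $f=0$. Note the decoder is used only to certify this gap; it is never applied in the protocol, which sidesteps the fact that $\mathbfcal{D}^{x,y}$ depends jointly on $x$ and $y$.

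The $\QPP^{\cc}$ protocol then estimates this Hilbert--Schmidt distance by the SWAP test, the quantum counterpart of the collision-based $\ell_2$ estimator of \cref{lemma:biasalgorithm}. To produce one copy of $\sigma^{(b)}_M$ at Alice's location, Alice locally prepares $\Psi_{LR}$ and $\ket{b}_Q$, applies her CDQS channel to obtain $M_0$, sends $R$ to Bob ($n_R$ qubits), has Bob apply his CDQS channel using $y$, and receives $M_1$ back ($n_{M_1}$ qubits); this costs $O(n_R+n_{M_1})=O(\overline{\CDQS}(f))$ communication per copy and uses no pre-shared entanglement. Following the three-random-bit, two-query structure of \cref{lemma:biasalgorithm}, Alice prepares two such copies with the appropriate labels, runs a SWAP test, and outputs a bit that equals $1$ with probability $\tfrac12+\Vert\sigma^{(0)}_M-\sigma^{(1)}_M\Vert_2^2/8$. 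By the separation above this probability is exactly $\tfrac12$ when $f=0$ and at least $\tfrac12+(2-2\epsilon)^2/(8d_M)$ when $f=1$.

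Since the raw estimator sits exactly at $\tfrac12$ on $0$-inputs, the last step is to symmetrize: \textsc{and}-ing the output with an independent coin of bias $s=1/(1+c)$, where $c=(2-2\epsilon)^2/(8d_M)$, shifts the two cases to lie symmetrically below and above $\tfrac12$, so the protocol outputs $f(x,y)$ with bias $\beta=\Theta(c)=\Theta(2^{-(n_{M_0}+n_{M_1})})$. The total $\QPP^{\cc}$ cost is the communication $O(\overline{\CDQS}(f))$ plus $\log(1/\beta)=O(n_{M_0}+n_{M_1})=O(\overline{\CDQS}(f))$, hence $\QPP^{\cc}(f)=O(\overline{\CDQS}(f))$, and $\PP^{\cc}(f)=\Theta(\QPP^{\cc}(f))$ finishes the proof. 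The main obstacle to watch is precisely this exponentially small bias: the $\ell_1\to\ell_2$ passage is unavoidable for a SWAP-test estimator, and the argument only survives because the $\log(1/\beta)$ penalty in the $\PP^{\cc}$ cost is itself $O(n_{M_0}+n_{M_1})$ and so is absorbed into $\overline{\CDQS}(f)$ --- which is also why the bound must count message size (the overline) rather than bound $\pp\CDQS$ alone.
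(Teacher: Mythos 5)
Your proposal is correct, and it shares the paper's skeleton --- reduce a perfectly private CDQS to a $\QPP^{\cc}$ protocol in which Alice locally prepares the resource state (so its size is charged to communication, which is exactly why the bound is on $\pp\overline{\CDQS}$ rather than $\pp\CDQS$), certify an $\ell_2$-distance gap between the $f=0$ and $f=1$ message states using privacy and correctness, symmetrize the one-sided bias, absorb the exponentially small bias into the $\log(1/\beta)$ term, and finish with $\PP^{\cc}(f)=\Theta(\QPP^{\cc}(f))$ --- but your implementation of the distinguishing subroutine diverges from the paper's in two genuine ways. First, the paper feeds in half of a maximally entangled state and distinguishes $\rho_{\bar{Q}M}$ from $\rho_{\bar{Q}}\otimes\rho_M$, extracting the separation $\lVert\cdot\rVert_1\geq \frac{3}{2}-2\epsilon$ from correctness; you feed in the two orthogonal basis states and distinguish $\sigma^{(0)}_M$ from $\sigma^{(1)}_M$, which gives the slightly stronger and cleaner separation $2-2\epsilon$ (valid, since the diamond norm controls unentangled inputs and the trace norm is monotone under the decoder) with no reference system at all. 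Second, and more substantively, the paper converts to classical samples by measuring in a Haar-random basis and then invokes \cref{lemma:biasalgorithm}, paying an extra factor $1/(d_{\bar{Q}M}+1)$ in the bias coming from the Haar average $\int dU\,\lVert D_0(U)-D_1(U)\rVert_2^2 = \frac{1}{d_{\bar{Q}M}+1}\lVert\rho_{\bar{Q}M}-\rho_{\bar{Q}}\otimes\rho_M\rVert_2^2$; your SWAP test estimates $\tr\bigl(\sigma^{(b_1)}_M\sigma^{(b_2)}_M\bigr)$ directly, so your only dimension loss is the unavoidable $\ell_1\to\ell_2$ factor $1/d_M$. Both losses are exponential in $n_M$ and are absorbed identically, so the final bounds coincide; your route needs two state preparations rather than the paper's four and skips the Haar-integral computation. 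Two cosmetic remarks: with the natural three-random-bit post-processing, the SWAP test's acceptance probability $\frac{1}{2}+\frac{1}{2}\tr(\rho\sigma)$ yields output bias $\lVert\sigma^{(0)}_M-\sigma^{(1)}_M\rVert_2^2/16$ rather than $/8$ (the SWAP test reports collisions at half weight), and your AND-with-a-biased-coin symmetrization with $\Pr[w=1]=1/(1+c)$ indeed produces symmetric margins of $c/(2(1+c))$ on both sides, playing the role of the paper's output-$0$-with-probability-$s$ mixing --- both constants are immaterial to the $\Omega(\cdot)$ statement.
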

\begin{proof}\, We begin with a reduction of a CDQS protocol (here assumed perfectly private) to the appropriate communication complexity scenario. 
In the CDQS protocol, we let Alice and Bob's outputs be called $M_0$ and $M_1$ respectively, and denote $M \equiv M_0 M_1$. 
To define the one-way protocol, we have Alice prepare the entangled resource state used in the CDQS protocol, then send this to Bob. 
Alice and Bob then apply the first round operations defined by the CDQS taking the secret $Q$ to be a single qubit maximally entangled with a reference system $\bar{Q}$. 
Bob sends his output $M_1$ to Alice. 
We repeat this four times so that Alice obtains $\rho_{\bar{Q}M}^{\otimes 4}$. 
Alice then takes the third and fourth copies, traces out $\bar{Q}$, and replaces it with the maximally mixed state. 
Since also we always have $\rho_{\bar{Q}}=\mathcal{I}/d_{\bar{Q}}$, this prepares $\rho_{\bar{Q}} \otimes \rho_M$, so Alice holds two copies of $\rho_{\bar{Q}M}$ and two copies of $\rho_{\bar{Q}} \otimes \rho_M$. 

Recall that $\rho_{\bar{Q}M}$ and $\rho_{\bar{Q}} \otimes \rho_M$ will be identical in $f(x, y) = 0$ instances due to perfect privacy, whereas they will be different in $f(x, y) = 1$ instances due to correctness (assuming $\epsilon < 2$  so that the correctness criterion is not trivial). 
This means Alice's task is to distinguish between $\rho_{\bar{Q}M}$ and $\rho_{\bar{Q}} \otimes \rho_M$.

Alice's strategy is as follows. 
She applies a Haar random unitary $U_{\bar{Q}M}$ to each state on $\bar{Q}M$ (with the unitary fixed for all four states), then performs a measurement in the computational basis. 
This produces two samples each from distributions that we call $D_0(U)$ and $D_1(U)$, where $D_0(U)$ are measurement outcomes using the state $\rho_{\bar{Q}M}$ and $D_1(U)$ are using the state $\rho_{\bar{Q}}\otimes \rho_M$. 
There are sufficiently many samples to run the algorithm of \cref{lemma:biasalgorithm} above, which produces a binary variable $z$ equal to 1 with probability conditioned on $U$
\begin{align}
    p_1(U) = \frac{1}{2}+ \frac{1}{8} \lVert D_0(U) - D_1 (U) \rVert^2_2 \,.
\end{align}
A short calculation demonstrates that
\begin{equation}
    \int dU \lVert D_{0}(U) - D_{1}(U) \rVert_{2}^{2} = \frac{1}{d_{\bar{Q}M}+1} \lVert \rho_{\bar{Q}M} - \rho_{\bar{Q}} \otimes \rho_{M} \rVert_{2}^{2} \,.
\end{equation}
Consequently, the total probability of this procedure yielding $z=1$ is exactly
\begin{equation} \label{eq:prob1}
    p_{1} = \int dU p_{1}(U) = \frac{1}{2} + \frac{1}{8(d_{\bar{Q} M} + 1)} \lVert \rho_{\bar{Q}M} - \rho_{\bar{Q}} \otimes \rho_{M} \rVert_{2}^{2} \,.
\end{equation}

If Alice's final output was the value of the variable $z$, then the fact that $\rho_{\bar{Q}M}$ and $\rho_{\bar{Q}} \otimes \rho_{M}$ are equal if and only if $f(x, y) = 0$, along with \cref{eq:prob1}, implies that Alice's output will be correctly biased on $1$ instances, but unbiased on $0$ instances. 
To correct this asymmetry Alice should at the beginning of the protocol output $0$ with some small probability $s$ and otherwise perform the procedure described above. 
Doing so, we obtain
\begin{itemize}
    \item For $f(x, y) = 0$: $p_{0} = \frac{1+s}{2}$. 
    \item For $f(x, y) = 1$: $p_{1} = (1 - s) \left( \frac{1}{2} + \frac{1}{8(d_{\bar{Q} M} + 1)} \lVert \rho_{\bar{Q}M} - \rho_{\bar{Q}} \otimes \rho_{M} \rVert_{2}^{2} \right)$.
\end{itemize}
These will have the correct bias provided we take 
\begin{equation}
    0 < s < \frac{\lVert \rho_{\bar{Q}M} - \rho_{\bar{Q}} \otimes \rho_{M} \rVert_{2}^{2}}{4 (d_{\bar{Q} M} + 1) + \lVert \rho_{\bar{Q}M} - \rho_{\bar{Q}} \otimes \rho_{M} \rVert_{2}^{2} } \, ;
\end{equation}
the quantity on the right is monotonically increasing with $\lVert \rho_{\bar{Q}M} - \rho_{\bar{Q}} \otimes \rho_{M} \rVert_{2}^{2}$, so we would like to determine a lower bound on this quantity for $f(x, y)=1$ instances. This is furnished by correctness, which gives (we define $\pi_{Q \bar{Q}} = \mathcal{I}_{Q\bar{Q}}/d_{\bar{Q}Q}$)
\begin{align}
    2 \epsilon & \geq \lVert (\mathbfcal{D}^{x,y}_{M\rightarrow Q}\circ \mathbfcal{N}^{x,y}_{Q\rightarrow M})(\Psi_{Q \bar{Q}}^{+}) - \Psi_{Q \bar{Q}}^{+} \rVert_{1} + \lVert (\mathbfcal{D}^{x,y}_{M\rightarrow Q}\circ \mathbfcal{N}^{x,y}_{Q\rightarrow M})(\pi_{Q \bar{Q}}) - \pi_{Q \bar{Q}} \rVert_{1} \\
    & \geq \lVert \Psi_{Q \bar{Q}}^{+} - \pi_{Q \bar{Q}} \rVert_{1} - \lVert (\mathbfcal{D}^{x,y}_{M\rightarrow Q}\circ \mathbfcal{N}^{x,y}_{Q\rightarrow M}) (\Psi_{Q \bar{Q}}^{+} - \pi_{Q \bar{Q}} )\rVert_{1} \\
    & \geq \lVert \Psi_{Q \bar{Q}}^{+} - \pi_{Q \bar{Q}} \rVert_{1} - \lVert \mathbfcal{N}^{x,y}_{Q\rightarrow M} (\Psi_{Q \bar{Q}}^{+} - \pi_{Q \bar{Q}} )\rVert_{1} \\
    & = \frac{3}{2} - \lVert \rho_{\bar{Q}M} - \rho_{\bar{Q}} \otimes \rho_{M} \rVert_{1}
    \: ,
\end{align}
where $\Psi_{Q \bar{Q}}^{+}$ denotes the maximally entangled state on $Q \bar{Q}$. 
Here we have used the triangle inequality and the observation that the norm is non-increasing under the decoding channel. Recalling that an operator $A$ on a dimension $d$ Hilbert space satisfies (for example by H{\"o}lder's inequality)
\begin{equation}
    \lVert A \rVert_{1} \leq \sqrt{d} \lVert A \rVert_{2} \: ,
\end{equation}
we obtain
\begin{equation}
    \lVert \rho_{\bar{Q}M} - \rho_{\bar{Q}} \otimes \rho_{M} \rVert_{2}^{2} \geq \frac{1}{d_{\bar{Q}M}} \lVert \rho_{\bar{Q}M} - \rho_{\bar{Q}} \otimes \rho_{M} \rVert_{1}^{2} \geq \frac{1}{d_{\bar{Q} M}} \left( \frac{3}{2} - 2 \epsilon \right)^{2} \: .
\end{equation}
We therefore conclude that we may take $s$ satisfying
\begin{equation}
    0 < s < s_{0} \: , \qquad s_{0} \equiv \frac{\left( \frac{3}{2} - 2 \epsilon \right)^{2}}{4 d_{\bar{Q}M} (d_{\bar{Q} M} + 1) + \left( \frac{3}{2} - 2 \epsilon \right)^{2} } \: .
\end{equation}
Concretely we choose $s=s_0/2$. 

It remains to compute the resulting cost of this $\QPP^{\cc}$ protocol. 
The communication cost is $\pp\overline{\CDQS}(f)$ qubits to establish the needed shared entanglement, plus $\pp\CDQS(f)$ qubits for Bob to communicate his output system to Alice. 
We then need to add the logarithm of the inverse bias, which here is
\begin{align}
    s=O(1/d_M^2) \,,
\end{align}
so this gives an additional $2n_M = 2 \pp\CDQS(f)$ cost. 
The resulting bound is then as written in \cref{eq:PPlowerbound}. 
Note that we used that $\PP^{\cc}(f)=\Theta(\QPP^{\cc})$. 
\end{proof}

The same construction as in the proof above, now assuming Alice and Bob begin with the entangled resource state of the CDQS protocol, leads to the bound
\begin{align}
    \pp\CDQS(f)=\Omega(\QPP^{*,\cc}(f)) \,.
\end{align}
It would be interesting to better understand the relationship between $\QPP^{*,\cc}$ and $\PP^{\cc}$. 

\subsection{Lower bounds from quantum interactive proofs}\label{sec:QIPccbounds}

We first review the classical definition of an interactive proof in the communication complexity scenario. 

\begin{definition}[Reproduced from \cite{applebaum2021placing}]\label{def:IPcc}
 An $\IP^{\cc}$ protocol for a Boolean function $f:X\times Y\rightarrow \{0,1\}$ with $k'$ rounds proceeds as follows. 
Each round begins with a two-party protocol between Alice and Bob after which both parties send a message to Merlin, who sends back a message that is visible to both Alice and Bob. 
At the end of all $k'$ rounds, Alice and Bob interact again and generate an output. We say that the protocol accepts if both outputs equal $1$. 
The protocol is said to compute $f$ with completeness error of $\epsilon$ and soundness error of $\delta$ if it satisfies the following properties:
\begin{itemize}
    \item \emph{Completeness:} For all inputs $(x,y)$ with $f(x,y)=1$, there exists a proof strategy for Merlin such that $(x,y)$ is accepted with probability at least $1-\epsilon$. 
    \item \emph{Soundness:} For all inputs $(x,y)$ with $f(x,y)=0$, for any proof strategy for Merlin, $(x,y)$ is accepted with probability at most $\delta$. 
\end{itemize}
The cost of an $\IP^{\cc}$ protocol is the maximum over all inputs of the total communication complexity of the protocol. 
We also refer to a $k'$ round protocol as a $k=2k'$ message protocol.
The $\IP^{\cc}[k]$ complexity of $f$, denoted $\IP^{\cc}[k](f)$ is the smallest cost of a $k$-message $\IP^{\cc}$ protocol computing $f$ with soundness and completeness error of $\epsilon=\delta=1/3$. 
\end{definition}

The quantum definition requires only slight modifications. 

\begin{definition}\label{def:QIPcc}
\textbf{$\QIP^{\cc}$:} As in $\IP^{\cc}$, but now allowing quantum messages. Also, we have Merlin send his response (which can be quantum) to Alice only. 
We let $\QIP^{\cc}[k](f)$ denote the minimal quantum communication cost of a $k$ 
message $\QIP^{\cc}$ protocol for $f$. 
\end{definition}

With this definition in hand, we show that a good CDQS protocol leads to a good one round (two message) quantum interactive proof protocol. 

\begin{lemma}\label{lemma:CDQStoQIPcc}
    Suppose there is a CDQS protocol for $f$ using $t$ qubits of communication, $\rho$ qubits of shared entanglement, which hides $\ell$ bit secrets, and which is $\epsilon$ correct and $\epsilon$ secure.
    Then there is a two message quantum interactive proof protocol for $f$ which uses $t+\ell+1$ qubits of communication, $\rho$ qubits of entanglement and has completeness error of $\epsilon$ and soundness error of $\epsilon + 2^{-\ell}$. 
\end{lemma}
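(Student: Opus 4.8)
The plan is to let the prover Merlin play the role of the CDQS referee/decoder, so that he can recover the hidden secret (and thereby pass a consistency check with Alice) precisely when $f(x,y)=1$. Concretely, I would have Alice and Bob share the $\rho$-qubit CDQS resource state $\Psi_{LR}$ as the entanglement of the interactive proof. Alice prepares a maximally entangled state $\Psi^+_{\bar{Q}Q}$ on an $\ell$-qubit secret register $Q$ and a reference $\bar{Q}$ that she keeps, and then Alice and Bob apply their CDQS channels $\mathbfcal{N}^x_{QL\to M_0}$ and $\mathbfcal{N}^y_{R\to M_1}$ without communicating. In the first message both parties forward $M_0$ and $M_1$ (together $t$ qubits) to Merlin. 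Merlin's honest strategy is to apply the CDQS decoder $\mathbfcal{D}^{x,y}_{M\to Q}$ and return the recovered register $Q'$ ($\ell$ qubits) to Alice. Finally Alice measures $\bar{Q}Q'$ in the computational basis and outputs $1$ iff the two outcomes agree; she sends this one-bit verdict to Bob (the extra $+1$), who echoes it, so that both outputs coincide. This is a two-message protocol using $t+\ell+1$ qubits of communication and $\rho$ qubits of entanglement, as required.

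For completeness I would use $\epsilon$-correctness. When $f(x,y)=1$, the honest Merlin's output state on $\bar{Q}Q'$ is $\mathbfcal{D}^{x,y}_{M\to Q}\circ\mathbfcal{N}^{x,y}_{Q\to M}(\Psi^+_{\bar{Q}Q})$, which by \cref{def:CDQS} is within $\epsilon$ in trace distance of $\Psi^+_{\bar{Q}Q}$. Since the ``outcomes agree'' event is the projector $E=\sum_a \ket{aa}\!\bra{aa}$ with $\mathrm{Tr}(E\,\Psi^+_{\bar{Q}Q})=1$ and $\Vert E\Vert_\infty=1$, bounding $|\mathrm{Tr}(E(\sigma-\Psi^+))|\le \Vert\sigma-\Psi^+\Vert_1$ shows the acceptance probability is at least $1-\epsilon$, giving completeness error $\epsilon$.

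The crux is soundness, where I would invoke $\epsilon$-security together with the fact that Merlin only ever touches $M$. When $f(x,y)=0$ there is a simulator with $\Vert\mathbfcal{S}^{x,y}_{\varnothing\to M}\circ\tr_Q-\mathbfcal{N}^{x,y}_{Q\to M}\Vert_\diamond\le\epsilon$, so the state $\rho_{\bar{Q}M}$ that Merlin receives is within $\epsilon$ in trace distance of a product $\tfrac{\mathcal{I}_{\bar{Q}}}{2^\ell}\otimes\sigma_M$ that is completely decoupled from $\bar{Q}$. Because trace distance is nonincreasing under Merlin's (arbitrary) channel $M\to Q'$, the joint state on $\bar{Q}Q'$ is within $\epsilon$ of a product state $\tfrac{\mathcal{I}_{\bar{Q}}}{2^\ell}\otimes\xi_{Q'}$ for some $\xi$. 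A direct computation gives $\mathrm{Tr}\!\big(E\,(\tfrac{\mathcal{I}_{\bar{Q}}}{2^\ell}\otimes\xi)\big)=\tfrac{1}{2^\ell}\mathrm{Tr}(\xi)=2^{-\ell}$ for every $\xi$, so no Merlin strategy makes Alice accept with probability exceeding $2^{-\ell}+\epsilon$, yielding soundness error $\epsilon+2^{-\ell}$.

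The main obstacle is exactly this soundness step: one must argue that the adversarial Merlin gains nothing from entanglement or clever processing of $M$, and this hinges on the observation that security decouples $\bar{Q}$ from the \emph{entire} message register before Merlin acts, so the reference stays maximally mixed and statistically independent of anything he can return. Everything else is routine, namely converting the diamond-norm statements of \cref{def:CDQS} into trace-distance bounds on the relevant states and bounding measurement probabilities by $\Vert E\Vert_\infty$ times trace distance.
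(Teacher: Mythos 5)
Your proof is correct and achieves exactly the stated parameters, but it instantiates the challenge differently from the paper. The shared skeleton is identical: Merlin plays the CDQS referee, $\epsilon$-correctness gives completeness, and $\epsilon$-security plus a $2^{-\ell}$ guessing term gives soundness. The paper, however, uses a \emph{classical} challenge: Alice and Bob share a uniformly random string $z\in\{0,1\}^\ell$, the secret register is prepared as $\ket{z}_Q$, Merlin returns a string $z'$, and Alice accepts iff $z=z'$; soundness is the explicit average $2^{-\ell}\sum_z \bra{z}\sigma_Q(z)\ket{z}\leq 2^{-\ell}+\epsilon$. You instead keep the secret coherent: Alice locally prepares $\Psi^+_{\bar QQ}$, retains the reference $\bar Q$, and runs a computational-basis coincidence test $E=\sum_a \ketbra{aa}{aa}$ on $\bar Q$ and Merlin's quantum reply. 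Your version buys two things: no shared randomness between Alice and Bob is needed (the challenge is generated locally), and both bounds follow uniformly from the diamond-norm conditions of \cref{def:CDQS} applied to a single entangled input, via H\"older's bound $\lvert\tr(E(\rho-\sigma))\rvert\leq \Vert E\Vert_\infty\Vert\rho-\sigma\Vert_1$ together with monotonicity of trace distance under Merlin's channel — and indeed your soundness computation $\tr\bigl(E\,(\tfrac{\mathcal{I}_{\bar Q}}{2^\ell}\otimes\xi_{Q'})\bigr)=2^{-\ell}$ is exactly right, with the decoupling of $\bar Q$ from the entire message register licensed by the security condition before Merlin acts. The paper's classical challenge has its own advantages: Merlin's response is classical, the $2^{-\ell}$ term is transparently a guessing probability, and the same construction plugs directly into the zero-knowledge variant (\cref{lemma:CDQStoHVSZKcc}), where the simulator replaces Merlin's response with a copy of the classical $z$ — a substitution that is cleanest when the challenge is a classical string. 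Both routes yield communication $t+\ell+1$, entanglement $\rho$, completeness error $\epsilon$, and soundness error $\epsilon+2^{-\ell}$, so your argument stands as a valid, slightly more self-contained alternative.
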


\begin{proof}\,
    Our proof closely follows the classical case \cite{applebaum2021placing}.
    Alice and Bob carry out the following $\QIP^{\cc}[2]$ protocol. 
    They share the same entangled state and execute the same first round operations as in the given CDQS protocol. 
    Additionally, they share $\ell$ bits of randomness in a string labelled $z$. 
    System $Q$ is prepared in the state $\ket{z}_Q$. 
    Alice and Bob then send their output systems to the referee, who sends back a string $z'$ to Alice. 
    Alice checks if $z=z'$, accepts if so, and sends Bob a single bit indicating that he should accept as well.

    First, consider why this is $\epsilon$ correct. 
    When $f(x,y)=1$, correctness implies that
    \begin{align}
        \Vert\mathbfcal{D}^{x,y}_{M\rightarrow Q}\circ \mathbfcal{N}^{x,y}_{Q\rightarrow M} - \mathbfcal{I}_{Q\rightarrow Q}\Vert_\diamond \leq \epsilon \,.
    \end{align}
    Inserting the input state $\ket{z}$, we get that the referee produces an output $\sigma_Q$ with $\Vert\sigma_Q-\ketbra{z}{z}_Q \Vert_1\leq \epsilon$, which by the Fuchs--van de Graaf inequality implies
    \begin{align}
        F(\sigma_Q, \ketbra{z}{z}_Q) = \bra{z}\sigma_Q\ket{z} \geq 1-\epsilon \,,
    \end{align}
    so the referee can correctly determine $z$ with probability at least $1-\epsilon$.
    When the referee returns $z$ Alice and Bob will accept, so we have $\epsilon$ correctness. 

    Next consider why this is $\epsilon+2^{-\ell}$ sound. 
    When $f(x,y)=0$, the security definition for CDQS ensures that Merlin's output is $\epsilon$ close to a state which is independent of $z$, 
    \begin{align}
        \Vert\sigma^0_Q - \sigma_Q(z)\Vert_1 \leq \epsilon \,.
    \end{align}
    Alice and Bob accept only if Merlin returns $z$, so they accept with probability
    \begin{align}
        p=\frac{1}{2^\ell} \sum_z \bra{z}\sigma_Q(z)\ket{z} \leq \frac{1}{2^\ell} \sum_z \bra{z}\sigma_Q^0\ket{z} + \epsilon = 2^{-\ell} + \epsilon \,,
    \end{align}
    as needed. 

    Notice that the communication cost is $t+\ell+1$ because Alice and Bob send the same messages they would have in the CDQS, which requires $t$ qubits,  Merlin sends back $\ell$ bits, and then Alice communicates to Bob whether or not to accept, costing an additional bit. 
    The entanglement is unchanged from the CDQS protocol. 
\end{proof}

Next, we want to lower bound the CDQS cost in terms of the $\QIP^{\cc}[2]$ cost. 
Recalling the definition of $\QIP^{\cc}[2]$, notice that we need to ensure we have correctness and soundness errors of at most $1/3$. 
If $\epsilon+2^{-\ell}>1/3$ or $\epsilon>1/3$ in our CDQS, the above lemma does not immediately lead to a $\QIP^{\cc}[2]$ protocol. 
We can resolve this however by first applying the amplification result of \cref{thm:amplification} to the CDQS, then applying \cref{lemma:CDQStoQIPcc}. 

\begin{theorem}\label{thm:QIPlowerbound}
The $\CDQS$ cost of a function $f$ is lower bounded asymptotically by the $\QIP^{\cc}[2]$ cost,
\begin{align}
    \CDQS(f) =\Omega( \QIP^{\cc}[2](f)) \,.
\end{align}
\end{theorem}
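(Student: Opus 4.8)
The plan is to combine the amplification property of CDQS with \cref{lemma:CDQStoQIPcc}, which already converts a CDQS protocol into a two-message quantum interactive proof. Starting from an optimal CDQS protocol for $f$ with single-qubit secrets, communication cost $\CDQS(f)$, and correctness/privacy errors $\epsilon = \delta = 0.09$, one would like to feed it directly into \cref{lemma:CDQStoQIPcc}. The only quantity that does not immediately meet the requirements of \cref{def:QIPcc} is the soundness error: the lemma yields soundness error $\epsilon + 2^{-\ell}$, and with a single-qubit secret ($\ell = 1$) this is $0.09 + 1/2 > 1/3$, which is too large to count as a valid $\QIP^{\cc}[2]$ protocol.

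To fix this, I would first amplify. Applying the CDQS amplification theorem following \cref{thm:fRamplification} with a parameter $k$ produces a CDQS protocol $G_Q$ for $f$ with $k$-qubit secrets, correctness and privacy errors $2^{-\Omega(k)}$, communication cost $O(k \cdot \CDQS(f))$, and entanglement cost $O(k E)$. I would then choose $k$ to be a sufficiently large \emph{constant} — large enough that $2^{-\Omega(k)} + 2^{-k} < 1/3$.

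Next, apply \cref{lemma:CDQStoQIPcc} to $G_Q$. Since $G_Q$ has $\ell = k$-qubit secrets and errors $\epsilon' = 2^{-\Omega(k)}$, the resulting two-message quantum interactive proof has completeness error $\epsilon' = 2^{-\Omega(k)} < 1/3$ and soundness error $\epsilon' + 2^{-\ell} = 2^{-\Omega(k)} + 2^{-k} < 1/3$, so it satisfies \cref{def:QIPcc}. Its communication cost is $t + \ell + 1$ with $t = O(k \cdot \CDQS(f))$ and $\ell = k$, which for the fixed constant $k$ is $O(\CDQS(f))$. Hence $\QIP^{\cc}[2](f) = O(\CDQS(f))$, equivalently $\CDQS(f) = \Omega(\QIP^{\cc}[2](f))$, as claimed.

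The main obstacle — and really the only nontrivial point — is the soundness threshold, since the $2^{-\ell}$ term in \cref{lemma:CDQStoQIPcc} comes from Merlin's ability to blindly guess the $\ell$-bit secret and therefore cannot be removed without enlarging the secret. Amplification is precisely the right tool, because it simultaneously drives the secret length up (shrinking $2^{-\ell}$) and the correctness/privacy errors down (shrinking $\epsilon'$), at the cost of only a constant multiplicative blowup in communication once $k$ is fixed. The one point to verify carefully is that the constant $k$ can be chosen independently of $f$ and $n$; this holds because both error bounds are absolute functions of $k$ alone, so the constant is absorbed into the $\Omega(\cdot)$ and the asymptotic statement is unaffected.
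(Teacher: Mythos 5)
Your proposal is correct and matches the paper's proof essentially verbatim: both amplify the CDQS (via the quantum analogue of \cref{thm:amplification}) to drive the secret length up and the errors down until the soundness error $\epsilon' + 2^{-\ell}$ of \cref{lemma:CDQStoQIPcc} drops below $1/3$, then invoke that lemma, absorbing the constant-factor communication blowup into the $\Omega(\cdot)$. Your added observation that the amplification constants are universal (independent of $f$ and $n$) is a fair point of care that the paper leaves implicit.
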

\begin{proof}\,
Given a CDQS with constant correctness and privacy error $\epsilon$ below a certain threshold, i.e. the threshold appearing in the definition of $\text{CDQS}(f)$, \cref{thm:amplification} allows us to reduce these to errors of order $\epsilon'=O(2^{-\ell})$ and increase the secret length to length $\ell$, while inducing overheads in entanglement and communication by a factor of $\ell$. 
Then for $\ell$ large enough $\epsilon'+2^{-\ell}<1/3$, so that \cref{lemma:CDQStoQIPcc} gives a valid $\QIP^{\cc}[2]$ protocol. 
\end{proof}

\subsection*{Zero knowledge quantum interactive proofs}

Our $\QIP^{\cc}[2]$ protocol has the interesting property that it is zero-knowledge in a particular sense --- each verifier doesn't learn anything about the others input. 
In this section we formalize this property by defining zero-knowledge quantum interactive proofs in the communication complexity setting and give a proof that the CDQS cost of a function $f$ is lower bounded by the cost of a zero-knowledge quantum interactive proof. 

\begin{definition}[$\HVQSZK^{\cc}$] An honest verifier quantum statistical zero-knowledge interactive proof is defined as follows. 
We let $\Pi$ be a $\QIP^{\cc}$ protocol for a Boolean function $f:\{0,1\}^n\times \{0,1\}^n\rightarrow \{0,1\}$. 
For inputs $(x,y)\in f^{-1}(1)$, let $\rho^{k'}_{AB}$ be the density matrix of the state held by Alice and Bob at the end of the $k'$-th round of $\Pi$. 
We consider simulator protocols $\Pi_S$ involving two parties $S_A$ holding $x$ and $S_B$ holding $y$, and allowing quantum communication between $S_A$ and $S_B$.
We divide the $\Pi_S$ protocol into $k'$ rounds, each of which can involve an arbitrary number of messages between $S_A$ and $S_B$. 
We say $\Pi_S$ is a $\delta$ simulation of $\Pi$ if after the $k'$-th round the simulators $S_A$ and $S_B$ hold a density matrix $\sigma^{k'}$ with $\Vert\rho^{k'}_{AB}-\sigma^{k'}_{AB}\Vert_1 \leq \delta$. 
We say $(\Pi, \Pi_S)$ is a $\HVQSZK^{\cc}$ protocol if $\delta<1/p(n)$ for any function $p(n)$ which is at least polylogarithmic in $n$.\footnote{We can understand this requirement as a communication complexity analogue of a function being negligible in the complexity setting.} 

To define the cost of a $\HVQSZK^{\cc}$ protocol, we first specify the notation:
\begin{itemize}
    \item Let $c_M$ be the bits sent between Alice and Merlin plus the qubits sent between Bob and Merlin in the protocol $\Pi$.
    \item Let $c_V$ be the qubits sent between Alice and Bob in the protocol $\Pi$.
    \item Let $c_S$ be the qubits sent between $S_A$ and $S_B$ in the protocol $\Pi_S$.
\end{itemize} 
Then we define 
\begin{align}
    \HVQSZK^{\cc}(f) = \min_{(\Pi,\Pi_S)}( c_M+\max\{c_V,c_S\} ) \,.
\end{align}
We similarly define the cost of a $k'$ round, $k=2k'$ message, protocol and denote it by $\HVQSZK^{\cc}[k](f)$.
\end{definition}

\begin{lemma}\label{lemma:CDQStoHVSZKcc}
The communication complexity of a $\CDQS$ protocol for function $f$ is lower bounded by the one round $\HVSZK^{\cc}$ cost according to
    \begin{align}
        \CDQS(f) \geq \Omega\left( \frac{\HVSZK^{\cc}[2](f)}{\log n}\right) \,.
    \end{align}
\end{lemma}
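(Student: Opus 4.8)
The plan is to compose the CDQS-to-$\QIP^{\cc}[2]$ reduction of Cref{lemma:CDQStoQIPcc} with a zero-knowledge argument, showing that the resulting $\QIP^{\cc}[2]$ protocol is not only sound and complete but also admits an efficient simulator, so that it qualifies as an $\HVQSZK^{\cc}[2]$ protocol. First I would start from a CDQS protocol for $f$ of communication cost $t = \CDQS(f)$, apply the amplification result of Cref{thm:amplification} to drive the correctness and privacy errors down to $O(2^{-\ell})$ at the cost of a factor-$\ell$ overhead, and then run the construction of Cref{lemma:CDQStoQIPcc} to obtain a $\QIP^{\cc}[2]$ protocol with completeness and soundness errors below $1/3$. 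The communication of this protocol is $O(\ell t)$, with the verifier-verifier communication $c_V$ being a single bit (Alice's accept/reject message to Bob) and the verifier-Merlin communication $c_M$ being $O(\ell t)$.

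The crucial new ingredient is to exhibit a simulator protocol $\Pi_S$ establishing the zero-knowledge property. The key observation is that in the $\QIP^{\cc}[2]$ protocol derived from CDQS, the only communication \emph{between} the two verifiers Alice and Bob (as opposed to between verifiers and Merlin) is the final accept bit. At the end of the round, the joint state $\rho^{1}_{AB}$ that Alice and Bob hold consists of Alice's record of whether $z=z'$ together with Bob's trivial state; because on $f^{-1}(1)$ inputs correctness guarantees Alice recovers $z'=z$ with probability $1-O(2^{-\ell})$, this joint state is $O(2^{-\ell})$-close to the fixed product state in which both parties simply hold ``accept.'' The simulator $\Pi_S$ therefore needs essentially no communication: $S_A$ and $S_B$ each locally prepare the ``accept'' flag, reproducing $\rho^{1}_{AB}$ up to trace distance $O(2^{-\ell})$. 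Choosing $\ell$ polylogarithmic in $n$ makes this error $1/p(n)$-small for any polylogarithmic $p$, so the pair $(\Pi, \Pi_S)$ is a valid $\HVQSZK^{\cc}[2]$ protocol with $c_S = O(1)$.

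It remains to tally the cost and read off the bound. With $c_M = O(\ell t)$, $c_V = O(1)$, and $c_S = O(1)$, the $\HVSZK^{\cc}$ cost satisfies
\begin{align}
    \HVSZK^{\cc}[2](f) \leq c_M + \max\{c_V, c_S\} = O(\ell\, \CDQS(f)) \,.
\end{align}
Taking $\ell = \Theta(\log n)$, which is the scale needed for the amplified errors to satisfy the $\QIP^{\cc}[2]$ threshold and for the simulator error to beat any polylogarithmic bound, yields
\begin{align}
    \HVSZK^{\cc}[2](f) = O(\CDQS(f) \log n) \,,
\end{align}
which rearranges to the claimed $\CDQS(f) = \Omega(\HVSZK^{\cc}[2](f)/\log n)$. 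The step I expect to be the main obstacle is verifying that the simulator genuinely captures the correct joint verifier state at the end of the round under the $\HVQSZK^{\cc}$ definition's round-by-round closeness requirement: one must check that the zero-knowledge condition concerns only the verifiers' joint state $\rho^{k'}_{AB}$ (which is nearly trivial here) and not the message exchanged with Merlin, and that the single bit of verifier-verifier communication does not leak cross-input information in a way the simulator fails to reproduce.
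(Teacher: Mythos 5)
Your overall architecture---amplify to $\ell=\Theta(\log n)$ via \cref{thm:amplification}, run the construction of \cref{lemma:CDQStoQIPcc}, exhibit a simulator, and tally $c_M+\max\{c_V,c_S\}=O(\ell\,\CDQS(f))$---matches the paper's proof. The gap is in your simulator. You claim that the joint verifier state $\rho^{1}_{AB}$ at the end of the round ``consists of Alice's record of whether $z=z'$ together with Bob's trivial state,'' so that $S_A$ and $S_B$ can simply output accept flags with $c_S=O(1)$. This misreads the definition: $\rho^{k'}_{AB}$ is the \emph{full} state held by Alice and Bob at the end of the round (and per \cref{def:IPcc} the accept/reject interaction happens only after all rounds, so at the end of round one Alice has not even computed the accept bit). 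At that point Alice holds the shared secret $z$, Merlin's response $z'$, and her residual registers from running the CDQS channel (e.g.\ leftover parts of the entangled resource state), and Bob holds $z$ and his residual registers; the true state is $\rho=\frac{1}{2^{\ell}}\sum_{z}\rho_{AB}(z)\otimes\ketbra{z'(z)}{z'(z)}$. A simulator that prepares only accept flags does not reproduce this state, and it fails exactly where the zero-knowledge content lies: the point of the lemma is that Merlin's message $z'$ and the verifiers' residual state can be simulated. Indeed, if one were permitted to define the verifiers' state by discarding everything except the accept bit, \emph{every} $\QIP^{\cc}[2]$ protocol would be trivially zero-knowledge and the lemma would add nothing to \cref{thm:QIPlowerbound}.

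The fix---and the paper's actual argument---is to let $S_A$ and $S_B$ run Alice's and Bob's honest operations locally (so the shared $z$ and all residual registers are reproduced exactly), trace out the messages that would have gone to Merlin, and have $S_A$ substitute a fresh copy of $z$ in place of Merlin's response. The real and simulated states then differ only in that last register: $\rho=\frac{1}{2^{\ell}}\sum_{z}\rho_{AB}(z)\otimes\ketbra{z'(z)}{z'(z)}$ versus $\sigma=\frac{1}{2^{\ell}}\sum_{z}\rho_{AB}(z)\otimes\ketbra{z}{z}$, and correctness of the amplified CDQS together with the Fuchs--van de Graaf inequality gives $\Vert\rho-\sigma\Vert_1\leq 2\sqrt{1-\Pr[z=z']}=O(n^{-\alpha/2})$. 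Note the square root: the simulation error is $O(n^{-\alpha/2})$ rather than your claimed $O(2^{-\ell})=O(n^{-\alpha})$, though this still beats any polylogarithmic $p(n)$. With this corrected simulator (which uses no communication, $c_S=0$) your cost accounting goes through unchanged and yields the stated bound.
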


\begin{proof}\,
    We begin with a CDQS protocol with correctness and privacy errors $\epsilon \leq 0.09$, and which uses $t$ qubits of communication. 
    Apply the amplification result of \cref{thm:amplification} with $\ell=\alpha \log(n)$. 
    Then the resulting protocol hides $\Theta(\log(n))$ bit secrets, has correctness and privacy errors of $\epsilon'=\Theta(1/n^{\alpha})$, and has a communication cost of order $\Theta(t \log(n))$.

    From this CDQS protocol, we need to develop a $\QIP^{\cc}$ protocol $\Pi$ as well as a simulator protocol $\Pi_S$. 
    For the protocol $\Pi$, we use exactly the same construction as in \cref{lemma:CDQStoQIPcc}. 
    In particular, we have Alice and Bob prepare a $\ell = \alpha \log n$ bit random secret $z$, apply the same operations as in the CDQS, then send the resulting message system to Merlin. 
    Merlin responds with his guess $z'$ of the secret, Alice accepts herself and uses one extra bit to signal to Bob to accept if $z=z'$. 
    Recall that this has correctness error $\epsilon'$ and soundness error $\epsilon'+2^{-\ell}=O(1/n^\alpha)$, so for $n$ large enough this defines a valid $\QIP^{\cc}$ protocol (which requires correctness and soundness errors of less than 1/3).
    
    To define the simulator $\Pi_S$, we have $S_A$ carry out Alice's actions in the $\QIP^{\cc}$ protocol but now we trace out the message Alice sends to Merlin, and prepare a copy of $z$ in place of Merlin's response.
    $S_B$ carries out Bob's actions in the $\QIP^{\cc}$ protocol but now traces out the message Bob would have sent to Merlin. 
    Next, we claim that $\Pi_S$ is a $\delta=O(1/n^{\alpha/2})$ simulator of $\Pi$, and hence $(\Pi, \Pi_S)$ is a valid $\HVQSZK^{\cc}$ protocol.
    To check this notice that since we have a one-round protocol we only need to check the density matrices on Alice and Bob's systems is close to that produced in the $\QIP^{\cc}$ protocol after the first round. 
    Immediately after Merlin's response, the density matrix in the protocol $\Pi$ is $\rho=\frac{1}{2^{\ell}}\sum_{z}\rho_{AB}(z) \otimes \ketbra{z'(z)}{z'(z)}$, while in $\Pi_S$ it is $\sigma=\frac{1}{2^{\ell}}\sum_{z} \rho_{AB}(z) \otimes \ketbra{z}{z}$. 
    The trace distance is then
    \begin{align}
        \Vert\frac{1}{2^{\ell}}\sum_{z}\rho_{AB}(z) \otimes \ketbra{z'(z)}{z'(z)} -  \frac{1}{2^{\ell}}\sum_{z} \rho_{AB}(z) \otimes \ketbra{z}{z}\Vert_1 &\leq 2 \sqrt{1-F(\rho,\sigma)} \nonumber \\
        &= 2\sqrt{1-Pr[z=z']} \nonumber \\
        &\leq O(1/n^{\alpha/2}) \,.
    \end{align}
    as needed. 

    The simulator and real protocols both use $O(t \ell)=O(t\log(n))$ bits of communication to Merlin and no communication to one another, which leads to the stated lower bound. 
\end{proof}

A comment is that in the classical setting, a lower bound on CDS from a measure of the complexity of $\HVSZK$ that counts the randomness plus communication cost is proven.
This proof makes use of randomness sparsification, which we have no analogue for in the quantum setting. 
Because of this, we are so far limited to the above result which bounds CDQS in terms of the communication cost of a $\HVQSZK^{\cc}$ protocol.

\section{Discussion}\label{sec:discussion}

In this work, we studied the conditional disclosure of secrets primitive in the quantum setting.
Following the classical literature, we have proven closure and amplification results, and a number of lower bounds from communication complexity. 
With a single exception, we obtain close analogues of every known classical result in the quantum setting. 

The exceptional case is a result from \cite{applebaum2021placing} giving a lower bound on robust CDS from ${\AM}^{\cc}$. 
There is a quantum analogue of the relevant classical class (${\QAM}^{\cc}$), but the results used to derive this bound classically don't have a quantum analogue. 
In particular, the proof in \cite{applebaum2021placing} uses (a communication complexity analogue of) the fact that $\IP[k]\subseteq {\AM}[2]$ for all $k$, while the quantum analogue of this is not known to be true.
Thus while it is possible there is a lower bound on robust CDQS from ${\QAM}^{\cc}$, we do not see any clear route towards a proof given this discrepancy from the classical case. 

Perhaps the key distinction between classical and quantum CDS is the apparent lack of any connection between entanglement and communication cost in quantum CDS, compared to the upper bound on randomness from communication that exists for classical CDS. 
We leave as an open question whether such an ``entanglement sparsification'' lemma exists for quantum CDS. 
There are some reasons to not expect one however. 
For instance, there is no analogous statement for quantum communication complexity, even while there is one for classical communication complexity. 
Further, no such connection between entanglement and communication is known for the broader setting of non-local computation. 

Another basic question we leave open is a possible separation between classical and quantum CDS: does entanglement and/or quantum communication ever offer an advantage for CDS over randomness and classical communication?\footnote{After the initial appearance of this article, the subject of quantum-classical separations was taken up in \cite{girish2025comparing}.}

Perhaps the central direction that remains to be better understood is to further characterize the entanglement and communication cost of CDQS for general functions. 
In particular, none of the existing lower bound techniques can do better than establishing linear lower bounds, while the best upper bounds are $2^{O(\sqrt{n\log n})}$. 
While good explicit lower bounds are likely largely out of reach (since they imply circuit lower bounds), implicit lower bounds stated in terms of properties of $f(x,y)$ seem to be a viable target. 
For instance, we see no obvious obstruction to lower bounding $\CDQS(f)$ in terms of a function of the circuit complexity of $f(x,y)$. 
Doing so requires moving away from reductions to communication complexity settings however, and we do not know of any relevant techniques. 

\vspace{0.2cm}
\noindent \textbf{Acknowledgements:} This work was initiated at a retreat held by the Perimeter Institute for Theoretical Physics. 
We thank Alex Meiberg and Michael Vasmer for helpful discussions. 
Research at the Perimeter Institute is supported by the Government of Canada through the Department of Innovation, Science and Industry Canada and by the Province of Ontario through the Ministry of Colleges and Universities.

\appendix

\section{Clifford CDQS}\label{sec:CliffordCDQS}

We can consider a simplified setting where the resource system shared in a CDQS protocol is a stabilizer state, and Alice and Bob's operations are all Clifford. 
We refer to such protocols as \emph{Clifford $\CDQS$ protocols}.
In this appendix we show that Clifford CDS protocols sometimes need more communication and entanglement than general CDQS protocols. 
Concretely we prove the following lower bound. 

\begin{theorem}
    When restricting to Clifford $\CDQS$ protocols, we have
    \begin{align}
        \overline{\CDQS}(f) = \Omega(\sqrt{Q^*_{A\rightarrow B}}) \,.
    \end{align}
\end{theorem}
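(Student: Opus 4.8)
The plan is to adapt the reduction behind \cref{thm:oneway}, replacing its generic state-tomography step with efficient stabilizer-state tomography, which becomes available precisely because the protocol is Clifford. First I would build a one-way communication protocol from Alice to Bob. Using the free shared-entanglement allowance of $Q^*_{A\rightarrow B}$, let Alice and Bob pre-share $k$ independent copies of the (input-independent) CDQS resource state $\Psi_{LR}$. On each copy Alice prepares a maximally entangled pair $\Psi^+_{\bar Q Q}$, feeds $Q$ and her share into her CDQS channel to produce $M_0$, and sends $M_0$ together with the reference $\bar Q$ to Bob. Bob applies his CDQS channel to his share to produce $M_1$, so that he ends up holding $k$ i.i.d. copies of $\rho_{\bar Q M}$ with $M=M_0M_1$.

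Next I would import the dichotomy already established in the proofs of \cref{thm:oneway} and \cref{thm:PPlowerbound}: by $\delta$-security, if $f(x,y)=0$ then $\rho_{\bar Q M}$ is within $\delta$ of the product $\rho_{\bar Q}\otimes\rho_M$, while by $\epsilon$-correctness, if $f(x,y)=1$ then $\rho_{\bar Q M}$ is a constant trace-distance away from every product $\sigma_{\bar Q}\otimes\sigma_M$. With $\epsilon=\delta=0.09$ the two cases are separated by a constant gap, so Bob only needs to decide, from his copies, whether $\rho_{\bar Q M}$ is close to or far from a product across the $\bar Q\mid M$ cut.

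The key observation is that, since all operations are Clifford and $\Psi_{LR},\Psi^+_{\bar Q Q}$ are stabilizer states, the combined channel $\mathbfcal{N}^{x,y}$ is a Clifford channel and $\rho_{\bar Q M}$ is a (mixed) stabilizer state on $N=1+n_{M_0}+n_{M_1}=O(\CDQS(f))$ qubits. A mixed stabilizer state is determined by its stabilizer group, so the product/non-product decision can be made using only $O(N)$ copies via stabilizer-state tomography, rather than the $O(d^2)=O(2^{2N})$ copies that generic tomography (\cref{thm:tomography}) would demand. Here the rigidity of stabilizer entanglement also helps: correlation across the $\bar Q\mid M$ cut is quantized, so the approximate-decoupling statement for $f(x,y)=0$ can essentially only be met by a genuinely decoupled stabilizer state, sharpening the decision.

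Finally I would tally the cost. Bob's decision uses $k=O(N)$ copies; on each copy Alice transmits $M_0$ and $\bar Q$, i.e. $n_{M_0}+1=O(\CDQS(f))$ qubits, while the resource states are pre-shared for free. Hence the total one-way communication is $O(N)\cdot O(\CDQS(f))=O(\CDQS(f)^2)$, giving $Q^*_{A\rightarrow B}(f)=O(\CDQS(f)^2)\leq O(\overline{\CDQS}(f)^2)$ and therefore $\overline{\CDQS}(f)=\Omega(\sqrt{Q^*_{A\rightarrow B}(f)})$. The main obstacle I anticipate is the tomography step: I must invoke a stabilizer-state learning procedure that (i) applies to the mixed stabilizer states produced by Clifford channels, (ii) uses only $O(N)$ copies so the final bound is $\sqrt{Q^*}$ rather than a weaker polynomial, and (iii) is robust enough to separate the constant close-to-product and far-from-product cases. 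Verifying that the quantization of stabilizer correlations makes this robust decision possible is the crux of the argument.
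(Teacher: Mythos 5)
Your reduction has the same skeleton as the paper's proof of this theorem (which itself follows \cref{thm:oneway}): run $k$ parallel copies, send everything one way, exploit the stabilizer structure to learn the state from only linearly many copies, then decide close-to-product versus far-from-product using the security/correctness dichotomy. But the step you yourself flag as the crux --- a robust, $O(N)$-copy learning procedure for the \emph{mixed} stabilizer states $\rho_{\bar Q M}$ --- is a genuine gap, and the paper closes it by a move your proposal misses: Alice and Bob apply \emph{isometric extensions} of their Clifford operations, and Alice forwards all of her output registers (message plus environment) to Bob. Bob then holds $k$ copies of a \emph{pure} stabilizer state $\ket{\Psi}_{QMR}$, where $R$ purifies the message systems, and can invoke the known result \cite{montanaro2017learning} that a pure stabilizer state on $n$ qubits can be learned \emph{exactly} from $O(n)$ copies with exponentially small failure probability; taking $k = n_Q + n_M + n_R$ suffices. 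Once Bob has an exact classical description, deciding whether $\rho_{QM}$ is close to or far from product is a purely classical computation --- no robustness or ``rigidity'' of stabilizer correlations is needed. Indeed, your rigidity patch does not hold up as stated: $\delta$-closeness of $\rho_{\bar Q M}$ to product with $\delta = 0.09$ only bounds the mutual information by roughly $O(\delta N)$ via entropic continuity, which for large $N$ does not force it below one bit, so an approximately decoupled stabilizer state need not be exactly decoupled.

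Two further accounting points. First, purification is not free: Alice must transmit her environment registers, whose size is controlled by \cref{remark:dilationsize} and by $n_R \leq n_M + n_L$, so the entanglement size $n_L$ enters the communication cost. This is precisely why the paper's bound is on $\overline{\CDQS}(f) \geq n_M + n_L$ (via $Q^*_{A\rightarrow B}(f) \leq k^2 = (2 n_M + n_L + n_Q)^2$) rather than on $\CDQS(f)$ alone. Your tally, which pre-shares the resource states for free and charges only $n_{M_0}+1$ qubits per copy, would yield the stronger claim $\CDQS(f) = \Omega(\sqrt{Q^*_{A\rightarrow B}})$ --- but that stronger claim rests entirely on the unsupplied mixed-stabilizer learning step, so as written it is not established. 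Second, your direction of reduction (Alice to Bob, with Bob performing the tomography) correctly matches the $Q^*_{A\rightarrow B}$ in the statement, which is the opposite of the $Q^*_{B\rightarrow A}$ reduction in \cref{thm:oneway}; the paper's appendix proof makes the same choice. With the isometric-extension fix substituted for your mixed-state tomography step, your argument becomes the paper's.
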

\begin{proof}\, (Sketch)
The proof follows the proof of \cref{thm:oneway}.
We suppose there is a Clifford CDQS for the function $f$ with entanglement plus communication cost $\overline{\CDQS}(f)$. 
To carry out the one-way quantum communication protocol, Alice and Bob share $k$ copies of the same entangled $\ket{\Psi}_{LR}$ state as they share in the CDQS. 
Upon receiving their inputs, Alice and Bob carry out isometric extensions of the first round operations they would have performed in the CDQS, and perform them on each of the $k$ copies. 
Alice then sends her output systems to Bob. 
Bob now holds a pure state $\ket{\Psi}_{QMR}^{\otimes k}$ where $M$ is the message system of the CDQS and $R$ is a purifying system. 
Bob performs tomography to learn $\ket{\Psi}_{QMR}$.  
Taking $k=n_Q+n_M+n_R$, Bob can learn $\ket{\Psi}_{QMR}$ exactly, but may fail with exponentially small probability \cite{montanaro2017learning}. 
Given that Bob learns $\ket{\Psi}_{QMR}$, he can use his classical description of the state to compute if $\rho_{QM}$ is close to product or far from product. 
If the state is close to product, he outputs $0$ as his guess of $f(x,y)$, while if the state is far from product he outputs $1$ as his guess of $f(x,y)$. 
This succeeds with probability exponentially close to $1$. 

The cost of this communication complexity protocol is $k$ times the size of $\ket{\Psi}_{QMR}$. 
Taking $Q$ to consist of $O(1)$ qubits, recalling that we took $k=n_Q+n_M+n_R$, and that $n_R\leq n_M+n_L$, we obtain that $Q^*_{A\rightarrow B}(f)\leq k^2 = (2n_M+n_L + n_Q)^2$, which is the needed lower bound on $\overline{\CDQS}(f) \geq n_M + n_L$. 
\end{proof}

\bibliographystyle{unsrtnat}
\bibliography{biblio}

\end{document}